\documentclass[acmsmall,screen]{acmart}
\AtBeginDocument{%
  }

\setcopyright{acmlicensed}
\copyrightyear{2018}
\acmYear{2018}
\acmDOI{XXXXXXX.XXXXXXX}
\acmConference[Conference acronym 'XX]{Make sure to enter the correct
  conference title from your rights confirmation email}{June 03--05,
  2018}{Woodstock, NY}
\acmISBN{978-1-4503-XXXX-X/2018/06}

\usepackage{booktabs}   
\usepackage{graphicx}
\usepackage{subcaption} 
\usepackage{caption} 

\usepackage{hyperref}
\usepackage{bussproofs}
\usepackage{float}
\usepackage{tikz}
\usepackage{xspace}
\usepackage{xcolor}
\usepackage{pgfplots}
\pgfplotsset{compat=1.18}
\usepackage{geometry}

\usetikzlibrary{automata, positioning, arrows,calc, positioning, shapes, arrows.meta}
\usepackage{amsthm}
\usepackage[longend,ruled,linesnumbered]{algorithm2e}
\newtheorem{definition}{Definition}
\newtheorem{problem}{Problem}
\newtheorem{corollary}{Corollary}

\usepackage{amsmath}

\usepackage{amssymb}
\usepackage{subcaption}

\newcommand{\zak}[1]{{\color{red}Zak: #1}}

\newcommand{\kar}[2]{\ensuremath{KA\vDash~#1= #2}}
\newcommand{\karle}[2]{\ensuremath{KA\vDash~#1\le #2}}
\newcommand{\kae}[2]{\ensuremath{KA+C\vDash~#1= #2}}
\newcommand{\ka}[4]{\ensuremath{#1+#2\vDash~#3= #4}}
\newcommand{\kale}[4]{\ensuremath{#1+#2\vDash~#3\le #4}}
\newcommand{\kc}{\ensuremath{ KA+C }}
\newcommand{\kcs}{\ensuremath{ KA^*+C }}

\newcommand{\lc}{\ensuremath{L_C}}

\newcommand{\defeq}{\triangleq}
\newcommand{\Rat}[1]{\textsf{Rat}(#1)}

\begin{document}

\title[Kleene Algebra with Transitive Commutativity Conditions]{Kleene Algebra with Transitive Commutativity Conditions}         


\author{Han Xu}
\email{hx3501@princeton.edu}          
\orcid{0000-0002-2548-6866}

\affiliation{
  \institution{Princeton University}            
  \country{United States}                    
}
\author{Chenyu Zhou}
\email{czhou691@usc.edu}          
\orcid{0009-0006-8493-6886}
\affiliation{
  \institution{University of Southern California}            
  \country{United States}                    
}

\author{David Walker}
\email{dpw@princeton.edu}          
\orcid{0000-0003-3681-149X}

\affiliation{
  \institution{Princeton University}            
  \country{United States}                    
}

\author{Zachary Kincaid}
\email{zkincaid@cs.princeton.edu}          
\orcid{0000-0002-7294-9165}

\affiliation{
  \institution{Princeton University}            
  \country{United States}                    
}

\begin{abstract}

Kleene algebra ($KA$) provides a foundational algebraic framework for reasoning about program structure and control flow. To capture equivalences arising from reordering or independence of actions, \citet{Kozen96KAT} purposed that $KA$ can be extended with \emph{commutativity conditions}, that is, equations of the form
\(
\{\,ab=ba \mid (a,b)\in C\,\},
\)
where $C$ is a binary relation on constant symbols. This paper studies the following question: for which relations $C$ is the equational theory of $KA+C$ decidable?

Early related work~\cite{regtrans82,Ibarra78} showed that regular languages modulo commutativity conditions $C$ are decidable if and only if $C$ is transitive. For Kleene algebra $KA$ and commutativity conditions $C$, however, the situation is substantially more difficult. Only very recently, \citet{kuzuetsov23kac} showed that the equational theory of Kleene algebra $KA+C$ is undecidable under certain specific commutativity conditions, settling the first nontrivial cases more than 25 years after the corresponding problem for $KA^*+C$ was resolved by \citet{Kozen96KAT}. Nevertheless, the decidability problem of $KA+C$ remained open.

In this work, we resolve this question completely by showing that the equational theory of $KA+C$ is decidable if and only if $C$ is transitive. Moreover, we strengthen the result in both directions. On the negative side, we show that when $C$ is not transitive, the universality problem for $KA+C$ is already undecidable. On the positive side, we show that for transitive $C$, the equational theories of $KA^*+C$ and $KA+C$ coincide. \end{abstract}
\begin{CCSXML}
<ccs2012>
<concept>
<concept_id>10011007.10011006.10011008</concept_id>
<concept_desc>Software and its engineering~General programming languages</concept_desc>
<concept_significance>500</concept_significance>
</concept>
<concept>
<concept_id>10003456.10003457.10003521.10003525</concept_id>
<concept_desc>Social and professional topics~History of programming languages</concept_desc>
<concept_significance>300</concept_significance>
</concept>
</ccs2012>
\end{CCSXML}

\ccsdesc[500]{Software and its engineering~General programming languages}

\keywords{Kleene Algebra; Decision Procedure}  

\maketitle
\section{Introduction}

Kleene algebra ($KA$) provides an algebraic foundation for reasoning about the control structure of programs. Its operators—addition, multiplication, and Kleene star—correspond naturally to nondeterministic choice, sequential composition, and iteration. This correspondence makes $KA$ a powerful framework for expressing and verifying program equivalences through algebraic manipulation, and it has found applications in program verification, compiler optimization, network analysis and the study of regular languages~\cite{kozen1994ka,conway1971regular,kozen1997kat,netkat}.

However, the axioms of standard Kleene algebra, as well as those of its $*$-continuous variant $KA^*$, capture only \emph{path equivalence}, that is, equivalence of program traces viewed as regular languages. When reasoning about programs with partially independent or commuting actions—such as concurrent statements, database transactions, or reordering optimizations—this notion of equivalence is often too restrictive. To model such behaviors, one must extend the algebra with additional axioms.

One such extension is \emph{Kleene algebra with commutativity conditions} (\kc)~\cite{Kozen96KAT}, which augments the algebra with a specification of which pairs of atomic actions may commute. Formally, a commutativity condition $C$ asserts that for certain atomic actions $a$ and $b$, the equation $ab=ba$ holds, expressing that these two actions may be executed in either order without changing the overall program behavior. This extension allows the algebra to reason about partially independent computations and to capture reordering transformations that preserve program equivalence. The decidability of \kcs\ was fully characterized after \citet{kozen02comp} observed that \kcs\ has the same equational theory as regular languages with commutativity conditions, which are known to be decidable if and only if $C$ is transitive~\cite{Ibarra78,regtrans82}.

For \kc, however, the situation is substantially more difficult. Only very recently did \citet{kuzuetsov23kac} and \citet{azevedo25kac} independently show that the full theory of \kc\ is \emph{undecidable} when certain partial commutativity relations are allowed. Still, this leaves the central question unresolved: \emph{which fragments of Kleene algebra with commutativity conditions remain decidable?}

Two extreme cases are already well understood. When all actions commute, \kc\ becomes \emph{commutative Kleene algebra} ($CKA$), whose equational theory is decidable and coincides both with the theory of semilinear sets~\cite{pilling1970algebra} and with that of \(CKA^*\) (Theorem~\ref{thm:cka-completeness}). At the other extreme, when no commutativity constraints are present, \kc\ is the ordinary $KA$, whose equational theory is decidable and coincides exactly with both regular-language equivalence and the equational theory of $KA^*$~\cite{kozen1994ka}. These two endpoints led us to conjecture that $KA+C$ is decidable if and only if $C$ is transitive.

In this paper, we strengthen this conjecture in two directions. On the positive side, when $C$ is transitive, we prove that the equational theories of \kc\ and \kcs\ coincide, thereby generalizing the classical results for $KA$ and $CKA$. As a consequence, decidability of \kc\ follows from that of \kcs, namely, from the decidability of regular languages under transitive commutativity conditions.

On the negative side, we prove that universality—that is, whether an expression $e$ is equivalent to $\Sigma^*$—for $KA+C$ is already undecidable in the \emph{minimal} non-transitive setting, namely when the commutativity conditions satisfy $(a,b)\in C$ and $(b,c)\in C$, but $(a,c)\notin C$. This strictly improves previous constructions, which required a non-transitive commutativity relation over at least a four-letter alphabet, with
\[
(a,c),(a,d),(b,c),(b,d)\in C
\qquad\text{but}\qquad
(a,b),(c,d)\notin C,
\]
and established only undecidability of equivalence, that is, whether two expressions $e_1$ and $e_2$ are equal~\cite{azevedo25kac,kuzuetsov23kac}.

Taken together, our results completely settle the decidability of $KA+C$ by showing that \emph{transitivity is exactly the decidability boundary} for Kleene algebras with commutativity conditions. We strengthen both sides of this characterization: on the positive side, decidability follows from the coincidence of the equational theories of $KA+C$ and $KA^*+C$ when $C$ is transitive; on the negative side, undecidability is strengthened from equivalence to universality when $C$ is not transitive.
\section{Preliminaries}\label{sec:background}

In this section, we recall the basic definitions of Kleene Algebra and introduce the formal framework for reasoning about commutativity conditions.  
We then define the decision problem studied in this paper, namely, the equivalence of Kleene Algebra expressions under an equivalence relation generated by a given set of commuting pairs.  

\subsection{Kleene Algebra}
We begin by recalling the definition of a semiring.

\begin{definition}[Semiring and Idempotent Semiring]
A \emph{semiring} is a structure
\[
\mathcal{S} = (K, +, \cdot, 0, 1)
\]
such that:
\begin{enumerate}
  \item $(K,+,0)$ is a commutative monoid, i.e., for all $a,b,c\in K$,
  \[
    a+b=b+a,\qquad (a+b)+c=a+(b+c),\qquad 0+a=a;
  \]
  \item $(K,\cdot,1)$ is a monoid:
  \[
    (a\cdot b)\cdot c=a\cdot(b\cdot c),\qquad
    1\cdot a=a\cdot1=a;
  \]
  \item Multiplication distributes over addition, and $0$ is absorbing:
  \[
    a\cdot(b+c)=a\cdot b+a\cdot c,\qquad
    (a+b)\cdot c=a\cdot c+b\cdot c,\qquad
    a\cdot0=0\cdot a=0.
  \]
\end{enumerate}

The semiring is called \emph{idempotent} if addition is idempotent:
\[
a+a=a \qquad\text{for all } a\in K.
\]
In this case, $K$ carries a natural partial order defined by
\[
a \le b \;\Longleftrightarrow\; a+b=b.
\]
\end{definition}

\begin{definition}[Kleene Algebra]
A \emph{Kleene algebra} (KA) is an idempotent semiring
\[
(K,+,\cdot,0,1)
\]
equipped with a unary operation $(\,\cdot\,)^* : K \to K$ satisfying the
\emph{star axioms}:
\[
1 + a a^* \le a^*, \qquad
1 + a^* a \le a^*,
\]
\[
\text{if } a b \le b \text{ then } a^* b \le b, \qquad
\text{if } b a \le b \text{ then } b a^* \le b,
\]
for all $a,b\in K$.

A Kleene algebra is called \emph{\(*\)-continuous} if for all $a,b,c\in K$,
\[
a\, b^*\, c
\;=\;
\sum_{n\ge0} a\, b^n\, c,
\]
where the infinite sum denotes the supremum with respect to the natural order
$\le$.
\end{definition}

The standard example of a (\(*\)-continuous) Kleene algebra is the algebra of regular languages over a finite alphabet.  This example is generalized by algebras of rational subsets of a monoid (defined below).  Rational subsets of a finitely-generated free monoid are precisely regular languages; rational subsets of (partially) commutative monoids serve as models of Kleene algebra with commutativity conditions.

\begin{definition}[Kleene Algebra \Rat{M}]
Let \((M,\cdot_M,1_M)\) be a monoid.
The powerset \(2^{M}\) forms a \(*\)-continuous Kleene algebra under the
operations
\[
0 \;\triangleq\; \emptyset,
\qquad
1 \;\triangleq\; \{1_M\},
\qquad
X + Y \;\triangleq\; X \cup Y,
\]
\[
X \cdot Y \;\triangleq\; \{\,x \cdot_M y \mid x \in X,\; y \in Y\,\},
\qquad
X^{*} \;\triangleq\; \bigcup_{n \ge 0} X^{n},
\]
where \(X^{0} = \{1_M\}\) and \(X^{n+1} = X \cdot X^{n}\).

The \(*\)-continuous Kleene algebra \Rat{M} is defined as the smallest
subalgebra of \(2^{M}\) containing all singletons \(\{\{m\} \mid m \in M\}\).
It is called the Kleene algebra of \emph{rational subsets} of \(M\).
\end{definition}

\subsection{Commutativity Conditions}
Next we proceed to the definition of commutativity conditions.

\begin{definition}[Commutativity Conditions]

    Let \(\Sigma\) be a finite set of constant symbols.  
A \emph{Commutativity Condition} on \(\Sigma\) is a reflexive and symmetric binary relation
\[
C \;\subseteq\; \Sigma \times \Sigma
\]
such that for \((a,b)\in C\) we interpret \(a\) and \(b\) as \emph{commuting}, i.e.,
\[
a \cdot b = b \cdot a.
\]
In the following, we omit elements of a commutativity condition whose existence is implied by reflexivity and symmetry.

A commutativity condition \(C\) on \(\Sigma\) is said to be \emph{transitive} if for all \(a,b,c \in \Sigma\),
\[
(a, b)\in C \;\wedge\; (b, c)\in C \;\Rightarrow\; (a, c)\in C .
\]
Equivalently, if \(a\) commutes with both \(b\) and \(c\), then \(b\) and \(c\) must also commute.

\end{definition}

\begin{definition}[Word Equivalence under Commutativity Conditions]
Let \(C \subseteq \Sigma \times \Sigma\) be a commutativity condition.
We use $\equiv_C$ to denote the smallest congruence relation on $\Sigma^*$ that contains $C$.  That is, for words $w_1,w_2 \in \Sigma^*$, 
we have $w_1 \equiv_C w_2$ iff \(w_1\) can be transformed into \(w_2\)
by a finite sequence of swaps of adjacent letters \(ab \mapsto ba\) with
\((a,b) \in C\).
\end{definition}

When \(C\) is transitive, it induces a partition of \(\Sigma\) into classes of mutually commuting symbols.
However, when transitivity fails, the commutativity conditions can no longer be represented as such a partition.  
Lack of transitivity leads to \emph{undecidability} of equivalence in the corresponding *-continuous Kleene Algebra  as we will show later in Section \ref{sec:undec}.

With commutativity conditions in place, we can now define the  theories
of \kc\ and \kcs.

\begin{definition}[Interpretations and Theories]
Fix an alphabet \(\Sigma\) of constant symbols, and let \(T_\Sigma\) denote the
set of all regular expressions over \(\Sigma\).

An \emph{interpretation} over a Kleene algebra \(\mathcal{K}\) is a function
\[
I \colon \Sigma \to \mathcal{K},
\]
which extends to a function 
\[
I \colon T_\Sigma \to \mathcal{K},
\]
in a homomorphic way.

For expressions \(e_1,e_2 \in T_\Sigma\), we write
\[
\mathcal{K}, I \vDash e_1 = e_2
\]
if \(I(e_1)=I(e_2)\).

Let $C \subseteq \Sigma \times \Sigma$ be a commutativity condition.  We write
\[
\ka{KA}{C}{e_1}{e_2}
\]
(respectively,
\(\ka{KA^{*}}{C}{e_1}{e_2}\))
if for every Kleene algebra \(\mathcal{K}\)
(respectively, every \(*\)-continuous Kleene algebra \(\mathcal{K}\))
and every interpretation $I$ over \(\mathcal{K}\),
if $\mathcal{K},I \vDash ab = ba$ for all $(a,b) \in C$, then $\mathcal{K},I \vDash e_1 = e_2$.

We use $L_C(\cdot)$ to denote the interpretation over $\Rat{\Sigma^*/\equiv_C}$ that maps each \(a \in \Sigma\) to the singleton \(\{[a]\}\), where \([a]\) denotes the equivalence class of \(a\) in the quotient
monoid \(\Sigma^{*}/\equiv_C\).  Using $1$ to denote the identity relation on $\Sigma$, observe that $L_1(e)$ is the usual interpretation of $e$ as a regular language. In the subsequent sections, we use $L(e)$ to denote the interpretation $L_1(e)$.
\end{definition}

The language \(L_C(\cdot)\) will appear frequently in what follows. Intuitively, \(L_C(e)\) is the quotient of the language \(L(e)\) of an expression \(e\) under the equivalence relation \(\equiv_C\). Thus, when \(L_C(e_1)=L_C(e_2)\), it means that for every word \(w_1\in L(e_1)\), there exists a word \(w_2\in L(e_2)\) such that \(w_1\equiv_C w_2\), and conversely, for every word \(w_2\in L(e_2)\), there exists a word \(w_1\in L(e_1)\) such that \(w_1\equiv_C w_2\). In other words, the two languages consist of the same set of words up to \(C\)-equivalence.

\subsection{Parikh Images, Semi-linear Sets, and Commutative Kleene Algebra}

It is a classical result that the Parikh image of any regular language is a
semi-linear set, and that such images can be computed effectively
\cite{parikh1966context,badban2010semilinear,to2010parikh}.

\begin{definition}[Semi-linear Set]
A set $S \subseteq \mathbb{N}^k$ is \emph{semi-linear} if it is a finite union of
linear sets.  That is, $S$ is semi-linear if there exist vectors
$\mathbf{b}_i, \mathbf{p}_{ij} \in \mathbb{N}^k$ such that
\[
S
  = \bigcup_{i=1}^n
      \left\{
        \mathbf{b}_i + \sum_{j=1}^{m_i} n_j \mathbf{p}_{ij}
        \;\middle|\;
        n_j \in \mathbb{N}
      \right\}.
\]
\end{definition}

\paragraph{Parikh Image.}
Let $\Sigma = \{a_1,\dots,a_k\}$ be a finite alphabet.
The \emph{Parikh vector} of a word $w\in\Sigma^*$ is the vector
$\Psi(w) \in \mathbb{N}^k$ where $\Psi(w)_i$ is the number of occurrences of
$a_i$ in $w$.  
For a language $L \subseteq \Sigma^*$, its Parikh image is
\[
P(L) = \{\, \Psi(w) \mid w \in L \,\} \subseteq \mathbb{N}^k.
\]

By Parikh's theorem~\cite{parikh1966context}, for every regular language $L$
the set $P(L)$ is semi-linear.

\paragraph{Commutative Kleene Algebra.}
A \emph{commutative} Kleene algebra is a Kleene algebra satisfying the axiom
\[
\forall p,q,~ p \cdot q = q \cdot p.
\]
We use CKA to denote the first-order theory of commutative Kleene algebras.
Prior work has~\cite{pilling1970algebra,Brunet19note} proved that, under this axiom, the equational theory
of Kleene algebra collapses precisely to equality of Parikh images:

\begin{theorem}\label{thm:cka-completeness}[Parikh Image over Commutative Kleene Algebra \cite[Lemma 4.11]{pilling1970algebra}]
For all expressions $e_1,e_2$,
\[
CKA~\vDash{e_1}={e_2}
\quad\Longleftrightarrow\quad
P(L(e_1)) = P(L(e_2)).
\]
\end{theorem}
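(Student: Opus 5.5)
Both directions of Theorem~\ref{thm:cka-completeness} will be handled separately, with the left-to-right (soundness) direction reduced to a single model and the right-to-left (completeness) direction carried out by induction on expressions.

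\textbf{Soundness.} Consider the commutative $*$-continuous Kleene algebra $\Rat{\mathbb{N}^k}$, where $\mathbb{N}^k$ is the free commutative monoid on $\Sigma$. Interpret each $a_i$ as the singleton containing the $i$-th unit vector. By induction on $e$, the interpretation of $e$ is exactly $P(L(e))$: the Parikh map $\Psi$ is a monoid homomorphism, so it commutes with union, concatenation and iteration of sets. Since $\Rat{\mathbb{N}^k}$ satisfies $pq=qp$, the hypothesis $CKA\vDash e_1=e_2$ gives $P(L(e_1))=P(L(e_2))$.

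\textbf{Completeness.} The plan is to assign to every expression $e$ a \emph{normal form} $N(e)$ such that $CKA\vDash e=N(e)$ and $N(e)$ depends only on the semilinear set $P(L(e))$. Following Pilling, the normal form is a finite sum of terms $\mathbf{b}\,(\mathbf{p}_1+\dots+\mathbf{p}_m)^*$, where vectors are read as commutative monomials. Once every expression is provably equal to such a sum, it remains to show that two sums with the same denotation are provably equal. By symmetry it suffices to show each summand of one is provably below the other.

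The normal form is built by structural induction, using the following CKA derivations.
\begin{enumerate}
\item Sums of normal forms are immediate.
\item Products follow from distributivity and the commutative law $x^*y^*=(x+y)^*$, which is derivable in CKA via the star axioms and commutativity.
\item Star uses the identities $(x+y)^*=x^*y^*$ and $(b\,p^*)^* = 1 + b\,(b+p)^*$. The second is derivable in CKA using commutativity and the star induction rules; it turns the star of a normal form back into a normal form.
\end{enumerate}

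\textbf{Main obstacle.} The hardest step is deriving semilinear set inclusions purely from the axioms. It requires showing that if the linear set $\mathbf{b}+\langle \mathbf{p}_1,\dots,\mathbf{p}_m\rangle$ is contained in a semilinear set $S$, then the corresponding term is provably below the normal form of $S$. Here I would first reduce, using Eilenberg--Schützenberger style decompositions, to unambiguous (simple) linear sets whose period vectors are linearly independent. Then I would argue by induction on the number of periods, combining the pumping structure with the star induction rule $ab\le b\Rightarrow a^*b\le b$: one exhibits a finite candidate $b'$ covering the base translates and checks that multiplication by each period preserves $\le$ of the target. This combinatorial covering argument over $\mathbb{N}^k$ is where I expect most of the work, and I would cite Pilling's Lemma~4.11 for its final form.
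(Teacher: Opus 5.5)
The paper gives no argument of its own for this statement; it imports it from Pilling's Lemma~4.11, with Brunet's note as a modern account.

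\textbf{What is correct.} Your soundness direction is correct, and it is the same model argument the paper uses for Theorem~\ref{thm:cka-coin}. Your reduction to sums of terms $\mathbf{b}(\mathbf{p}_1+\dots+\mathbf{p}_m)^*$ is also sound: both $x^*y^*=(x+y)^*$ and $(bp^*)^*=1+b(b+p)^*$ are derivable from star induction plus commutativity. One slip: your remark that $N(e)$ ``depends only on $P(L(e))$'' is not what a structural construction gives. You then rightly fall back on comparing two normal forms with the same denotation, so this is harmless.

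\textbf{The gap.} All of the theorem's content sits in the inclusion lemma: if $\mathbf{b}+\langle P\rangle\subseteq S$, then $CKA\vDash\mathbf{b}P^*\le N(S)$. There you give no argument. You cite the same Lemma~4.11 the statement is attributed to, and the strategy you sketch does not work as stated, for two reasons.

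(i) Eilenberg--Sch\"utzenberger is a fact about subsets of $\mathbb{N}^k$. To use it you need $CKA\vDash\mathbf{b}P^*=\sum_k\mathbf{b}_kP_k^*$, which is itself an instance of the completeness you are proving. The passage to linearly independent periods can in fact be derived. From a relation $\sum_{i\in I}\lambda_i\mathbf{p}_i=\sum_{j\in J}\mu_j\mathbf{p}_j$ with $I\cap J=\emptyset$, star induction gives $\prod_k\mathbf{p}_k^*\le\sum_{i\in I}\sum_{c<\lambda_i}\mathbf{p}_i^{c}\prod_{k\neq i}\mathbf{p}_k^*$. But you must supply this. Unambiguity (disjointness) is not needed, and you give no route to derive it.

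(ii) The star-induction step cannot use $N(S)$ itself as the invariant, because $S$ need not be closed under the periods of the included linear set. For example, $x^*\le x^*+y$ holds, yet $x(x^*+y)\not\le x^*+y$. You need an intermediate $T$ with $\mathbf{b}\le T\le N(S)$ and $\mathbf{p}_iT\le T$.

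The natural candidate for $T$ is the cover of $\mathbf{b}+\langle P\rangle$ by the pieces of its intersections with the components $\mathbf{b}_j+\langle Q_j\rangle$ of $S$. These pieces are solution sets of linear Diophantine systems. Each has its base in both sets and its periods in $\langle P\rangle\cap\langle Q_j\rangle$, so each is easily provably below the $j$-th summand of $N(S)$. But then $\mathbf{p}_iT\le T$ is again a ``linear term below a sum of linear terms'' problem, with no measure decreasing.

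After transporting along $\langle P\rangle\cong\mathbb{N}^m$, what remains is to derive $(x_1+\dots+x_m)^*\le\sum_r\mathbf{c}_rR_r^*$ whenever the right-hand side denotes all of $\mathbb{N}^m$. That universality statement is where the real work lies, and ``exhibit a finite candidate $b'$ covering the base translates'' does not indicate how to obtain it.

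\textbf{Net assessment.} Your proposal ends up resting on exactly the citation the paper uses. It adds correct but routine material around that citation and does not close the hard step.
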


Thus, checking whether an equation is valid for all commutative Kleene algebras can be reduced to checking equality of semi-linear sets (i.e., equivalence of Presburger formulas).

\begin{theorem}[Coincidence of $CKA$ and $CKA^{*}$]\label{thm:cka-coin}
For all expressions $e_1,e_2\in T_{\Sigma}$,
\[
CKA \vDash e_1 = e_2
\quad\Longleftrightarrow\quad
CKA^{*} \vDash e_1 = e_2 .
\]
\end{theorem}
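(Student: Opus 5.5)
The plan is to compare both theories with the Parikh-image semantics of Theorem~\ref{thm:cka-completeness}. The direction CKA $\vDash e_1=e_2 \Rightarrow$ CKA$^*\vDash e_1=e_2$ is immediate: every $*$-continuous commutative Kleene algebra is a commutative Kleene algebra, so an equation valid in all models of CKA is valid in the smaller class of models of CKA$^*$. All the content lies in the converse.

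For the converse, I will exhibit a specific $*$-continuous commutative Kleene algebra in which equality of interpretations already forces equality of Parikh images. The natural choice is $\Rat{\mathbb{N}^k}$, the rational subsets of the free commutative monoid $(\mathbb{N}^k,+,\mathbf{0})$ with $k=|\Sigma|$. By the definition of $\Rat{M}$ this is a $*$-continuous Kleene algebra. It is commutative because $\mathbb{N}^k$ is a commutative monoid, so $X\cdot Y=\{x+y\}=Y\cdot X$ for all subsets. Let $I(a_i)=\{\mathbf{e}_i\}$, the unit vector.

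The key step is to prove $I(e)=P(L(e))$ by structural induction on $e$. The base cases are $0$, $1$ and the letters. For sums, $P(L_1\cup L_2)=P(L_1)\cup P(L_2)$. For products, $\Psi$ is a monoid homomorphism $\Sigma^*\to\mathbb{N}^k$, so $P(L_1L_2)=P(L_1)+P(L_2)$. For stars, $P(L^*)=\bigcup_n P(L)^n=P(L)^*$. This is routine, and it is essentially the only place where something must be checked.

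Then, if CKA$^*\vDash e_1=e_2$, we have $I(e_1)=I(e_2)$, hence $P(L(e_1))=P(L(e_2))$. Theorem~\ref{thm:cka-completeness} then gives CKA $\vDash e_1=e_2$. The main obstacle is therefore entirely absorbed by Pilling's completeness theorem, which we assume. The only point needing care is confirming that the chosen model is genuinely $*$-continuous and commutative, which follows from the general $\Rat{M}$ construction.
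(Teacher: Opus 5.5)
Your proposal is correct and is essentially the paper's own argument. The paper also interprets expressions in the $*$-continuous commutative algebra $\Rat{\mathbb{N}^{\Sigma}}$ with each letter sent to its unit vector, notes that equality there coincides with equality of Parikh images, and invokes Pilling's completeness result (Theorem~\ref{thm:cka-completeness}); your structural induction proving $I(e)=P(L(e))$ spells out the step the paper calls ``easy to verify'' (its appendix phrases the same point as $*$-continuity preserving Parikh images).
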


\begin{proof}
It is easy to verify that
\[
P(L(e_1)) = P(L(e_2)) \Longleftrightarrow \Rat{\mathbb{N}^{\Sigma}},I \vDash e_1=e_2
\]
where $I$ is the interpretation wherein for each $a \in \Sigma$,
$I(a)_a = 1$ and $I(a)_b = 0$ for $b \neq a$.
Since $\Rat{\mathbb{N}^{\Sigma}}$ is a *-continuous Kleene algebra, the coincidence is then a trivial consequence of Theorem \ref{thm:cka-completeness}. See full proof in the Appendix.
\end{proof}

\subsection{Language Equivalence under Commutativity Conditions}

Having defined the equational theories of \(KA+C\) and \(KA^{*}+C\), we now turn to a language-theoretic formulation of equivalence in terms of \(L(\cdot)\) and \(L_C(\cdot)\) in the following sections. This perspective allows us to state and prove our main results in a more transparent and constructive way.

The main justification for this shift is a classical theorem of \citet{kozen02comp}, which shows that for \( * \)-continuous Kleene algebras, equational provability coincides with language equivalence. As an immediate consequence, it follows that \(KA^{*}+C\) is decidable if and only if \(C\) is transitive.

\begin{theorem}\label{thm:equiv}[Monoid Equations \cite[Lemma 4.1]{kozen02comp}]
Let \(\Sigma\) be a finite alphabet and let \(E\) be a finite set of equations
between words in \(\Sigma^{*}\).
Then, for all expressions \(e_1,e_2\),
\[
{\mathsf{Rat}(\Sigma^{*}/E)},{L_E}\vDash{e_1}={e_2}
\quad\Longleftrightarrow\quad
\ka{KA^{*}}{E}{e_1}{e_2}.
\]
where $L_E$ is the interpretation over $\Rat{\Sigma^*/E}$ that maps each \(a \in \Sigma\) to the equivalence class of \(a\) in the quotient
monoid \(\Sigma^{*}/E\).
\end{theorem}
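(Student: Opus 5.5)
The proof has two directions. The right-to-left direction is essentially free. The left-to-right direction goes through the standard free-model argument, with $*$-continuity handling the star.

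\emph{Right to left.} The algebra $\Rat{\Sigma^*/E}$ is itself a $*$-continuous Kleene algebra. Indeed, $2^{M}$ is $*$-continuous for any monoid $M$: concatenation distributes over arbitrary unions, so $X Y^* Z=\bigcup_n XY^nZ$. Moreover, $\Rat{M}$ is a subalgebra closed under the operations, and suprema of the relevant chains $XY^nZ$ exist inside it because they equal $XY^*Z$. The interpretation $L_E$ satisfies every equation $u=v$ in $E$, since $L_E(u)=\{[u]\}=\{[v]\}=L_E(v)$ in the quotient monoid. Hence if $\ka{KA^{*}}{E}{e_1}{e_2}$, instantiating the definition with $\mathcal{K}=\Rat{\Sigma^*/E}$ and $I=L_E$ gives $L_E(e_1)=L_E(e_2)$.

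\emph{Left to right.} Fix a $*$-continuous Kleene algebra $\mathcal{K}$ and an interpretation $I$ satisfying all equations of $E$. The plan is to show that for every expression $e$,
\[
I(e)=\sum_{[w]\in L_E(e)} I(w),
\]
where the sum is the supremum in $\mathcal{K}$ and $I(w)$ is the product of the images of the letters of $w$. The right-hand side must first be checked to be well defined.

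\textbf{Step 1: well-definedness.} Since $I$ satisfies $E$ and multiplication is a congruence, $w\equiv_E w'$ implies $I(w)=I(w')$. So $I$ factors through a monoid homomorphism $h:\Sigma^*/E\to(\mathcal{K},\cdot,1)$, and the formula becomes $I(e)=\sup h(L_E(e))$.

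\textbf{Step 2: existence of the supremum.} This is done by induction on $e$, carrying the stronger invariant that for all words $x,y$,
\[
h(x)\,I(e)\,h(y)=\sup\{h(x u y)\mid u\in L_E(e)\}.
\]
The two-sided context is what makes the induction go through.
\begin{itemize}
\item Constants, $0$ and $1$ are immediate.
\item For sums, the supremum of a union is the join of the two suprema.
\item For products $e=fg$, write $h(x)I(f)I(g)h(y)$ and apply the hypothesis for $f$ with right context $I(g)h(y)$. Then, for each $u\in L_E(f)$, apply the hypothesis for $g$ with left context $xu$. Here one needs that multiplication preserves the relevant suprema on each side. This is exactly what the context invariant provides: the inner supremum, multiplied on the left by a fixed element, distributes.
\item For stars, $*$-continuity gives $h(x)I(f)^*h(y)=\sup_n h(x)I(f)^nh(y)$, and each $I(f)^n$ is handled by the product case.
\end{itemize}

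\textbf{Step 3: conclusion.} Once the invariant holds with empty contexts, $L_E(e_1)=L_E(e_2)$ immediately gives $I(e_1)=I(e_2)$. Since $\mathcal{K}$ and $I$ were arbitrary, $\ka{KA^{*}}{E}{e_1}{e_2}$.

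The main obstacle is the product case. Arbitrary suprema in a $*$-continuous Kleene algebra need not distribute over multiplication, and they need not even exist. So everything must be phrased through the context form above, which is the standard Kozen technique. The induction must be organised so that every supremum used is provably of the form guaranteed by the hypothesis, and never an arbitrary one.
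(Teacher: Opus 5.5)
The paper gives no proof of this statement; it imports it verbatim from Kozen (2002, Lemma 4.1), so your proposal has to stand on its own. Your right-to-left direction is fine. Your strategy for left-to-right is the standard Kozen argument on which the cited lemma rests: show that $I(e)$ is the supremum of $h(L_E(e))$, with that supremum stable under multiplication by context.

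The gap is that the invariant as you state it does not carry the induction. You quantify only over word contexts, $h(x)I(e)h(y)=\sup\{h(xuy)\mid u\in L_E(e)\}$ for words $x,y$. Yet in the product case you immediately apply the hypothesis for $f$ with right context $I(g)h(y)$, which is not the image of a word. Reordering does not help. Using the hypothesis for $g$ first needs the left context $h(x)I(f)$. Writing $I(g)h(y)=\sup_v h(vy)$ and pulling $h(x)I(f)$ inside that supremum is exactly distributivity of multiplication over an arbitrary existing supremum. $*$-continuity does not give you this; it only covers suprema of the form $\sup_n ab^nc$.

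The same problem recurs in the star case, where you need contexts of the form $I(f)^n c$. Moreover, $f^n$ is not a subterm of $f^*$. So ``handled by the product case'' must be an inner induction on $n$, not an appeal to the structural hypothesis.

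The fix is to strengthen the invariant to arbitrary elements of $\mathcal{K}$: for every expression $e$ and all $a,c\in\mathcal{K}$,
\[
a\,I(e)\,c=\sup\{\,a\,h(m)\,c \mid m\in L_E(e)\,\}.
\]

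\emph{Product case.} Apply the hypothesis for $f$ with contexts $(a,\,I(g)c)$, then the hypothesis for $g$ with contexts $(a\,h(u),\,c)$:
\[
a\,I(f)I(g)\,c=\sup_{u\in L_E(f)} a\,h(u)\,\bigl(I(g)c\bigr)=\sup_{u\in L_E(f)}\,\sup_{v\in L_E(g)} a\,h(u)h(v)\,c.
\]
The iterated supremum collapses to the supremum over $L_E(f)L_E(g)=L_E(fg)$.

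\emph{Star case.} $*$-continuity gives $a\,I(f)^*c=\sup_n a\,I(f)^n c$. An inner induction on $n$ via $a\,I(f)^{n+1}c=a\,I(f)\bigl(I(f)^n c\bigr)$ then yields $a\,I(f)^n c=\sup\{a\,h(m)\,c\mid m\in L_E(f)^n\}$. Finally use $L_E(f^*)=\bigcup_n L_E(f)^n$.

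With this change, and taking $a=c=1$ at the end, your argument is complete.
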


\begin{corollary}[Decidability of \kcs]
The problem
\[
\ka{KA^*}{C}{e_1}{e_2}
\]
is decidable if and only if $C$ is transitive.
\end{corollary}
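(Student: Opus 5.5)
The plan is to reduce the corollary to the classical language-theoretic characterization, using Theorem~\ref{thm:equiv} as the bridge. Specialize Theorem~\ref{thm:equiv} to the finite set of equations $E=\{ab=ba \mid (a,b)\in C\}$. The congruence on $\Sigma^*$ generated by $E$ is exactly $\equiv_C$, so $\Sigma^*/E=\Sigma^*/\equiv_C$ and $L_E=L_C$. Theorem~\ref{thm:equiv} then gives
\[
\ka{KA^*}{C}{e_1}{e_2} \iff L_C(e_1)=L_C(e_2).
\]
This is a statement about two rational subsets of the trace monoid $\Sigma^*/\equiv_C$. Unfolding $L_C$ as in the discussion after the definition, $L_C(e_1)=L_C(e_2)$ holds iff the $\equiv_C$-closures of the regular languages $L(e_1)$ and $L(e_2)$ coincide. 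So the problem is exactly equality of rational trace languages (equivalently, of closures of regular languages) under the partial commutation $C$.

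For the positive direction, I will use that a transitive $C$ partitions $\Sigma$ into classes $\Sigma_1,\dots,\Sigma_k$ of mutually commuting letters, with distinct classes not commuting. The monoid $\Sigma^*/\equiv_C$ is then the free product of the free commutative monoids $\mathbb{N}^{\Sigma_i}$. Equality of rational subsets of such free products of free commutative monoids is decidable by the cited classical results~\cite{Ibarra78,regtrans82}.

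The idea behind those results is as follows. A trace is a word over the infinite alphabet of "blocks", namely maximal factors that lie in a single class and are read up to commutation. A finite automaton for $L(e)$ is converted into a finite-state description over blocks whose segments within each class are semilinear sets, using Parikh's theorem together with Theorem~\ref{thm:cka-completeness}. Equivalence then reduces to a combination of automata-theoretic and Presburger checks. I will cite this rather than reprove it.

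For the negative direction, I will assume $C$ is not transitive, so there are $a,b,c$ with $(a,b),(b,c)\in C$ but $(a,c)\notin C$. Undecidability of equality of rational trace languages for every non-transitive commutation is again the classical result~\cite{Ibarra78,regtrans82}, and the equivalence displayed above transfers it to $KA^*+C$.

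The main obstacle is that the cited undecidability is often stated for commutations containing a specific pattern, such as $a,b$ commuting with $c,d$ (a $C_4$-like configuration), rather than for the minimal three-letter pattern. If the reference only covers that case, I would first try to embed such a configuration into the $a,b,c$ pattern by encoding with words: for example, use $x=ab$, $y=cb$ or similar encodings via a morphism that respects commutation. I would also need to check that rational subsets pull back correctly under this morphism. Otherwise I would defer to the later Section~\ref{sec:undec}, which proves the stronger undecidability of universality in exactly this minimal setting.
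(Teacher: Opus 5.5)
Your proposal is correct and matches the paper's proof: instantiate Theorem~\ref{thm:equiv} with $E=\{ab=ba \mid (a,b)\in C\}$ to get \ka{KA^*}{C}{e_1}{e_2} if and only if $L_C(e_1)=L_C(e_2)$, then invoke the classical characterization of \citet{Ibarra78} and \citet{regtrans82} for both directions.

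Your worry about the three-letter pattern is unnecessary. The trace monoid there is $\{a,c\}^*\times b^*$, which sits inside any non-transitive trace monoid, and equality of its rational subsets is exactly the classical undecidable case. Your word-encoding fallback could not have worked anyway, since $\{a,c\}^*\times b^*$ contains no injective, commutation-respecting image of the four-letter pattern. Deferring to Section~\ref{sec:undec} would, however, be a valid alternative.
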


\begin{proof}
\citet{Ibarra78} and \citet{regtrans82} showed that equivalence of regular languages under commutativity conditions is decidable if and only if $C$ is transitive. The result therefore follows immediately from Theorem~\ref{thm:equiv}.
\end{proof}

\paragraph{Decision problem.}
The remaining question is therefore the decidability of $KA+C$. The central decision problem studied in this paper is the following.

\begin{problem}[Equivalence under Commutativity Conditions]
Given a finite alphabet \(\Sigma\), a commutativity condition \(C \subseteq \Sigma \times \Sigma\), and two regular expressions \(e_1,e_2\) over \(T_\Sigma\), decide whether
\[
\kae{e_1}{e_2}.
\]
\end{problem}
\section{Coincidence}\label{sec:coin}
In this section we prove that when the commutativity condition \(C\) is
transitive, the equational theories of $KA + C$ and $KA^* + C$ coincide.  Formally, for all expressions
\(p,q\),
\[
\ka{KA}{C}{p}{q}
\;\Longleftrightarrow\;
\ka{KA^*}{C}{p}{q}.
\]

By Theorem~\ref{thm:equiv}, we already know that
\[
\lc(p)= \lc(q)
\;\Longleftrightarrow\;
\ka{KA^*}{C}{p}{q}.
\]
Thus it suffices to show
\[
\lc(p)= \lc(q)
\;\Longleftrightarrow\;
\ka{KA}{C}{p}{q}.
\]

Our argument is partially inspired by Kozen’s proof of the coincidence of the equational theories of KAT and $*$-continuous KAT~\cite{kozen97kat}. In particular, it suggests that, rather than comparing expressions directly, it is more convenient to compare their normalized forms.

The first step is therefore to show that every Kleene algebra expression $p$ is provably equivalent to a canonical \emph{normal form} $\widetilde{p}$:
\[
\ka{KA}{C}{p}{\widetilde{p}}.
\]

Next, we show that for \textit{normal} expressions \(\widetilde{p}\) and \(\widetilde{q}\), 
$C$-equivalence of languages collapses to ordinary language equivalence:
\[
\lc(\widetilde{p})= \lc(\widetilde{q})
\;\Longleftrightarrow\;
L(\widetilde{p}) = L(\widetilde{q}).
\]

By the completeness of Kleene algebra~\cite{kozen1994ka}, we then obtain
\[
\ka{KA}{C}{\widetilde{p}}{\widetilde{q}}
\;\Longleftrightarrow\;
L(\widetilde{p}) = L(\widetilde{q}).
\]

Combining the equivalences above, we conclude the desired coincidence:
\[
\ka{KA}{C}{p}{q}
\;\Longleftrightarrow\;
\lc(p)= \lc(q).
\]

\subsection{Preparation}

Our commutativity assumption is alphabet–based: we require $ab=ba$ only for
letters $(a,b)\in C$, whereas commutative Kleene algebra assumes $pq=qp$ for
\emph{all} terms $p,q$.  To appeal to the completeness theorem for
commutative KA, we first show that alphabet–level commutativity already
forces full commutativity inside each class.

When $C$ is transitive, it forms an equivalence relation over $\Sigma$. Let $\Sigma/C=\{\Sigma_1,\ldots,\Sigma_m\}$ be the partition of $\Sigma$
induced by~$C$.
We prove that
\[
KA + \{ab=ba \mid a,b\in\Sigma_i\}\;\vDash\; pq=qp
\qquad\text{for all } p,q\in T_{\Sigma_i},
\]
i.e., commutativity of generators implies commutativity of all expressions in
the class.

\begin{lemma}\label{thm:alphabet-commutative}
Let $\Sigma_i$ be one
equivalence class under~$C$.
For any expressions $p,q \in T_{\Sigma_i}$,
\[
  \ka{KA}{\{ab = ba : a,b \in \Sigma_i\}}{pq}{qp}.
\]
\end{lemma}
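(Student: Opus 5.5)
We argue by structural induction, in two stages. First we show that every letter $a\in\Sigma_i$ commutes with every expression $q\in T_{\Sigma_i}$. Then we lift this to commutation of an arbitrary expression $p$ with $q$. All reasoning takes place inside $KA+\{ab=ba : a,b\in\Sigma_i\}$, using only the Kleene algebra axioms together with the hypothesis on generators.

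\emph{Stage 1: letters commute with all expressions.} Fix $x\in\Sigma_i$ and induct on $q$.
\begin{itemize}
\item For $q=0$ and $q=1$ the claim is trivial.
\item For $q=b\in\Sigma_i$ it is exactly the hypothesis $xb=bx$.
\item If $xq_1=q_1x$ and $xq_2=q_2x$, then $x(q_1+q_2)=(q_1+q_2)x$ by distributivity.
\item Also $x q_1 q_2 = q_1 x q_2 = q_1 q_2 x$ by associativity.
\item For $q=r^*$ with $xr=rx$, we use the standard KA lemma that $xr=rx$ implies $xr^*=r^*x$. This lemma follows from star induction. From $xr\le rx$ we get $r^*x\cdot r\le\ldots$. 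More directly, $x + r^*x\,r = x + r^*r x\le r^* x$, hence $xr^*\le r^*x$ by the right-star induction rule, since $b a\le b \Rightarrow b a^*\le b$ with $b=r^*x$ and $x\le r^*x$. Symmetrically, $r^*x\le xr^*$ by the left rule.
\end{itemize}

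\emph{Stage 2: arbitrary expressions commute.} Fix $q\in T_{\Sigma_i}$ and induct on $p$.
\begin{itemize}
\item Constants are trivial.
\item For letters we use Stage 1.
\item Sums and products go through exactly as above.
\item For $p=r^*$ we have $rq=qr$ by induction, and the same star-commutation lemma, applied now to arbitrary terms $q,r$, yields $r^*q=qr^*$.
\end{itemize}

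The only step requiring genuine care is this star case, i.e.\ the lemma
\[
KA\vDash rq=qr \;\Rightarrow\; r^*q=qr^*,
\]
which must be derived from the quasi-equational star axioms rather than by $*$-continuity. This is the crux, since we are working in plain $KA$, and I would prove it once as an auxiliary lemma.

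The proof runs as follows. From $qr\le rq$, compute
\[
q + r^*q\,r = q + r^*rq \le (1+r^*r)q \le r^*q.
\]
Since $q\le r^*q$, we get $r^*q\cdot r\le r^*q$. The rule $ba\le b\Rightarrow ba^*\le b$ then gives $r^*q\,r^*\le r^*q$, hence $qr^*\le r^*q$. The reverse inequality is symmetric, using $rq\le qr$ and the left star-induction rule.

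With this lemma, both inductions are routine, and in fact Stage 1 becomes a special case of Stage 2.
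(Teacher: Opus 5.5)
Your proof is correct and follows essentially the same route as the paper: structural induction, which you make explicit as a letter-versus-expression induction feeding into an expression-versus-expression induction where the paper speaks loosely of induction on the pair $(p,q)$. The star case is discharged by the auxiliary lemma that $rq=qr$ implies $r^*q=qr^*$, proved exactly as in the paper from $q + r^*q\,r \le r^*q$ via the right star-induction rule, together with the symmetric left-rule argument. One small correction to your closing remark: Stage 1 is not literally a special case of Stage 2, since Stage 2's base case for letters invokes Stage 1; this does not affect the argument.
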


\begin{proof} See Appendix.
\end{proof}

\subsection{Factorization}

Under the commutativity condition \(C\), two letters may commute only when they
belong to the same equivalence class \(\Sigma_i\) of the alphabet partition
\(\{\Sigma_1,\ldots,\Sigma_m\}\); letters from different classes do not commute.
This observation naturally leads to a two-stage normalization strategy:
\begin{itemize}
    \item[(1)] \emph{Factorize} each expression with respect to the alphabet
          partition \(\{\Sigma_1,\ldots,\Sigma_m\}\), rewriting every word as a sequence of blocks drawn from
          individual classes \(\Sigma_i\);
    \item[(2)] Normalize each class \(\Sigma_i\) separately using the
          \emph{supporting expressions} introduced later.
\end{itemize}

In this subsection we focus on stage~(1).
Given an expression \(p\), we construct its factorized form \(\hat{p}\), which
explicitly separates all words according to the alphabet partition.
The next subsection explains how to convert \(\hat{p}\) into the final normal
form \(\tilde{p}\).

Here we follow the Kleene algebra convention that the constant \(1\)
corresponds to the empty string \(\epsilon\) in regular languages.
We say that an expression \(e\) is \emph{$\epsilon$-free } if
\(1 \notin L(e)\).  
With this convention in place, we now introduce our definition of
factorization.

\begin{definition}[Factorization under $C$]\label{def:c-fact}
For a word $w \in \Sigma^{*}$, a \emph{factorization under $C$} is a
decomposition
\[
  w = u_{1} u_{2} \cdots u_{n},
\]
where each segment \(u_i\) is called a \emph{block}.  The factorization
satisfies:
\begin{itemize}
  \item if \(w = 1\), then the factorization consists of the single block
        \(1 = 1\);
  \item if \(w \neq 1\), each block \(u_i\) is a nonempty word drawn entirely
        from a single equivalence class, i.e.\ \(u_i \in \Sigma_{k_i}^{+}\)
        for some \(k_i \in \{1,\ldots,m\}\);
  \item adjacent blocks come from different classes:
        \(k_i \neq k_{i+1}\) for all \(i\).
\end{itemize}
\end{definition}

\begin{lemma}\label{thm:uniq-fac}
Every word \(w \in \Sigma^{*}\) admits exactly one factorization under \(C\).
\end{lemma}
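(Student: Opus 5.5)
We treat the empty word separately and then prove existence and uniqueness for nonempty words. The key observation is that, since \(C\) is transitive, the classes \(\Sigma_1,\ldots,\Sigma_m\) partition \(\Sigma\). Hence every letter \(a\in\Sigma\) lies in exactly one class. Write \(\kappa(a)\in\{1,\ldots,m\}\) for the index of that class. A nonempty word \(w=a_1a_2\cdots a_\ell\) then determines a sequence of class labels \(\kappa(a_1),\ldots,\kappa(a_\ell)\). The factorization should be exactly the decomposition of \(w\) into maximal runs of letters sharing the same label.

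\textbf{Empty word.} If \(w=1\), Definition~\ref{def:c-fact} prescribes the single block \(1\), so existence holds. For uniqueness, note that any factorization in the second format has nonempty blocks. A concatenation of \(n\ge 1\) nonempty blocks cannot equal \(1\), and for \(n=0\) we read the definition's first clause as the only admissible factorization of \(1\). So the factorization of \(1\) is unique.

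\textbf{Existence for \(w\neq 1\).} We proceed by induction on \(\ell=|w|\).
\begin{itemize}
\item For \(\ell=1\), the single block \(a_1\in\Sigma_{\kappa(a_1)}^{+}\) is a factorization.
\item For the inductive step, write \(w=w'a\) and take the factorization \(u_1\cdots u_n\) of \(w'\) given by the induction hypothesis, with \(u_n\in\Sigma_{k_n}^{+}\).
\item If \(\kappa(a)=k_n\), replace \(u_n\) by \(u_na\), which still lies in \(\Sigma_{k_n}^{+}\).
\item Otherwise, append the new block \(a\), which satisfies \(k_n\neq\kappa(a)\).
\end{itemize}
In both cases the adjacency condition is preserved.

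\textbf{Uniqueness for \(w\neq 1\).} Let \(w=u_1\cdots u_n=v_1\cdots v_r\) be two factorizations. We show \(u_1=v_1\) and then induct on \(|w|\), applying the same argument to the remaining suffix, which is either empty or again nonempty. Suppose without loss of generality that \(|u_1|\le|v_1|\). Since \(u_1\) is a prefix of \(v_1\) and is nonempty, its labels force \(k_1=\kappa(a_1)=k'_1\). Assume for contradiction that \(|u_1|<|v_1|\). The letter of \(w\) at position \(|u_1|+1\) lies inside \(v_1\), so its label is \(k'_1=k_1\). But this letter is also the first letter of \(u_2\), so its label is \(k_2\). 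Then \(k_2=k_1\), contradicting the adjacency condition. Hence \(|u_1|=|v_1|\) and \(u_1=v_1\).

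Cancelling this common prefix leaves \(u_2\cdots u_n=v_2\cdots v_r\). This word is empty exactly when \(n=r=1\). Otherwise both sides are factorizations of the same shorter nonempty word, since the suffix of a factorization still satisfies the conditions. The induction hypothesis then gives \(u_i=v_i\) for all \(i\) and \(n=r\).

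The only step needing care is justifying that labels are well defined, which is exactly where transitivity enters. Without transitivity a letter could belong to several overlapping cliques, and the block classes would no longer be determined by the word. With \(C\) an equivalence relation, this is immediate.
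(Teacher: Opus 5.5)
Your proof is correct and follows essentially the same idea as the paper: because $C$ is transitive, the blocks are forced to be the maximal runs of letters from a single class of $\Sigma/C$, and the empty word is handled by the first clause of the definition. Your inductive existence construction and first-block comparison for uniqueness make rigorous what the paper states briefly as ``uniqueness is immediate from maximality''; in particular, you show explicitly that any factorization obeying the adjacency condition must consist of maximal runs, which the paper leaves implicit.
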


\begin{proof}
 See Appendix.
\end{proof}

Our goal is to transform any expression \(p\) into an expression \(\hat{p}\) such that, by construction, we can directly read off \(\hat{p}\) to know how each word \(w\in L(\hat{p})\) is factorized. This makes it possible to reason about words block by block, aligned with the alphabet partition induced by the commutativity condition \(C\).

Before giving the factorization procedure, we recall two classical results that
we will make essential use of.  Both are due to Kozen~\cite{kozen1994ka}.

\begin{theorem}\label{thm:ka-matrix}[Matrix Kleene algebra\cite[Theorem 11]{kozen1994ka}]
Let $K$ be a Kleene algebra, and
let $M(n,K)$ be the set of all $n \times n$ matrices with entries in $K$.  Equipped with matrix addition, matrix multiplication, Kleene
star, the zero matrix $Z_n$, and the identity matrix $I_n$, the structure
\[
  \bigl(M(n,K),\; +,\; \cdot,\; {}^{*},\; Z_n,\; I_n\bigr)
\]
is itself a Kleene algebra. With $+$ and $\cdot$ being the matrix addition and multiplication, and the Kleene star of a matrix is defined inductively on its dimension.
For $n=1$, if $A=(a)$ is a $1\times 1$ matrix, define
\[
A^{*} = (a^{*}).
\]

For $n>1$, write $A$ in block form
\[
A =
\begin{pmatrix}
B & C \\
D & E
\end{pmatrix},
\]
where $B$ is a $k\times k$ matrix and $E$ is an $(n-k)\times(n-k)$ matrix
for some $1\le k<n$. Set
\[
A^{*}
=
\begin{pmatrix}
F & G \\
H & J
\end{pmatrix},
\]
where
\[
\begin{aligned}
F &= (B + C E^{*} D)^{*}, \qquad&
G &= F C E^{*}, \\
H &= E^{*} D F, &
J &= E^{*} + E^{*} D F C E^{*}.
\end{aligned}
\]
This matrix $A^*$ is unique regardless of the choice of $k$.
\end{theorem}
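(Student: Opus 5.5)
We prove the statement, Kozen's Theorem on matrices over a Kleene algebra, by induction on the dimension $n$, in the standard way. Addition and multiplication on $M(n,K)$ clearly make it an idempotent semiring: associativity, distributivity and the unit/zero laws are inherited entrywise from $K$. It remains to check the four star axioms for the block definition of $A^*$, and to show that the result does not depend on the split point $k$. For $n=1$ there is nothing to prove. For $n>1$ we assume inductively that $M(k,K)$ and $M(n-k,K)$ are Kleene algebras. Then $E^*$ and $F=(B+CE^*D)^*$ are well defined and satisfy the star axioms in their own matrix algebras.

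\textbf{Unfolding inequalities.} To show $I+AA^*\le A^*$, we expand the block product and check each of the four blocks. For the top-left block we need
\[
I+BF+CH \;=\; I+BF+CE^*DF \;=\; I+(B+CE^*D)F\;\le F,
\]
which is the unfolding axiom for $F$. The other blocks use the same idea together with $E^*=I+EE^*$ and monotonicity. For example, the bottom-right block reduces to $I+DG+EJ\le J$, which follows by expanding $J=E^*+E^*DFCE^*$. The inequality $I+A^*A\le A^*$ is proved symmetrically.

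\textbf{Induction axioms.} Suppose $AX\le X$, and write $X=\begin{pmatrix}X_1\\X_2\end{pmatrix}$ in blocks. It suffices to treat column blocks, since the matrix case is handled column by column. The hypothesis gives
\[
BX_1+CX_2\le X_1, \qquad DX_1+EX_2\le X_2.
\]
From the second inequality, the star induction for $E$ gives $E^*DX_1\le X_2$. Substituting into the first yields $(B+CE^*D)X_1\le X_1$, so $FX_1\le X_1$. Then $HX_1=E^*DFX_1\le E^*DX_1\le X_2$. Likewise $GX_2=FCE^*X_2\le FX_1\le X_1$, using $CE^*X_2\le X_1$, which comes from $EX_2\le X_2$ and $CX_2\le X_1$. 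The bottom-right term $JX_2$ is bounded the same way. Together these give $A^*X\le X$. The right-handed induction axiom is proved symmetrically.

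\textbf{Independence of $k$.} This is the main obstacle. Our plan is to avoid a direct computation. Any star operation on a Kleene algebra is uniquely determined, because $A^*$ is the least solution of $I+AX\le X$. Indeed, if $S$ and $S'$ both satisfy the star axioms, then $I+AS'\le S'$ gives $S=S\cdot I\le S'$ by the induction axiom applied with $S$, and symmetrically $S'\le S$. For each split $k$ we have shown that the block formula satisfies all four axioms, so any two splits give the same matrix. The delicate point is that this uniqueness argument needs the induction axiom for matrices that are not square columns, namely $X$ an arbitrary $n\times n$ matrix. This is exactly the column-wise reduction used above, so it is covered.
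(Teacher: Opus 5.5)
Your proof is correct and is essentially the standard argument from Kozen's paper, which this paper cites but does not reproduce. That argument verifies the four star axioms blockwise by strong induction on the dimension, and gets independence of the split $k$ from the uniqueness of a star satisfying the axioms ($A^{*}$ is the least solution of $I + AX \le X$). The one step you leave implicit is that the inductive hypothesis gives the induction axioms for $E$ and $F=(B+CE^{*}D)^{*}$ only on square matrices, so applying them to the column blocks $X_1, X_2$ needs the trivial padding step (e.g.\ replace $X_2$ by the square matrix $(X_2\;0)$), which is the same column-wise observation you already use.
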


\begin{lemma}\label{thm:matrix-rep}[Matrix Representation of Expressions\cite[Lemma 15]{kozen1994ka}]
For every regular expression $p\in T_\Sigma$, there exist a natural number \(n\),
vectors \(u,v \in \{0,1\}^n\), and an \(n \times n\) 0-1 matrix \(A\) over \(T_\Sigma\) such
that
\[
\ka{KA}{C}{p}{u^{T} A^{*} v}.
\]
Here \(u^{T} A^{*} v\) is syntactic sugar for
\(
\sum_{i,j} u_i\, (A^{*})_{ij}\, v_j
\),
and the matrix \(A\) has the form
\[
A \;=\; \sum_{a \in \Sigma} a \cdot A_a,
\]
where each \(A_a\) is a \(0\)--\(1\) matrix encoding the transitions labeled by
the symbol \(a\).
\end{lemma}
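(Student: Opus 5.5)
This is Kozen's Kleene-theorem-style argument, applied in plain $KA$. Since $KA+C$ extends $KA$, every equation provable in $KA$ also holds in $KA+C$. It therefore suffices to show $KA\vDash p = u^{T}A^{*}v$ for suitable $u,v,A$. We proceed by structural induction on $p$ and build a nondeterministic automaton presented as a matrix. The invariant is that $A=\sum_{a\in\Sigma} aA_a$ with $A_a$ 0-1 matrices, and that $u,v$ are 0-1 vectors. Throughout we use Theorem~\ref{thm:ka-matrix}, so matrices over $T_\Sigma$ form a Kleene algebra and the block formula for $A^*$ is available.

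The base cases are direct.
\begin{itemize}
\item For $p=0$, take $n=1$, $A=(0)$ and $u=v=(0)$.
\item For $p=1$, take $n=1$, $A=(0)$ and $u=v=(1)$; then $u^TA^*v=0^*=1$ in $KA$.
\item For $p=a$, take $n=2$, $A=\begin{pmatrix}0&a\\0&0\end{pmatrix}$, $u=(1,0)^T$ and $v=(0,1)^T$. The block formula gives $(A^*)_{12}=a$, since $0^*=1$.
\end{itemize}

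The inductive steps follow the Thompson/Glushkov constructions.
\begin{itemize}
\item For $p+q$, take the block-diagonal $A=\mathrm{diag}(A_p,A_q)$ and concatenate the vectors. The block formula with off-diagonal blocks zero gives $A^*=\mathrm{diag}(A_p^*,A_q^*)$, and distributivity yields the sum.
\item For $pq$, take $A=\begin{pmatrix}A_p & v_pu_q^T A_q\\ 0 & A_q\end{pmatrix}$ with $u=(u_p,0)$ and $v=(v_p,0)$ when $1\notin q$. This avoids $\epsilon$-edges, since the off-diagonal block must have the form $\sum_a aM_a$. The general case needs an extra term accounting for $1\in L(q)$ via $u_q^Tv_q$.

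A simpler route is to allow an auxiliary matrix $B=v_pu_q^T$ of $\epsilon$-moves and then eliminate it. With $A'=\begin{pmatrix}A_p&B\\0&A_q\end{pmatrix}$, the block formula gives $(A'^*)_{12}=A_p^*BA_q^*$, so $u'^TA'^*v'=u_p^TA_p^*v_p\,u_q^TA_q^*v_q$. The $\epsilon$-elimination step rewrites $(N+E)^*$, where $N=\sum_a aN_a$ and $E$ is a 0-1 matrix, as $E^*(NE^*)^*$ by the $KA$ denesting identity $(x+y)^*=y^*(xy^*)^*$, which holds in matrix KAs. Here $E^*$ is a 0-1 matrix provably equal to its reflexive-transitive closure, because over the Boolean part idempotence gives $1+E+\dots+E^{n-1}$. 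Then $NE^*$ has the required shape, and $u^TE^*$ is absorbed into a new start vector.
\item For $p^*$, use $A=A_p+v_pu_p^T$ as $\epsilon$-feedback together with a fresh initial/final state for $1$. Sliding, $(x+y)^*=x^*(yx^*)^*$, shows $u_p^T(A_p+v_pu_p^T)^*v_p=(u_p^TA_p^*v_p)^+$. Then add $1$ and eliminate $\epsilon$ as above.
\end{itemize}

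The main obstacle is the $\epsilon$-elimination and the identities manipulating 0-1 matrices. These must be proved purely from the $KA$ axioms lifted to matrices; in particular, $E^*$ for a Boolean matrix $E$ equals a 0-1 matrix, which needs induction on dimension through the block formula. Everything else is bookkeeping with the block star formula. This is exactly Kozen's Lemma 15, so no use of $C$ is needed.
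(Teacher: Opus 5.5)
Your proposal is correct and is essentially the standard Kozen construction; the paper gives no argument of its own and simply imports this result by citing Kozen's Lemma~15. The work is done by induction on $p$ with automata presented as matrices, using the block star formula, the denesting identity $(E+N)^*=E^*(NE^*)^*$, and the fact that the star of a 0-1 matrix is provably a 0-1 matrix (its entries are closed terms over $0,1$, and $0^*=1^*=1$).

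One slip to fix: in your first, $\epsilon$-free product construction the final vector must be $(v_p\,u_q^Tv_q,\;v_q)$, which is $(0,v_q)$ when $1\notin L(q)$; with $(v_p,0)$ you only recover $p$. The $\epsilon$-route you actually adopt, with $v'=(0,v_q)$, is fine.
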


Before proceeding to the proofs, we fix some notation.
For vectors $u,v \in \{0,1\}^n$ and an $n\times n$ matrix $A$ over $T_\Sigma$, we write\[
u^{T} A v \;\triangleq\; \sum\limits_{i,j} u_i\, (A)_{ij}\, v_j .
\]By the construction in Theorem~\ref{thm:ka-matrix}, if $A$, $A_1$, and $A_2$ are
matrices whose entries are drawn from $T_\Sigma$, then so are
$A^{*}$, $A_1 + A_2$, and $A_1 A_2$.
Since every $0$--$1$ matrix is trivially a matrix of  entries drawn from $T_\Sigma$, it
follows that all matrices arising in our constructions can be assumed to have entries drawn from $T_\Sigma$.

Finally, since every Kleene algebra forms a semiring, equality of matrices is
preserved under multiplication by vectors.
In particular, if $A_{ij}$ denotes the $(ij)$-entry of a matrix $A$, then
whenever \(\forall i,j,\,
KA \vDash (A_1)_{ij} = (A_2)_{ij},
\) we also have
\(
KA \vDash u^{T} A_1 v = u^{T} A_2 v,
\)
which follows by a direct unrolling of matrix multiplication.

We now define how to factorize a regular expression $p$ with respect to an alphabet
          partition \(\{\Sigma_1,\ldots,\Sigma_m\}\).

\begin{definition}[Factorization]
Let \(p\) be a regular expression with matrix representation
\(p = u^{T} A^* v\), where
\[
A = \sum_{a \in \Sigma} a \cdot A_a
\]
as given by Lemma~\ref{thm:matrix-rep}.
Without loss of generality, we may suppose that $A$ is of the  dimension $n$.
For each class \(\Sigma_i\) of the partition \(\Sigma/C\), define
\[
A_i \;=\; \sum_{a \in \Sigma_i} a \cdot A_a.
\]

We construct the block matrix \(\hat{A}\in M_{mn}(T_\Sigma)\) of dimension
\((m n)\times(m n)\) as follows:
\[
\hat{A} \;=\;
\begin{pmatrix}
0      & A_1^{+} & A_1^{+} & \cdots & A_1^{+} \\
A_2^{+}& 0       & A_2^{+} & \cdots & A_2^{+} \\
A_3^{+}& A_3^{+} & 0       & \cdots & A_3^{+} \\
\vdots & \vdots  & \vdots  & \ddots & \vdots  \\
A_m^{+}& A_m^{+} & A_m^{+} & \cdots & 0
\end{pmatrix}.
\]

Let \(v^{(m)}\) denote the vertical concatenation of \(m\) copies of \(v\):
\[
v^{(m)} =
\begin{pmatrix}
v \\ v \\ \vdots \\ v
\end{pmatrix},
\qquad
u^{(m)} \text{ defined analogously}.
\]

We then define the \emph{factorization} of \(p\) to be
\(
\hat{p}
\;=\;
\bigl(u^{(m)}\bigr)^{T}\, (\hat{A})^{\,*}\, v^{(m)}.
\)
\end{definition}
The following useful property is immediate and will be preserved throughout all
subsequent constructions.

\begin{lemma}[$\epsilon$-free $\hat{A}$]\label{thm:epsilon-free-hatA}
For the matrix $\hat{A}$ constructed above, every entry $\hat{A}_{i,j}$ is
\emph{$\epsilon$-free}, i.e.\ $1 \not\leq \hat{A}_{i,j}$ (or equivalently
$1 \notin L(\hat{A}_{i,j})$).
\end{lemma}

\begin{proof}
Each block
\(
A_i \;=\; \sum_{a\in \Sigma_i} a \cdot A_a
\)
is a finite sum of concrete letters (or~$0$).  
Hence no entry of $A_i$ accepts the empty word; that is, $1\notin L(e)$ for every
entry $e$ of $A_i$.

Since $A_i^{+} = A_i A_i^{*}$, any entry of $A_i^{+}$ is either $0$ or contains at least one
letter from $\Sigma_i$, and therefore remains $\epsilon$-free.
Because $\hat{A}$ is assembled entirely from these blocks $A_i^{+}$ and zeros,
every entry of $\hat{A}$ is $\epsilon$-free as well.
\end{proof}

Then we can show that our first factorization step yields a provably equivalent expression.
\begin{theorem}\label{thm:regular-factorization}
For every expression \(p\) and its factorized form \(\hat{p}\), we have
\(
L(p)=L(\hat{p}).
\)
\end{theorem}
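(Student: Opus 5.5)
The plan is to argue purely at the level of regular languages, interpreting matrices entrywise through $L(\cdot)$. $L$ is a homomorphism into the $*$-continuous Kleene algebra $\Rat{\Sigma^*}$, and $M(n,\Rat{\Sigma^*})$ is a Kleene algebra by Theorem~\ref{thm:ka-matrix}. So $L((A^*)_{ij}) = (L(A)^*)_{ij}$.

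\textbf{Step 1: star is a union of powers.} I will first establish the standard fact that in matrices over $\Rat{\Sigma^*}$ the star from Theorem~\ref{thm:ka-matrix} equals $\bigcup_{k\ge 0} L(A)^k$. This follows by induction on the dimension using the block formulas together with $*$-continuity of $\Rat{\Sigma^*}$. Equivalently, one can note that $A^*$ is the least solution of $X = I + AX$ and compute this least fixpoint by Kleene iteration. I expect this step to be the main technical obstacle, since it is where one must move from the inductive definition of matrix star to the path semantics. The rest is bookkeeping.

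\textbf{Step 2: path semantics for $A$.} Once Step 1 is available, $L((A^k)_{ij})$ is exactly the set of words $a_1\cdots a_k$ that label a length-$k$ path $i\to j$ in the automaton given by the $A_a$. Hence $L(p)=\bigcup_{i,j:\,u_i=v_j=1} L((A^*)_{ij})$ is the set of words labeling a path from an initial state to a final state.

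\textbf{Step 3: path semantics for $\hat A$.} Index the rows and columns of $\hat A$ by pairs $(s,i)$, where $s$ is the block index and $i$ the automaton state. Then $\hat A_{(s,i),(t,j)} = (A_s^+)_{ij}$ if $s\neq t$, and $0$ otherwise. Expanding $\hat A^k$ gives that $L((\hat A^k)_{(s_0,i),(s_k,j)})$ is the union, over index sequences $s_0,\dots,s_k$ with $s_r\neq s_{r+1}$ and state sequences $i=i_0,\dots,i_k=j$, of the languages $x_1\cdots x_k$. Here each $x_r\in\Sigma_{s_{r-1}}^+$ labels a path $i_{r-1}\to i_r$, because $L((A_s^+)_{ij})$ is the set of nonempty words over $\Sigma_s$ labeling paths $i\to j$.

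\textbf{Inclusion $L(\hat p)\subseteq L(p)$.} Concatenating these subpaths gives a path $i\to j$ labeled $x_1\cdots x_k$. Since $u^{(m)}$ and $v^{(m)}$ only mark copies of initial and final states, every word of $L(\hat p)$ lies in $L(p)$.

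\textbf{Inclusion $L(p)\subseteq L(\hat p)$.} Take $w\in L(p)$ with an accepting path $i\to j$.
\begin{itemize}
\item If $w=1$, use $k=0$: $(\hat A^0)$ is the identity, so $(u^{(m)})^T v^{(m)}\ge u^Tv\ni 1$.
\item Otherwise, take the unique factorization $w=x_1\cdots x_k$ from Lemma~\ref{thm:uniq-fac}, with $x_r\in\Sigma_{c_r}^+$ and $c_r\neq c_{r+1}$. Split the path at the block boundaries to get states $i_0,\dots,i_k$. Set $s_{r-1}=c_r$ for $1\le r\le k$, and pick any $s_k\neq c_k$. Then $w\in L((\hat A^k)_{(s_0,i),(s_k,j)})$, and both endpoints are selected by $u^{(m)}$ and $v^{(m)}$.
\end{itemize}

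The choice of $s_k$ requires $m\ge 2$. If $m=1$, then $\hat A=0$ and the construction degenerates. I would handle this case separately, either by assuming $m\ge2$ (adding a dummy class with no letters, which changes nothing else) or by treating $m=1$ as the trivial factorization $\hat p=p$.
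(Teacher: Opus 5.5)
Your proof is correct. For $L(p)\subseteq L(\hat p)$ it follows the paper: factor $w$, then route $A_{k_1}^+\cdots A_{k_j}^+$ through a power of $\hat A$.

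For $L(\hat p)\subseteq L(p)$ the paper avoids path semantics altogether. It bounds $\hat A$ above by the $mn\times mn$ matrix $B$ all of whose $n\times n$ blocks are $A^*$, observes $I\le B$ and $BB=B$, and deduces $B^*=B$ by star induction, hence $\hat A^*\le B$. It then reads off $(u^{(m)})^TBv^{(m)}=u^TA^*v$. That route gives the stronger, purely equational fact $KA\vDash\hat p\le p$, and it never has to analyse $\hat A^*$.

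Your version instead handles both inclusions uniformly through your Step~1. The paper also needs Step~1, but uses it silently when it asserts $w\in L(u^TA_{k_1}^{|w_1|}\cdots A_{k_j}^{|w_j|}v)$. Your least-prefixpoint justification is sound: by the KA axioms in $M(n,2^{\Sigma^*})$, $A^*$ is the least $X$ with $I+AX\le X$.

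Your explicit index bookkeeping is also more accurate than the paper's.
\begin{itemize}
\item Block row $s$ of $\hat A$ carries $A_s^+$. So the product $A_{k_1}^+\cdots A_{k_j}^+$ sits in block $(k_1,t)$ of $\hat A^{j}$ for $t\neq k_j$, not in block $(k_1,k_j)$ as the paper states.
\item For $m=1$ one gets $\hat A=0$, so $\hat p=u^Tv$ and the statement as written is false; for example, $p=a$ over $\Sigma=\{a\}$.
\end{itemize}
Your fix with an empty dummy class, whose block row of $\hat A$ is zero, is the right repair and does not disturb the later normalization.
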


\begin{proof}
Let
\(
p = u^{T} A^{*} v,
\) with
\(
A = \sum_{a \in \Sigma} a \cdot A_a,
\)
be the matrix representation of \(p\) constructed by Lemma \ref{thm:matrix-rep}, and let
\(
\hat{p}
= \bigl(u^{(m)}\bigr)^{T}\, (\hat{A})^{*}\, v^{(m)}
\)
be its factorized form. By standard unrolling of matrix multiplication and the construction of
\(\hat{A}\), one can proof both inclusions \(L(p)\subseteq L(\hat{p})\) and \(L(\hat{p})\subseteq L(p)\). The full argument is deferred to the Appendix.
\end{proof}

\begin{corollary}\label{thm:ka-factorization}
For every expression \(p\) and its factorized form \(\hat{p}\), we have
\[
\kae{p}{\hat{p}}.
\]
\end{corollary}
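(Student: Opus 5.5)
The plan is to derive the corollary from Theorem~\ref{thm:regular-factorization} by way of the completeness of Kleene algebra~\cite{kozen1994ka}. The key observation is that the corollary asks only for provability in $KA+C$. Pure $KA$ is a weaker theory: every Kleene algebra and interpretation satisfying the commutativity equations of $C$ is in particular a Kleene algebra with an interpretation. So whenever $KA \vDash p = \hat{p}$, we immediately get $\kae{p}{\hat{p}}$. The commutativity equations play no role at this stage.

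\textbf{Step 1.} The matrix expressions are genuine regular expressions. Since $A$, and each $A_i = \sum_{a\in\Sigma_i} a\cdot A_a$, have entries in $T_\Sigma$, the remark following Lemma~\ref{thm:matrix-rep} shows that the entries of $A_i^{+} = A_iA_i^{*}$, of $\hat{A}$, and of $\hat{A}^{*}$ all lie in $T_\Sigma$. Here the star is computed by the block formula of Theorem~\ref{thm:ka-matrix}. Hence $\hat{p} = (u^{(m)})^{T}\hat{A}^{*}v^{(m)}$ is a finite sum of such entries and is itself an element of $T_\Sigma$. Likewise $p$ is identified with $u^{T}A^{*}v$ through Lemma~\ref{thm:matrix-rep}.

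\textbf{Step 2.} By Theorem~\ref{thm:regular-factorization}, $L(p) = L(\hat{p})$. Kozen's completeness theorem states that two regular expressions with the same regular language are provably equal in $KA$. Therefore $KA \vDash p = \hat{p}$. Weakening as explained above then yields $\kae{p}{\hat{p}}$.

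\textbf{Main obstacle.} The only delicate point is Step~1. We must ensure that the language $L(\hat{p})$ used in Theorem~\ref{thm:regular-factorization} is the standard language of the syntactic expression obtained by expanding the matrix star. This holds because $L$ is a homomorphism into the Kleene algebra of regular languages. Matrix operations commute with homomorphisms applied entrywise, and the star of Theorem~\ref{thm:ka-matrix} is built from $+$, $\cdot$ and ${}^{*}$ alone. Consequently, interpreting the entries of $\hat{A}^{*}$ gives exactly the star of the interpreted matrix. Once this is checked, completeness applies verbatim and the corollary follows.
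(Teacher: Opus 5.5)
Your proposal is correct and is essentially the paper's own argument. Theorem~\ref{thm:regular-factorization} gives $L(p)=L(\hat{p})$, Kozen's completeness theorem turns this into $KA\vDash p=\hat{p}$, and this weakens to $KA+C$. Your Step~1 only makes explicit what the paper leaves implicit, namely that $\hat{p}$ is a genuine term of $T_\Sigma$ whose language is computed entrywise.

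One caveat applies equally to your argument and the paper's, since both rely on Theorem~\ref{thm:regular-factorization}. When $C$ has a single class ($m=1$), the block matrix $\hat{A}$ is the zero matrix, so $\hat{p}=u^{T}v$, and the language equality already fails for $p=a$. That degenerate case must be excluded, or handled directly via commutative Kleene algebra.
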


\begin{proof}
We have already shown that \(L(p) = L(\hat{p})\).  
By the completeness theorem for Kleene algebra~\cite{kozen1994ka}, language
equality implies equational provability in Kleene Algebra (even without commutativity conditions).  
Hence \(\kae{p}{\hat{p}}.\)
\end{proof}

\subsection{Supporting Expressions}
Since factorization has already restructured the expression by separating different sub-alphabets, we now turn to normalization \emph{within a single sub-alphabet}~$\Sigma_i$.
The goal of this stage is to decompose each expression over~$\Sigma_i$
into ``atomic'' components whose Parikh images are pairwise disjoint under
commutation.  Formally:

\begin{problem}
Given finitely many expressions $p_1,\ldots,p_n$ over the sub-alphabet
$\Sigma_i$, find expressions $q_1,\ldots,q_m$ over the sub-alphabet $\Sigma_i$ such that for every $p_j$ there
is an index set $I_j \subseteq \{1,\ldots,m\}$ satisfying
\[
\ka{KA}{C}{p_j}{\sum_{k\in I_j} q_k},
\quad\text{and}\quad
\lc(q_k)\; \cap\; \lc(q_{k'}) = \emptyset \;\text{ for } k\neq k'.
\]
\end{problem}
Once such atomic components $\{q_k\}$ are constructed inside each
sub-alphabet~$\Sigma_i$, the normal form $\tilde{p}$ of an expression $p$ is
obtained simply by replacing each subexpression $p_j$ with its canonical
disjoint decomposition $\sum_{k\in I_j} q_k$.

Equivalently, the second condition can be written as follows:
\begin{lemma}\label{thm:construction-disjointness}
If $q_k, q_{k'} \in T_{\Sigma_i}$, then
\(
L_C(q_k)\cap L_C(q_{k'})=\emptyset
\quad\Longleftrightarrow\quad
P(L(q_k))\cap P(L(q_{k'}))=\emptyset .
\)
\end{lemma}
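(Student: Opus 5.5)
The plan is to reduce both sides to a single fact: within one class $\Sigma_i$, the relation $\equiv_C$ restricted to $\Sigma_i^*$ coincides with Parikh equivalence. The two directions then follow by unfolding the definition of $L_C$.

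\textbf{Step 1.} Show that for $w,w'\in\Sigma_i^*$,
\[
w\equiv_C w' \iff \Psi(w)=\Psi(w').
\]
For the forward direction, each swap $ab\mapsto ba$ with $(a,b)\in C$ preserves letter counts. Hence $\equiv_C$, the smallest congruence generated by such swaps, preserves $\Psi$ on all of $\Sigma^*$.

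For the converse, all letters of $\Sigma_i$ pairwise commute under $C$, since $\Sigma_i$ is an equivalence class of the transitive relation $C$. So any word in $\Sigma_i^*$ can be bubble-sorted by adjacent swaps into a canonical sorted word with respect to a fixed ordering of $\Sigma_i$. Two words with the same Parikh vector have the same sorted form, and are therefore $\equiv_C$-equivalent.

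\textbf{Step 2.} Unfold the definition of $L_C$. Recall that $L_C(q)=\{[w] \mid w\in L(q)\}$, the image of $L(q)$ in $\Sigma^*/\equiv_C$; this follows by a routine induction on $q$, since the quotient map is a monoid homomorphism.

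\emph{($\Rightarrow$, contrapositive).} Suppose $x\in P(L(q_k))\cap P(L(q_{k'}))$. Pick $w\in L(q_k)$ and $w'\in L(q_{k'})$ with $\Psi(w)=\Psi(w')=x$. Both words lie in $\Sigma_i^*$ because $q_k,q_{k'}\in T_{\Sigma_i}$. By Step 1, $[w]=[w']$, so this class lies in $L_C(q_k)\cap L_C(q_{k'})$.

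\emph{($\Leftarrow$, contrapositive).} Suppose a class lies in both sets, witnessed by $w\in L(q_k)$ and $w'\in L(q_{k'})$ with $w\equiv_C w'$. Then $\Psi(w)=\Psi(w')$ is a common element of the two Parikh images.

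\textbf{Main obstacle.} The only point needing care is the converse half of Step 1. It is there that transitivity, through the partition into classes, is genuinely used. One must also observe that the sorting swaps stay within $\Sigma_i$, so they do not depend on the behaviour of $C$ on other classes. Everything else is bookkeeping with representatives of equivalence classes.
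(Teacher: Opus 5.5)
Your proof is correct and takes essentially the same route as the paper's: both reduce the claim to the fact that, on $\Sigma_i^*$, $\equiv_C$ coincides with equality of Parikh vectors, and then unfold $L_C$ through witnesses $w\in L(q_k)$, $w'\in L(q_{k'})$. You also supply two details the paper leaves implicit: the adjacent-swap sorting argument for that fact, and the identification of $L_C(q)$ with the image of $L(q)$ under the quotient map.
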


\begin{proof}
Since both expressions are over the same sub-alphabet $\Sigma_i$, two words are
$C$-equivalent iff they have the same Parikh image.
Thus $L_C(q_k)\cap L_C(q_{k'})\neq\emptyset$ holds iff there
exist $w_k\in L(q_k)$ and $w_{k'}\in L(q_{k'})$ with $\Psi(w_k)=\Psi(w_{k'})$, which is
equivalent to $P(L(q_k))\cap P(L(q_{k'}))\neq\emptyset$.
\end{proof}

Thus our construction proceeds in two steps:  
first, we generate \emph{supporting sets} \(\{S_1,\dots,S_n\}\) and \emph{atomic sets} \(\{B_1,\dots,B_m\}\) at the level of Parikh
images (i.e.\ semi-linear sets); next, we apply an inverse Parikh-image
construction to obtain the corresponding expressions, which we call
\emph{atomic expressions} \(\{e_1,\dots,e_m\}\).  
To describe supporting sets, we recall the following standard result:

\begin{lemma}[Finite Partition Induced by a Finite Family of Sets]
Let $U$ be any set, and let $S_1,\dots,S_n\subseteq U$ be finitely many subsets.
Then there exist finitely many (unique up to permutation) non-empty subsets $B_1,\dots,B_m \subseteq U$ such that:
\begin{enumerate}
  \item[\textup{(1)}] (\emph{Disjointness})
    $B_i \cap B_j = \emptyset$ for all $i\neq j$.
  \item[\textup{(2)}] (\emph{Representation})
    Each $S_i$ is expressible as a union
    $S_i = \bigcup_{k\in I_i} B_k$ for a unique $I_i\subseteq\{1,\ldots,m\}$.
  \item[\textup{(3)}] (\emph{Atomic decomposition})
    Every $B_k$ is a Boolean minterm:
    \[
      B_k
      =
      \Bigl(\bigcap_{i\in P_k} S_i\Bigr)
      \cap
      \Bigl(\bigcap_{j\in N_k} (U\setminus S_j)\Bigr),
    \]
    for some $P_k,N_k\subseteq\{1,\ldots,n\}$ with $P_k\cap N_k=\emptyset$.
\end{enumerate}
Thus $\{B_1,\dots,B_m\}$ is a finite partition of $\bigcup_{i=1}^n S_i$ into
pairwise disjoint Boolean minterms.
\end{lemma}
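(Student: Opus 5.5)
The plan is to use the Venn-diagram minterm construction directly. For each sign vector $\sigma \in \{+,-\}^n$ with at least one $+$, I will define
\[
  M_\sigma \;=\; \Bigl(\bigcap_{i:\sigma_i=+} S_i\Bigr)\cap\Bigl(\bigcap_{j:\sigma_j=-}(U\setminus S_j)\Bigr).
\]
There are at most $2^n-1$ such sets. Discarding the empty ones and enumerating the rest as $B_1,\dots,B_m$ gives the candidate family. Each $B_k$ is then by construction a Boolean minterm with $P_k=\{i:\sigma_i=+\}$ and $N_k=\{j:\sigma_j=-\}$, which are disjoint. So (3) holds, and the family is finite and consists of nonempty sets.

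Disjointness (1) follows because distinct sign vectors $\sigma\neq\tau$ differ at some coordinate $i$. Then one of $M_\sigma, M_\tau$ lies inside $S_i$ and the other inside $U\setminus S_i$.

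For (2), I will take $I_i=\{k : B_k\subseteq S_i\}$, i.e.\ the minterms whose sign vector has $\sigma_i=+$. Then $\bigcup_{k\in I_i}B_k\subseteq S_i$ is immediate. Conversely, any $x\in S_i$ determines the sign vector $\sigma(x)$ with $\sigma(x)_j=+$ iff $x\in S_j$. This vector has a $+$ at coordinate $i$, and $x\in M_{\sigma(x)}$, which is therefore nonempty, so it appears as some $B_k$ with $k\in I_i$.

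The same argument shows that the $B_k$ cover $\bigcup_i S_i$, and that each $B_k$ (having some $+$) lies inside that union. So the family is a partition of $\bigcup_i S_i$.

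Uniqueness of $I_i$ follows from disjointness and nonemptiness. If $S_i=\bigcup_{k\in I}B_k$, then $k\in I$ forces $B_k\subseteq S_i$. Conversely, if $B_k\subseteq S_i$ but $k\notin I$, then $B_k$ meets $S_i$ yet is disjoint from every $B_{k'}$ with $k'\in I$, a contradiction.

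The step I expect to need the most care is the claim ``unique up to permutation,'' which must be read relative to requirement (3). In general many partitions satisfy (1)--(2); for instance, one could split a block further. I would therefore argue uniqueness as follows. Any family satisfying (1)--(3) and covering exactly $\bigcup S_i$ must have each $B_k$ contained in one of the nonempty $M_\sigma$, since a minterm with partial sign sets $P_k,N_k$ is a union of full minterms. Then, using (2) applied to each $S_i$ together with disjointness, each nonempty $M_\sigma$ is exactly one $B_k$ (assuming, as intended, that the blocks are the maximal, full-sign-vector minterms). I will state explicitly that this is the intended reading, namely that the $B_k$ are the nonempty atoms of the Boolean algebra generated by $S_1,\dots,S_n$ inside $\bigcup S_i$, under which uniqueness is the standard fact that atoms of a finite Boolean algebra are determined by it.
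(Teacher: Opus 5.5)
Your proposal is correct and is essentially the paper's argument made explicit: the paper only cites the classical fact that the atoms of the finite Boolean algebra generated by $S_1,\dots,S_n$ are its nonempty full minterms, while you construct exactly those minterms and verify (1)--(3), the partition property, and uniqueness of each $I_i$ by hand. For uniqueness of the family, you can drop the ``intended reading'' hedge, but the reason needs fixing: a block lies inside a single full minterm not because partial minterms are unions of full ones, but because (1) and (2) force any block meeting $S_i$ to be one of the blocks composing $S_i$; since (3) makes each block a union of full minterms, every block then equals a full minterm, and your requirement that the blocks cover exactly $\bigcup_i S_i$ is precisely what rules out the extra minterm $U\setminus\bigcup_i S_i$.
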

\begin{proof}
This follows from the classical fact that any finitely-generated Boolean algebra has
finitely many atoms, corresponding exactly to the Boolean minterms generated by
\(\{S_1,\dots,S_n\}\).  
See, e.g., \cite{Sikorski1960-SIKBA-2,Halmos1966-HALLOB,Davey_Priestley_2002}.
\end{proof}

The family $\{S_i\}$ and $\{B_j\}$ is precisely the \emph{supporting sets} and \emph{atomic sets} we require.
Since intersection, complement, and set difference of semilinear sets are all
semilinear, the lemma guarantees that whenever $S_1,\ldots,S_n$ are semilinear,
the resulting blocks $\{B_i\}$ also form a finite family of semilinear sets.

Next we construct an \emph{inverse Parikh image} $P^{-1}$ that maps any
semilinear set back to a regular expression.

\begin{lemma}
For every semilinear set \(S\subseteq\mathbb{N}^k\), there exists a regular
expression \(e\) such that \(P(L(e))=S\).
\end{lemma}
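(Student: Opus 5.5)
The construction works directly from the definition of a semilinear set: a finite union of linear sets, where each linear set is determined by a base vector and finitely many period vectors. We build a regular expression for each linear set and then take their sum. Fix an enumeration $\Sigma=\{a_1,\dots,a_k\}$, where $\Sigma$ is the sub-alphabet in question; in the application this is some $\Sigma_i$, so $k=|\Sigma_i|$. To every vector $\mathbf{x}\in\mathbb{N}^k$ we associate the canonical word
\[
w_{\mathbf{x}} \;\triangleq\; a_1^{x_1}a_2^{x_2}\cdots a_k^{x_k},
\]
which satisfies $\Psi(w_{\mathbf{x}})=\mathbf{x}$.

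The first step handles a single linear set $S=\{\mathbf{b}+\sum_{j=1}^{r} n_j\mathbf{p}_j \mid n_j\in\mathbb{N}\}$. We set
\[
e_S \;\triangleq\; w_{\mathbf{b}}\,(w_{\mathbf{p}_1})^{*}\cdots(w_{\mathbf{p}_r})^{*}.
\]
We then check $P(L(e_S))=S$. Every word in $L(e_S)$ has the form $w_{\mathbf{b}}\,w_{\mathbf{p}_1}^{n_1}\cdots w_{\mathbf{p}_r}^{n_r}$. Since $\Psi$ is a monoid homomorphism from $\Sigma^*$ to $(\mathbb{N}^k,+)$, the Parikh vector of such a word is $\mathbf{b}+\sum_j n_j\mathbf{p}_j$, which lies in $S$. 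Conversely, every element of $S$ arises from some choice of the $n_j$, and hence is the image of the corresponding word.

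The second step handles a general semilinear set $S=\bigcup_{i=1}^{n} S_i$ with each $S_i$ linear. We take $e \triangleq \sum_{i=1}^n e_{S_i}$ and use $P(L(e_1+e_2))=P(L(e_1))\cup P(L(e_2))$, which holds because $L$ maps $+$ to union and $P$ commutes with unions. The degenerate case $S=\emptyset$ is the empty union, and we take $e=0$.

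No step here is a real obstacle; the only care needed is bookkeeping. First, the dimension $k$ must match the sub-alphabet used, so that the resulting expression lies in $T_{\Sigma_i}$, as required for the later use together with Lemma~\ref{thm:construction-disjointness}. Second, zero vectors (base or period) yield the word $1$, which is harmless. The construction is also effective whenever the base and period vectors are given, and this is what makes the overall normalization procedure computable.
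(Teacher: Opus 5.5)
Your proof is correct and follows the paper's construction essentially verbatim: a canonical word for each base and period vector, the expression $w_{\mathbf{b}}\,(w_{\mathbf{p}_1})^{*}\cdots(w_{\mathbf{p}_r})^{*}$ for each linear component, and a sum over the components. Your explicit homomorphism argument for $P(L(e_S))=S$ and your treatment of $S=\emptyset$ via $e=0$ are small additions that the paper leaves implicit.
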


\begin{proof}
See Appendix.
\end{proof}

This allows us to define the inverse Parikh map in a canonical way.

\begin{definition}[Inverse Parikh Image]
The inverse Parikh image \(P^{-1}\) maps any semilinear set \(S\) to the
lexicographically smallest regular expression \(e\) such that \(P(L(e))=S\).
\end{definition}

We now define the decomposition of expressions over a fixed sub-alphabet
\(\Sigma_i\).

\begin{definition}[Decomposition inside $\Sigma_i$]
Let $p_1,\ldots,p_n$ be expressions over the sub-alphabet $\Sigma_i$, and let
$B_1,\ldots,B_m$ be the finitely many atomic subsets (Boolean minterms)
generated by the Parikh images $P(L(p_1)),\ldots,P(L(p_n))$ with respect to any given superset $U$.
For each atom $B_k$, define its corresponding expression
\[
  q_k \;=\; P^{-1}(B_k).
\]

Since every Parikh image $P(L(p_i))$ admits a unique disjoint decomposition
\[
  P(L(p_i)) \;=\; \bigcup_{k\in I_i} B_k
  \qquad\text{for a uniquely determined } I_i \subseteq \{1,\ldots,m\},
\]
we define the rewritten expression
\[
  p_i' \;=\; \sum_{k\in I_i} q_k,
  \qquad\text{with } p_i' := 0 \text{ if } I_i = \emptyset.
\]

The family $\{q_1,\ldots,q_m\}$ is called the set of \emph{atomic expressions}
for the \emph{supporting expressions} $\{p_1,\ldots,p_n\}$ inside the sub-alphabet~$\Sigma_i$, and the
expressions $\{p_1',\ldots,p_n'\}$ form the corresponding \emph{term
rewriting} of $\{p_1,\ldots,p_n\}$.
\end{definition}

\begin{lemma}\label{thm:supp-equiv}
Let $\{q_1,\ldots,q_m\}$ and $\{p_1',\ldots,p_n'\}$be the atomic expression and rewriting terms for
$\{p_1,\ldots,p_n\}$ inside~$\Sigma_i$. Then:
\begin{itemize}
  \item[(1)] $\ka{KA}{C}{p_i}{p_i'}$ for all $i$;
  \item[(2)] $\lc(q_k)\,\cap\,\lc(q_{k'})=\emptyset$ whenever $k\ne k'$.
\end{itemize}
\end{lemma}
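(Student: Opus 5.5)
Both parts reduce to the Parikh-image picture inside the single class $\Sigma_i$, using Theorem~\ref{thm:cka-completeness} and Lemma~\ref{thm:alphabet-commutative} for (1) and Lemma~\ref{thm:construction-disjointness} for (2).

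\textbf{Part (2).} We handle this first because it is immediate. By construction, $P(L(q_k)) = P(L(P^{-1}(B_k))) = B_k$. The atoms $B_k$ and $B_{k'}$ are pairwise disjoint for $k\neq k'$ by the disjointness clause of the finite-partition lemma. Since $q_k,q_{k'}\in T_{\Sigma_i}$, Lemma~\ref{thm:construction-disjointness} converts $P(L(q_k))\cap P(L(q_{k'}))=\emptyset$ into $\lc(q_k)\cap\lc(q_{k'})=\emptyset$.

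\textbf{Part (1): the Parikh-image equality.} Recall $p_i'=\sum_{k\in I_i} q_k$, or $0$ if $I_i=\emptyset$. Parikh images commute with unions, so
\[
P(L(p_i')) = \bigcup_{k\in I_i} P(L(q_k)) = \bigcup_{k\in I_i} B_k = P(L(p_i)),
\]
the last equality being the representation clause of the finite-partition lemma. When $I_i=\emptyset$, both sides are empty. By Theorem~\ref{thm:cka-completeness}, this yields $CKA\vDash p_i = p_i'$.

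\textbf{Part (1): transferring to $KA+C$.} This is the main obstacle: an equation valid in all commutative Kleene algebras must be shown valid in every Kleene algebra $\mathcal{K}$ under every interpretation $I$ satisfying $ab=ba$ for $(a,b)\in C$. Here $\mathcal{K}$ itself need not be commutative. I would pass to the subalgebra $\mathcal{K}_i$ of $\mathcal{K}$ generated by $\{I(a)\mid a\in\Sigma_i\}$, i.e.\ the set of values $I(r)$ for $r\in T_{\Sigma_i}$. This set contains $0,1$ and is closed under $+$, $\cdot$ and ${}^*$. It satisfies the KA axioms: the equational ones and the Horn-style star axioms restrict from $\mathcal{K}$ because the order is defined equationally. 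Hence $\mathcal{K}_i$ is a Kleene algebra. Lemma~\ref{thm:alphabet-commutative}, applied to $\mathcal{K},I$, gives $I(r)I(s)=I(s)I(r)$ for all $r,s\in T_{\Sigma_i}$, so $\mathcal{K}_i$ is commutative. The expressions $p_i,p_i'$ lie in $T_{\Sigma_i}$, so $CKA\vDash p_i=p_i'$ applied to $\mathcal{K}_i$ with $I$ restricted to $\Sigma_i$ gives $I(p_i)=I(p_i')$. Since $\mathcal{K}$ and $I$ were arbitrary, $\ka{KA}{C}{p_i}{p_i'}$.

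The point I would check most carefully is that $\mathcal{K}_i$ is genuinely a Kleene algebra, in particular that the star of $\mathcal{K}$ restricted to $\mathcal{K}_i$ still satisfies the induction axioms. This holds because universal Horn clauses are preserved under passing to subalgebras.
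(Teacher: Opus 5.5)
Your proof is correct and follows the paper's argument: for (2) you use $P(L(q_k))=B_k$, disjointness of the atoms, and Lemma~\ref{thm:construction-disjointness}; for (1) you use equality of Parikh images, Theorem~\ref{thm:cka-completeness}, and Lemma~\ref{thm:alphabet-commutative}. The only difference is that you spell out the transfer from $CKA$ to $KA+C$, by restricting to the commutative subalgebra generated by $\{I(a)\mid a\in\Sigma_i\}$ and noting that Horn axioms are preserved under subalgebras, whereas the paper asserts this step without detail.
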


\begin{proof}
\textbf{(1)}  
By construction, $P(L(p_i'))=P(L(p_i))$.  
Thus, by Lemma~\ref{thm:alphabet-commutative} and the completeness of
commutative Kleene algebra (Theorem~\ref{thm:cka-completeness}),
$\ka{KA}{C}{p_i}{p_i'}$.

\smallskip
\textbf{(2)}  
Each $q_k$ satisfies $P(L(q_k))=B_k$, and the Boolean blocks $B_k$ are disjoint.
Hence $P(L(q_k))\cap P(L(q_{k'}))=\emptyset$ for $k\neq k'$
by Lemma \ref{thm:construction-disjointness}.  Thus $\lc(q_k)\,\cap\,\lc(q_{k'})=\emptyset$.
\end{proof}

Now we define supporting expressions and the normalization procedure for
a factorized expression~\(\hat{p}\).

\begin{definition}[Supporting expressions and normalization]
Let
\[
\hat{p}
\;=\;
\bigl(u^{(m)}\bigr)^{T}\, (\hat{A})^{*}\, v^{(m)},
\]
where
\[
\hat{A} \;=\;
\begin{pmatrix}
0        & A_1^{+} & A_1^{+} & \cdots & A_1^{+} \\
A_2^{+}  & 0       & A_2^{+} & \cdots & A_2^{+} \\
A_3^{+}  & A_3^{+} & 0       & \cdots & A_3^{+} \\
\vdots   & \vdots  & \vdots  & \ddots & \vdots  \\
A_m^{+}  & A_m^{+} & A_m^{+} & \cdots & 0
\end{pmatrix}
\]
is obtained from the factorization step, and each
\(A_i^{+} = (\sum_{a\in\Sigma_i} a\cdot A_a)^{+}\).

\paragraph{(1) Supporting expressions.}
For each subalphabet \(\Sigma_i\), define the supporting set
\[
\mathrm{Support}_i(\hat{p})
   \;=\;
   \{\, a_{pq} \mid a_{pq}\ \text{is an entry of } A_i^{+} \,\}.
\]
This supporting set collects expressions that must later be rewritten as sums of disjoint atomic expressions.

\paragraph{(2) Normalization inside \(\Sigma_i\).}
Let \(S_i \supseteq \mathrm{Support}_i(\hat{p})\) be any set containing the expressions over \(\Sigma_i\) that need to be rewritten later.
Applying the decomposition procedure to every entry \(a_{pq}\in A_i^{+}\)
with respect to \(S_i\) yields its term rewriting \(a_{pq}'\).
We write
\[
\tilde{A}_i \;=\; (\,a_{pq}'\,)_{1\le p,q\le n}
\]
for the matrix obtained by replacing each entry \(a_{pq}\) of \(A_i^{+}\) by its decomposition into a sum of atomic expressions with respect to $S_i$.

\paragraph{(3) Global normalization.}
Define the normalized block matrix
\[
\tilde{A}
\;=\;
\begin{pmatrix}
0           & \tilde{A}_1 & \tilde{A}_1 & \cdots      & \tilde{A}_1 \\
\tilde{A}_2 & 0           & \tilde{A}_2 & \cdots      & \tilde{A}_2 \\
\tilde{A}_3 & \tilde{A}_3 & 0           & \cdots      & \tilde{A}_3 \\
\vdots      & \vdots      & \vdots      & \ddots      & \vdots      \\
\tilde{A}_m & \tilde{A}_m & \tilde{A}_m & \cdots      & 0
\end{pmatrix}.
\]

The \emph{normal form} of \(\hat{p}\) with respect to the family of expression sets
\(\{S_1,\ldots,S_m\}\) is
\[
\tilde{p}
\;=\;
\bigl(u^{(m)}\bigr)^{T}\, (\tilde{A})^{*}\, v^{(m)}.
\]
Here, for each \(i\), the set \(S_i\) is used as the collection of expressions with respect to which the matrix \(A_i^{+}\) is rewritten.
\end{definition}
By construction, we immediately conclude that this rewriting preserves the \(\epsilon\)-free property and yields a provably equivalent expression.

\begin{lemma}[$\epsilon$-free $\tilde{A}$]\label{thm:epsilon-free-tildeA}
Every entry of the matrix $\tilde{A}$ constructed above is
\emph{$\epsilon$-free}, i.e., $ 1 \not\leq \tilde{A}_{i,j}$.
\end{lemma}

\begin{proof}
Each entry $e_{ij}'$ of $\tilde{A}$ is obtained from the corresponding
entry $e_{ij}$ of $\hat{A}$ by rewriting with expressions having exactly the
same Parikh image.  Hence
\(
P(L(e_{ij}')) = P(L(e_{ij})).
\)
By Lemma~\ref{thm:epsilon-free-hatA}, every $e_{ij}$ is $\epsilon$-free, so
\[
1 \notin L(e_{ij})
~\Longrightarrow~
(0,0,\ldots,0) \notin P(L(e_{ij}))
= P(L(e_{ij}'))
~\Longrightarrow~
1 \notin L(e_{ij}').
\]
Thus every entry of $\tilde{A}$ is $\epsilon$-free.
\end{proof}

\begin{theorem}
For any finite family of supporting sets \(\{S_1,\ldots,S_m\}\), the
normalized expression \(\tilde{p}\) satisfies
\(
\kae{\hat{p}}{\tilde{p}}.
\)
\end{theorem}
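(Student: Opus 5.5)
The argument is a congruence argument. We show that $\hat{A}$ and $\tilde{A}$ are entrywise provably equal in $KA+C$. Then $(\hat{A})^{*}$ and $(\tilde{A})^{*}$ are entrywise provably equal. Finally, sandwiching between $(u^{(m)})^T$ and $v^{(m)}$ gives $\kae{\hat{p}}{\tilde{p}}$.

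We work in an arbitrary Kleene algebra $\mathcal{K}$ with an interpretation $I$ satisfying $ab=ba$ for all $(a,b)\in C$. Then $I(\hat{A})$ and $I(\tilde{A})$ are matrices in $M(mn,\mathcal{K})$, which is a Kleene algebra by Theorem~\ref{thm:ka-matrix}.

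The first step is the entrywise equality. Every entry of $\hat{A}$ is either $0$ or an entry $a_{pq}$ of some $A_i^{+}$. The corresponding entry of $\tilde{A}$ is respectively $0$ or the rewriting $a_{pq}'$ computed with respect to $S_i\supseteq \mathrm{Support}_i(\hat p)$. Here $a_{pq}$ is itself one of the supporting expressions over $\Sigma_i$, so Lemma~\ref{thm:supp-equiv}(1) gives $\ka{KA}{C}{a_{pq}}{a_{pq}'}$. This uses Lemma~\ref{thm:alphabet-commutative} to obtain full commutativity inside $\Sigma_i$, together with completeness of $CKA$ (Theorem~\ref{thm:cka-completeness}), since $P(L(a_{pq}'))=P(L(a_{pq}))$.

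One point needs care. $a_{pq}$ is an expression over $\Sigma_i$, because $A_i^{+}$ is built from $A_i=\sum_{a\in\Sigma_i} a\cdot A_a$ via the inductive star formula of Theorem~\ref{thm:ka-matrix}, which uses only letters of $\Sigma_i$ and $0,1$. Lemma~\ref{thm:alphabet-commutative} therefore applies to it. The fact that this lemma is stated for $KA$ plus only the intra-class equations is harmless, since $C$ contains those equations.

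It follows that $I(\hat{A})=I(\tilde{A})$ as elements of $M(mn,\mathcal{K})$. The matrix star of Theorem~\ref{thm:ka-matrix} is defined by an explicit term in the entries, so it is a function of the matrix. Hence $I(\hat{A})^{*}=I(\tilde{A})^{*}$, and also $I((\hat{A})^{*})=I(\hat{A})^{*}$, where $(\hat{A})^{*}$ is read syntactically as the matrix of expressions given by the inductive formula. Here we assume, as the paper's convention does, that $(\hat{A})^{*}$ and $(\tilde{A})^{*}$ are unfolded with the same block decomposition; by the uniqueness clause of Theorem~\ref{thm:ka-matrix} the choice does not matter anyway. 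The observation preceding the factorization definition, that entrywise equality is preserved under $u^{T}(\cdot)v$, then yields $I(\hat p)=I(\tilde p)$. Since $\mathcal{K}$ and $I$ were arbitrary, the theorem follows.

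The main obstacle is justifying that the star of a matrix respects entrywise provable equality. This relies on interpretations commuting with the syntactic matrix-star construction, and a short induction on dimension using the block formula handles it. Note that $\epsilon$-freeness (Lemma~\ref{thm:epsilon-free-tildeA}) is not needed here; it matters only for the later language comparison.
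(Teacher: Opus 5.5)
Your proposal is correct and follows the paper's proof. Both arguments get entrywise provable equality of $\hat{A}$ and $\tilde{A}$ from Lemma~\ref{thm:supp-equiv}(1), then lift it through the matrix star and the $0$--$1$ sandwich. Your version also spells out why matrix star respects entrywise equality (interpretations commute with the block formula), a step the paper leaves implicit.
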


\begin{proof}
By Lemma~\ref{thm:supp-equiv}, each entry \(a_{pq}\) of the block
\(A_i^{+}\) is equivalent (under~\(C\)) to its rewritten form \(a_{pq}'\) in
\(\tilde{A}_i\).  Hence every entry of \(\hat{A}\) is provably equal to the
corresponding entry of \(\tilde{A}\), and therefore
\[
\kae
{\bigl(u^{(m)}\bigr)^{T}(\hat{A})^{*}v^{(m)}}
{
\bigl(u^{(m)}\bigr)^{T}(\tilde{A})^{*}v^{(m)}}.
\]
That is, \(\kae{\hat{p}}{\tilde{p}}\).
\end{proof}

\subsection{Equivalence}
Finally, we can state the equivalence. We first begin by the equivalence on words, then we go to the equivalence on the language.

\begin{lemma}[Commutation Equivalence via Factorization]
\label{thm:global-parikh}
Let $w_1, w_2 \in \Sigma^*$ with $C$-factorizations
\[
w_1 = w_{11} w_{12} \cdots w_{1n_1}
\quad\text{and}\quad
w_2 = w_{21} w_{22} \cdots w_{2n_2}.
\]
Then
\[
w_1 \equiv_C w_2
\quad\Longleftrightarrow\quad
n_1 = n_2
\;\text{ and }\;
\forall i \in [1,n_1],\;
\Psi(w_{1i}) = \Psi(w_{2i}).
\]
\end{lemma}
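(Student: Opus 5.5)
We prove the two directions separately. The ($\Leftarrow$) direction is direct, and the ($\Rightarrow$) direction follows from an invariant of the factorization that is preserved by every single swap.

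\textbf{($\Leftarrow$).} Assume $n_1=n_2$ and $\Psi(w_{1i})=\Psi(w_{2i})$ for every $i$. By Definition~\ref{def:c-fact}, each block $w_{1i}$ lies in $\Sigma_{k}^{+}$ for a single class $\Sigma_k$. Equal Parikh vectors force $w_{2i}$ to use exactly the same letters, so $w_{2i}\in\Sigma_k^{+}$ as well.

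Inside one class every pair of letters commutes, since $C$ is transitive and hence $\Sigma_k\times\Sigma_k\subseteq C$. Two words over $\Sigma_k$ with the same Parikh vector can therefore be reached from one another by adjacent swaps; for instance, bubble-sort both into a fixed order of letters. Thus $w_{1i}\equiv_C w_{2i}$ for each $i$. Since $\equiv_C$ is a congruence, concatenating the blocks gives $w_1\equiv_C w_2$. The degenerate case $w_1=1$ needs no separate argument: the Parikh condition forces $w_2=1$.

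\textbf{($\Rightarrow$).} Because $\equiv_C$ is the reflexive--transitive closure of single adjacent swaps $xaby\mapsto xbay$ with $(a,b)\in C$, it suffices to show the following invariant. If $w'$ is obtained from $w$ by one such swap, then their factorizations have the same number of blocks and blockwise equal Parikh vectors. The stated condition is clearly transitive, so induction on the number of swaps then finishes the proof.

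To check the invariant, take a swap with $(a,b)\in C$. Transitivity of $C$, i.e.\ the fact that it is an equivalence relation, puts $a$ and $b$ in the same class $\Sigma_k$. Consequently $a$ and $b$ are adjacent letters of the same class, so they lie in the same block of $w$; no block boundary can fall between them, because adjacent blocks come from different classes.

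The swap permutes letters only inside that block. Afterwards the block is still a nonempty word over $\Sigma_k$ with the same Parikh vector. The neighbouring blocks are untouched and their classes still differ from $k$, so the resulting decomposition of $w'$ satisfies all conditions of Definition~\ref{def:c-fact}. By uniqueness of factorization (Lemma~\ref{thm:uniq-fac}), this decomposition \emph{is} the factorization of $w'$. Hence the number of blocks is unchanged and each block keeps its Parikh vector.

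\textbf{Main obstacle.} The delicate point is the claim that a swap never crosses a block boundary, together with the appeal to uniqueness of factorization so that we may compare the factorizations of $w$ and $w'$ directly. This is exactly where transitivity of $C$ is used. Without it, $a$ and $b$ could commute while lying in different ``classes'', and the block structure would not be invariant. Everything else is routine bookkeeping.
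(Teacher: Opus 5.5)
Your proof is correct and follows the same route as the paper. For ($\Leftarrow$) you rearrange each block by intra-class swaps and use that $\equiv_C$ is a congruence; for ($\Rightarrow$) you rely on the fact that a swap never crosses a block boundary. Your single-swap invariant plus uniqueness of factorization (Lemma~\ref{thm:uniq-fac}) is a more careful version of the paper's one-line claim that ``commutation steps can only permute letters within equivalence classes,'' and it gets blockwise Parikh equality directly rather than via Theorem~\ref{thm:cka-completeness} as the paper does.
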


\begin{proof}
It is straightforward to show. Full details are delayed to the Appendix.
\end{proof}

With this equivalence on words in hand, we can reduce equivalence of expressions to equivalence of the corresponding languages.

\begin{theorem}\label{thm:coin-I-regular}
Let $\hat{p}$ and $\hat{q}$ be expressions obtained via factorization, and let
$\{S_1,\ldots,S_m\}$ be the supporting sets defined by
\[
S_i \;=\; \mathrm{Support}_i(\hat{p}) \,\cup\, \mathrm{Support}_i(\hat{q}),
\qquad 1 \le i \le m .
\]
Let $\tilde{p}$ and $\tilde{q}$ be the corresponding normal forms under these
supporting sets. Then
\[
\lc(\tilde{p}) = \lc(\tilde{q})
\quad\Longleftrightarrow\quad
L(\tilde{p}) = L(\tilde{q}).
\]
\end{theorem}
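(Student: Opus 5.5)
The backward direction is immediate: $\lc(e)$ is the image of $L(e)$ under the quotient map $\Sigma^*\to\Sigma^*/\equiv_C$, so $L(\tilde p)=L(\tilde q)$ gives $\lc(\tilde p)=\lc(\tilde q)$. For the forward direction, by symmetry I will show $L(\tilde p)\subseteq L(\tilde q)$. Fix $w\in L(\tilde p)$. Since $\lc(\tilde p)=\lc(\tilde q)$, there is some $w'\in L(\tilde q)$ with $w\equiv_C w'$. The goal is to prove $w\in L(\tilde q)$ itself.

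\textbf{Step 1: a block-level description of $L(\tilde p)$.} I will unroll $(\tilde A)^*$ as the sum of all finite paths through the block matrix, as in the proof of Theorem~\ref{thm:regular-factorization}. A word of $L(\tilde p)$ is then a concatenation $x_1x_2\cdots x_r$. Each $x_j$ lies in $L(e_j)$, where $e_j$ is an entry of some $\tilde A_{k_j}$. The path moves between distinct block indices, so $k_j\neq k_{j+1}$. Each $x_j$ is a nonempty word in $\Sigma_{k_j}^+$ by Lemma~\ref{thm:epsilon-free-tildeA}. By Lemma~\ref{thm:uniq-fac}, $x_1\cdots x_r$ is therefore exactly the $C$-factorization of the word; the case $r=0$ corresponds to $w=1$. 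Every entry of $\tilde A_{k}$ is a sum of atomic expressions $q$ over $\Sigma_k$ for the common supporting family $S_k$. This family is shared by $\tilde p$ and $\tilde q$, because $S_k$ is defined as the union of both supports. Hence each block $x_j$ lies in $L(q)$ for some atomic $q$, with the associated state indices forming a path.

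\textbf{Step 2: transfer along $\equiv_C$.} Take the factorization $w=x_1\cdots x_r$ witnessing $w\in L(\tilde p)$ and the factorization $w'=y_1\cdots y_r$ witnessing $w'\in L(\tilde q)$. By Lemma~\ref{thm:global-parikh} they have the same number of blocks, and $\Psi(x_j)=\Psi(y_j)$ for every $j$. The classes also agree: the Parikh vector of a nonempty block determines its class. Suppose $y_j\in L(q)$, where $q$ is an atomic summand of the $\tilde q$-path entry at step $j$. Then $\Psi(x_j)\in P(L(q))$. I now need $x_j\in L(q)$, so that $w$ follows the same path of $\tilde q$ that $w'$ follows.

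\textbf{Step 3, the main obstacle.} Atomic expressions are determined only up to Parikh image: $q=P^{-1}(B)$ contains some word with every Parikh vector in $B$, but not necessarily every word with that vector. So $x_j\in L(q)$ is not automatic. My plan is to exploit how the normal form was constructed. Each entry of $\tilde A_k$ is a sum over a set $I$ of atomic $q$'s, so the block language at a path position is $\bigcup_{q\in I}L(q)$. I will argue at the level of $\lc$ instead. For $x_j$ the relevant set is $\lc$ of the atomic expression containing it, and atomic expressions have pairwise disjoint $\lc$-images by Lemma~\ref{thm:supp-equiv}(2). So $x_j$ and $y_j$ lie in the unique atom whose Parikh set contains their common Parikh vector. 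That atom is a summand of the $\tilde p$-entry and also of the $\tilde q$-entry, so $w$ follows the same entry path in $\tilde q$ and thus $w\in \lc^{-1}$-saturations of it.

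To close the gap to literal membership, I would try first to strengthen the inverse Parikh image so that $L(P^{-1}(B))$ is closed under $\equiv_C$. This is possible since $B$ is semilinear and $\Sigma_k$ is fully commutative. One option is a regular expression in a fixed letter order that is saturated, e.g.\ by taking all shuffles of a linear-set generator star. If $L(P^{-1}(B))$ is not regular-closed in general, I would fall back on arguing that the equivalence is only needed modulo the choice of atoms, i.e.\ that both $\tilde p,\tilde q$ use the identical expressions $q_k$. In that case the languages coincide as the same unions over identical atom sequences. In either version, this saturation question is the delicate step.
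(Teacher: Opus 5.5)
Your Steps 1 and 2 match the paper's argument. As written, though, the proof is not finished. You leave literal membership $w\in L(\tilde q)$ as an unresolved ``delicate step,'' and neither remedy you propose works. Making $L(P^{-1}(B))$ closed under $\equiv_C$ is impossible in general, because the commutative closure of a regular language need not be regular: for $B=\{(n,n)\mid n\in\mathbb{N}\}$ over $\{a,b\}$ the language would have to be $\{x\mid |x|_a=|x|_b\}$. Moreover, changing $P^{-1}$ changes $\tilde p$ and $\tilde q$, so you would no longer be proving the stated theorem. Your fallback (``the languages coincide as the same unions over identical atom sequences'') assumes that $\tilde p$ and $\tilde q$ admit the same atom sequences, which is exactly what has to be shown.

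The missing step is already in your hands, and it is how the paper closes the argument. You never need $x_j$ to lie in an arbitrary atom with the right Parikh vector, only in the specific atom used on the path of $w'$. By your Step 1, the $\tilde p$-path for $w$ gives atoms $q_j$ (summands of the entries along that path) with $x_j\in L(q_j)$ literally. Likewise, the $\tilde q$-path for $w'$ gives atoms $q'_j$ with $y_j\in L(q'_j)$ and $L(q'_1\cdots q'_r)\subseteq L(\tilde q)$. Since $x_j\equiv_C y_j$, the class $[x_j]$ lies in $\lc(q_j)\cap\lc(q'_j)$. Both are atoms for the same sub-alphabet, generated from the shared supporting set $S_{k_j}$, so pairwise disjointness (Lemma~\ref{thm:supp-equiv}(2)) forces $q_j=q'_j$ as the very same expression. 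Hence $w=x_1\cdots x_r\in L(q'_1\cdots q'_r)\subseteq L(\tilde q)$.

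Your Step 3 essentially states this identification, but you then read it only as membership up to $\equiv_C$. No saturation of $L(P^{-1}(B))$ is needed. The point of the shared supporting sets is that $\tilde p$ and $\tilde q$ use identical atom expressions, and disjointness pins down which one each block belongs to.
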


\begin{proof}
The implication $L(\tilde{p}) = L(\tilde{q}) \Rightarrow \lc(\tilde{p}) = \lc(\tilde{q})$ is trivial.

\smallskip
For the converse, assume \(\lc(\tilde{p}) = \lc(\tilde{q})\).
Let \(w \in L(\tilde{p})\) and let
\[
  w = w_1 w_2 \cdots w_n
\]
be its unique \(C\)-factorization.  
We show that \(w \in L(\tilde{q})\).

If \(w = 1\), then \(1 \in L(\tilde{p})\).  
Since \(\lc(\tilde{p}) = \lc(\tilde{q})\), it follows that
\(1 \in L(\tilde{q})\) as well.

Now assume \(w \neq 1\).
Because \(\lc(\tilde{p}) = \lc(\tilde{q})\), there exists
\(w' \in L(\tilde{q})\) such that \(w' \equiv_C w\).
By Theorem~\ref{thm:global-parikh},
\[
w' = w_1' w_2' \cdots w_n',
\qquad
w_i, w_i' \in \Sigma_{k_i}^+,
\qquad
\Psi(w_{i}) = \Psi(w_{i}').
\]

Write
\[
\tilde{p} = (u^{(m)})^{T}(\tilde{A}_p)^{*} v^{(m)},
\qquad
\tilde{q} = (u^{(m)})^{T}(\tilde{A}_q)^{*} v^{(m)}.
\]
Since $\tilde{p}$ and $\tilde{q}$ are built from the same supporting sets, each
entry of $\tilde{A}_p$ and $\tilde{A}_q$ is $0$ or a sum of the same atoms
$q_1,\ldots,q_M$, and these atoms satisfy
\[
\lc(q_k)\cap \lc(q_{k'})=\emptyset \qquad (k\neq k').
\]

Because $u^{(m)}$ and $v^{(m)}$ are $0$--$1$ vectors, and each nonzero entry of
$\tilde{A}_p$ is a sum of atoms all drawn from the same sub-alphabet
$\Sigma_i$, and moreover every entry is $\epsilon$-free, we have
\[
w \in L\bigl((u^{(m)})^{T} (\tilde{A}_p)^{n} v^{(m)}\bigr).\]
Expanding the product $(u^{(m)})^{T}(\tilde{A}_p)^{n}v^{(m)}$ yields a unique
sequence of atoms
\(
q_1,\ldots,q_n
\)
such that
\[
w_i \in L(q_i)
\qquad\text{and}\qquad
L(q_1 q_2 \cdots q_n)\subseteq L(\tilde{p}).
\]

Similarly, from $w'\in L(\tilde{q})$ we obtain atoms
\(
q_{1}',\ldots,q_{n}'
\)
such that
\[
w_i' \in L(q_i')
\quad\text{and}\quad
L(q_1' q_2' \cdots q_n')\subseteq L(\tilde{q}).
\]

For each $i$, the words $w_i$ and $w_i'$ lie in the same class $\Sigma_{k_i}$.
Atoms inside $\Sigma_{k_i}$ are pairwise $C$-disjoint:
\[
\lc(q_k)\cap \lc(q_{k'})=\emptyset\qquad (k\neq k').
\]
Since $w_i \equiv_{C} w_i'$ (by $P(w_i)=P(w_i')$ and Theorem \ref{thm:cka-completeness}) and $w_i\in L(q_i)$, $w_i'\in L(q_i')$, the only
possible atom is the same one, hence $q_i = q_i'$ for all $i$.

Thus
\[
w \in L(q_1\cdots q_n)
   = L(q_1'\cdots q_n')
   \subseteq L(\tilde{q}),
\]
so $L(\tilde{p}) \subseteq L(\tilde{q})$.
The reverse inclusion is symmetric, hence
$L(\tilde{p}) = L(\tilde{q})$.
\end{proof}

Now that all the necessary lemmas are in place, we can conclude the desired coincidence result.

\begin{theorem}[Coincidence of $KA+C$ and $KA^{*}+C$]\label{thm:coin}
Let $C$ be a transitive commutativity condition.
Then for all expressions $p,q$,
\[
\ka{KA}{C}{p}{q}
\quad\Longleftrightarrow\quad
\ka{KA^{*}}{C}{p}{q}.
\]
\end{theorem}

\begin{proof}
The forward direction
\[
\ka{KA}{C}{p}{q}
\;\Longrightarrow\;
\ka{KA^{*}}{C}{p}{q}
\]
is immediate.

\smallskip
For the converse, assume
\[
\ka{KA^{*}}{C}{p}{q}.
\]
By Theorem~\ref{thm:equiv}, this is equivalent to
\[
\lc(p) = \lc(q).
\]

Let $\hat{p}$ and $\hat{q}$ be the factorized forms of $p$ and $q$,
and let $\{S_1,\ldots,S_m\}$ be the supporting sets defined by
\[
S_i \;=\; \mathrm{Support}_i(\hat{p}) \cup \mathrm{Support}_i(\hat{q}),
\qquad 1\le i\le m.
\]
Let $\tilde{p}$ and $\tilde{q}$ be the corresponding normal forms.
Since factorization and normalization preserve $C$-equivalence,
\[
\lc(\tilde{p}) = \lc(p) = \lc(q) = \lc(\tilde{q}).
\]

By Theorem~\ref{thm:coin-I-regular},
\[
L(\tilde{p}) = L(\tilde{q}).
\]
By completeness of Kleene algebra~\cite{kozen1994ka},
\[
\kae{\tilde{p}}{\tilde{q}},
\]

Finally, since factorization and normalization are provably correct
(Theorems~\ref{thm:ka-factorization} and~\ref{thm:supp-equiv}), we have
\[
\kae{p=\hat{p}=\tilde{p}}{\tilde{q}=\hat{q}=q},
\]
This completes the proof.
\end{proof}

\begin{corollary}[Decidability of Equivalence]
The equivalence problem
\(
\ka{KA}{C}{e_1}{e_2}
\)
is decidable whenever \(C\) is transitive.
\end{corollary}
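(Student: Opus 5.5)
By Theorem~\ref{thm:coin}, when $C$ is transitive, $\ka{KA}{C}{e_1}{e_2}$ holds if and only if $\ka{KA^{*}}{C}{e_1}{e_2}$ holds. It therefore suffices to decide the latter. The \emph{only if} direction of the Corollary on decidability of \kcs\ already does this for transitive $C$. Concretely, Theorem~\ref{thm:equiv} rewrites the question as $\lc(e_1)=\lc(e_2)$, and the classical results of \citet{Ibarra78} and \citet{regtrans82} show that equality of regular languages modulo a transitive commutativity relation is decidable.

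A more self-contained, constructive procedure can also be read off the proof of Theorem~\ref{thm:coin}, and I would present it to make the algorithm explicit.
\begin{enumerate}
\item Compute the partition $\Sigma/C$. This is possible because $C$ is a finite equivalence relation.
\item Compute the matrix representations of $e_1$ and $e_2$ via Lemma~\ref{thm:matrix-rep} (an effective automaton construction), and from these the factorized forms $\hat{e}_1$ and $\hat{e}_2$.
\item Collect the supporting sets $S_i=\mathrm{Support}_i(\hat{e}_1)\cup\mathrm{Support}_i(\hat{e}_2)$.
\item Compute the Parikh images of the supporting expressions as semilinear sets (effective by \cite{parikh1966context}), then their Boolean minterms $B_k$. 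Semilinear sets are effectively closed under the Boolean operations, since they are exactly the Presburger-definable sets.
\item Compute $q_k=P^{-1}(B_k)$ by a search over regular expressions, ordered lexicographically. For each candidate $e$, test $P(L(e))=B_k$; this equality test is decidable in Presburger arithmetic. The search terminates because, by the inverse-Parikh lemma, some such $e$ exists.
\item Form the normal forms $\tilde{e}_1$ and $\tilde{e}_2$.
\end{enumerate}

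Correctness follows by chaining three earlier results. Theorem~\ref{thm:coin} together with Theorem~\ref{thm:equiv} reduces the question to $\lc(e_1)=\lc(e_2)$. Since the normalization steps preserve $\lc$, this is equivalent to $\lc(\tilde{e}_1)=\lc(\tilde{e}_2)$. By Theorem~\ref{thm:coin-I-regular}, that is equivalent to $L(\tilde{e}_1)=L(\tilde{e}_2)$. The last equality is ordinary regular-language equivalence, which is decidable, for instance via minimal DFAs.

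The main point needing care is effectiveness. Every step must be computable, and for the minterm decomposition the index sets $I_i$ must themselves be computable. They are: whether $B_k\subseteq P(L(p_i))$ holds is a Presburger sentence. Apart from this, the argument is a direct assembly of results already established, and citing the \kcs\ decidability corollary through Theorem~\ref{thm:coin} already suffices on its own.
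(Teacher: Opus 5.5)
Your main argument is exactly the paper's proof. Theorem~\ref{thm:coin} reduces $KA+C$ to $KA^*+C$, Theorem~\ref{thm:equiv} turns that into $\lc(e_1)=\lc(e_2)$, and \citet{regtrans82} decides this for transitive $C$. One small slip: the half of the $KA^*+C$ decidability corollary you need is its ``if'' direction (transitive $\Rightarrow$ decidable), not the ``only if'' direction.

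Your supplementary algorithm is a genuinely different and worthwhile route. With soundness of $\lc$ and completeness of $KA$, the normal forms give $KA+C\vDash e_1=e_2$ iff $L(\tilde e_1)=L(\tilde e_2)$. So no external trace-language decision procedure is needed, and you reprove decidability of regular trace-language equivalence for transitive $C$ along the way. Two details need fixing if you present it.

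First, the construction of $\hat A$ tacitly assumes $m\ge 2$. If $C=\Sigma\times\Sigma$, then $\hat A$ is the single zero block and $\hat e=u^{T}v$. For $\Sigma=\{a,b\}$ your procedure would then declare $a$ and $b$ equivalent, since both normal forms denote $\emptyset$. You can treat $m=1$ directly via Lemma~\ref{thm:alphabet-commutative} and Theorem~\ref{thm:cka-completeness} by comparing Parikh images. Alternatively, adjoin a fresh letter as an extra singleton class.

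Second, a search ``ordered lexicographically'' need not terminate. In plain lexicographic order a string can have infinitely many predecessors: $a, aa, aaa,\dots$ all precede $b$. Enumerate expressions over $\Sigma_i$ in shortlex order instead. More simply, apply the explicit construction from the proof of the inverse-Parikh lemma to a computed semilinear representation of $B_k$.
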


\begin{proof}
This follows immediately by combining the coincidence theorem, Theorem~\ref{thm:coin}, between \(KA+C\) and \(KA^*+C\), Kozen's correspondence (Theorem \ref{thm:equiv}) between \(KA^*+C\) and regular languages, and the decision procedure for regular languages under transitive commutativity conditions~\cite{regtrans82}.
\end{proof}
\section{Undecidability of Universality}\label{sec:undec}

Prior work~\cite{kuzuetsov23kac} has shown that equivalence in Kleene algebra is undecidable over the four-letter alphabet
\(\{a,b,c,d\}\) under the commutativity condition
\[
\{(a,c),(a,d),(b,c),(b,d)\}.
\]
while \citet{azevedo25kac} considers an even larger commutativity condition of a similar form.

In this section, we strengthen this result in two directions. First, we show that undecidability already arises in the \emph{minimal non-transitive} setting. Second, we show undecidability of the \emph{universality problem}, that is, whether a given expression $e$ is equivalent to the universal language \(\Sigma^*\). More specifically, we prove that universality for Kleene algebra remains undecidable under the commutativity condition
\[
C=\{(a,b),(b,c)\}.
\]

Our proof follows a structure similar to that of \citet{kuzuetsov23kac}, combined with a modified construction inspired by \citet{Ibarra78} and the coincidence Theorem~\ref{thm:coin} from the previous section.

The key idea is to show that reasoning from commutativity conditions over all Kleene algebras, not necessarily $*$-continuous ones, is still expressive enough to encode a simple form of non-halting behavior of Turing machines.  For this purpose, following \citet{kuzuetsov23kac}, we use the notion of \emph{$c$-looping}.

Intuitively, a Turing machine is \emph{$c$-looping} if it has a special \emph{capturing} state $c$ such that, once the machine enters $c$, it remains there forever. Thus, $c$ is not a halting state, and reaching $c$ guarantees non-halting.
We construct a regular expression $T(M)$ such that
\begin{itemize}
    \item if the machine $M$ is $c$-looping, then \(\kae{T(M)}{\Sigma^*}\);
    \item if \(\ka{KA^*}{C}{T(M)}{\Sigma^*}\), then $M$ does not halt.
\end{itemize}

The undecidability result then follows from the fact that the sets
\[
\{M \mid M \text{ is $c$-looping}\}
\qquad\text{and}\qquad
\{M \mid M \text{ halts}\}
\]
are \textit{recursively inseparable}: there is no decidable set that contains the first set and is disjoint from the second. From the two properties above, the set
\[
\{M \mid \kae{T(M)}{\Sigma^*}\}
\]
cannot be decidable, because it separates these sets. Therefore, universality under commutativity conditions over the class of all Kleene algebras is undecidable.

\subsection{Definition}

Before presenting the proof, we first fix the necessary definitions. We consider a deterministic Turing machine with a semi-infinite tape,
\[
M=(Q,\Gamma,\delta,q_0,q_f),
\]
where \(Q\) is a finite set of states, \(\Gamma=\{0,1,B\}\) is the tape alphabet (with \(B\) denoting the blank symbol), \(q_0\in Q\) is the initial state, and \(q_f\in Q\) is the halting state with \(q_0\neq q_f\). The tape head is initially positioned at the left endpoint of the tape. The transition function
\[
\delta:(Q\setminus\{q_f\})\times \Gamma \to Q\times \{0,1\}\times \{\mathit{Left},\mathit{Right},\mathit{Stay}\}
\]
specifies that whenever the machine is in a non-halting state and reads a tape symbol in \(\Gamma\), it deterministicly moves to a next state, writes either \(0\) or \(1\) on the current tape cell, and then either moves the head left, moves it right, or keeps it in place.

In addition, we distinguish a special \emph{capturing state} \(q_c\in Q\) with the property that, once the machine enters \(q_c\), it remains there forever without moving the head. Concretely, we assume that the transition function on \(q_c\) is given by
\[
\delta(q_c,\sigma)=(q_c,0,\mathit{Stay})
\qquad\text{for every }\sigma\in\Gamma,
\]
so that once the machine reaches the capturing state, it never leaves it and never moves the head.

Without loss of generality, we assume that \(M\) never overwrites a tape symbol with the blank symbol \(B\). We also assume that the machine starts with an empty tape, that is, every tape cell initially contains the blank symbol \(B\), and that the tape head is initially positioned at the left endpoint of the tape.

Under these conventions, every configuration of \(M\) can be written in the form
\[
xqyB,
\]
where \(x,y\in\{0,1\}^*\) and \(q\in Q\). Intuitively, this means that the tape contains the finite non-blank string \(xy\), while all tape cells to the right of this segment contain the blank symbol \(B\). The machine is currently in state \(q\), and the tape head is positioned between \(x\) and \(y\), scanning the first symbol of \(y\). Thus, \(x\) denotes the tape content strictly to the left of the head, while \(y\) denotes the tape content at and to the right of the head.

Here we adapt the construction of \citet{Ibarra78} from general Turing machines to Turing machines with semi-infinite tapes. At a technical level, our proof requires a procedure for checking whether two configurations \(xqyB\) and \(x'q'y'B\) form a valid single-step transition of the machine. For a semi-infinite tape, given a configuration \(xqyB\), the relative position of each tape symbol in \(x\) and \(y\) is always well defined, since the left endpoint never moves. This makes the case analysis comparatively manageable.

By contrast, for a general Turing machine with a bi-infinite tape, there is no fixed left endpoint. In that setting, each step requires an additional case analysis to determine whether a new leftmost symbol has been created, which further splits into subcases. Given the formidable complexity of the final construction, as the reader will see later in Section~\ref{subsec:halting} and \ref{subsec:c-looping}, it is unclear whether such an argument would remain provable, or even humanly manageable, in \(KA+C\). To make the technical development accessible and to simplify the theoretical treatment, we therefore work instead with semi-infinite tapes.

\subsection{Encoding Halting as a Regular Trace Language}\label{subsec:halting}

In this section, we encode the halting problem as a regular trace language. Rather than working directly over the minimal non-transitive commutativity condition \(C=\{(a,b),(b,c)\}\), we first use a larger alphabet for readability and for easier theoretical development. This encoding will later be translated into a minimal commutativity condition \(C=\{(a,b),(b,c)\}\) in Section~\ref{subsec:undec}. In that later encoding, we keep the letter \(b\) fixed and use the letters \(a\) and \(c\) to encode all remaining symbols.

For now, let the alphabet be
\[
\Sigma=\{0,1,B,\#\}\cup Q\cup\{b\},
\]
equipped with the commutativity condition that \(b\) commutes with every other symbol, while no other pair of symbols commutes:
\[
C=\{(\sigma,b)\mid \sigma\in \{0,1,B,\#\}\cup Q\}.
\]

For convenience, we adopt the following terminology:
\begin{itemize}
    \item a \emph{word} is an element of \(\Sigma^*\);
    \item a \emph{trace} is an element of \(\Sigma^*/C\);
    \item an \emph{execution} is a trace consisting of a sequence of TM configurations that starts from an initial configuration and ends in a halting configuration.
\end{itemize}

Our encoding begins with the following set of traces, equipped with an auxiliary counter symbol \(b\) that counts the number of non-\(b\) symbols in a trace:
\[
\begin{aligned}
R \defeq \{\, xb^{|x|} \mid\ &x=\#ID_0\#\cdots\#ID_k,\\
&ID_0,\ldots,ID_k \text{ are configurations of } M,\\
&ID_0 \text{ is the initial configuration of } M,\\
&ID_k \text{ is a halting configuration of } M\,\}.
\end{aligned}
\]
The non-\(b\) part of such a trace may look like a valid execution, except that we have not yet checked whether every consecutive pair of configurations \(ID_i\) and \(ID_{i+1}\) forms a valid single-step transition. We therefore define \(H(M)\subseteq R\) to be the subset consisting of those traces that correspond to genuine halting executions of \(M\).

Our goal in this section is to recognize the complement of \(H(M)\), and ultimately to show that it satisfies the universality property
\[
\ka{KA^*}{C}{\overline{H}(M)}{\Sigma^*},
\]
since \(H(M)\) should be empty in the cases of interest.

The construction proceeds as follows. We define a regular expression
\[
\overline{H}(M) \defeq e_{\mathit{illegal}} + e_{\mathit{invalid\_execution}}(M) + \Sigma^* q_c \Sigma^*
\]
to recognize the complement of \(H(M)\), where
\begin{itemize}
    \item \(e_{\mathit{illegal}}\) accepts the complement of \(R\);
    \item \(e_{\mathit{invalid\_execution}}(M)\) accepts a superset of the traces in \(R\setminus H(M)\);
    \item \(\Sigma^* q_c \Sigma^*\) denotes those executions that enter the capturing state somewhere along the trace, and therefore do not halt. This term is technically redundant in the present section when proving
    \[
    \ka{KA^*}{C}{\overline{H}(M)}{\Sigma^*},
    \]
    since any finite execution containing \(q_c\) is already invalid. However, we can show this redundancy only in \(KA^*+C\), not in \(KA+C\). For this reason, we keep the term explicitly, as it will play an essential role in the next section, where we will additionally show that
    \[
    KA + C \vDash \overline{H}(M) = \Sigma^*
    \]
    whenever \(M\) is \(c\)-looping.
\end{itemize}

\subsubsection{Encoding of $e_{\mathit{illegal}}$} We first begin with the encoding of expression $e_{\mathit{illegal}}$.
For convenience, write
\[
\Sigma_{\setminus b} \defeq 0+1+B+\#+\sum_{q\in Q}q.
\]

We begin by separating those words in which the number of occurrences of \(b\) does not match the number of all other symbols.

To this end, define
\[
e_{\mathit{mismatch\_b}}
\defeq
b^+(\Sigma_{\setminus b}b)^*
\;+\;
\Sigma_{\setminus b}\bigl(\Sigma_{\setminus b}(1+b)\bigr)^*,
\]
and
\[
e_{\mathit{match\_b}} \defeq (\Sigma_{\setminus b}b)^*.
\]
Intuitively, \(e_{\mathit{match\_b}}\) describes those words in which each symbol from \(\Sigma_{\setminus b}\) is paired with exactly one \(b\), while \(e_{\mathit{mismatch\_b}}\) describes the remaining words, in which the numbers do not match.
Now that we have separated out all words in which the number of occurrences of $b$ does not match the number of non-$b$ symbols, the next step is to separate those words that do not encode an execution trace. The language of correctly formed configurations and traces is described by the following regular expressions. Here, $e_{ID}$ represents a single configuration, and $e_{\mathit{traces}}$ represents traces that begin with the initial configuration and end with a halting configuration:
\[
\begin{aligned}
e_{ID}
&=
\#b(0b+1b)^*\Bigl(\sum_{q\in Q} qb\Bigr)(0b+1b)^*Bb,\\
e_{\mathit{traces}}
&=
\#bq_0bBb(e_{ID})^*
\#b(0b+1b)^*q_fb(0b+1b)^*Bb.
\end{aligned}
\]

Notice that, at the level of ordinary regular languages, that is, without commutativity conditions, one can effectively compute a regular expression $e_1\setminus e_2$ such that
\[
L(e_1\setminus e_2)=L(e_1)\setminus L(e_2).
\]
We therefore define
\[
e_{\mathit{illegal}} = e_{\mathit{mismatch\_b}}+ e_{\mathit{match\_b}} \setminus e_{\mathit{traces}}.
\]

\subsubsection{Encoding of $e_{\mathit{invalid\_execution}}(M)$}
Now that we can isolate all syntactically valid traces using $e_{\mathit{traces}}$, the final step is to identify those traces that do not encode a \emph{legal} execution of the Turing machine. In other words, we want to detect those traces in $e_{\mathit{traces}}$ that contain two consecutive configurations $ID_i$ and $ID_{i+1}$ such that $ID_{i+1}$ is not a valid successor of $ID_i$.

To this end, we define an expression $e_{\mathit{invalid\_IDs}}(M)$ intended to capture illegal pairs of consecutive configurations. Ideally, we would like to recognize the language
\[
\{\,\#ID_1\#ID_2 b^{|\#ID_1\#ID_2|}\mid ID_2 \text{ is not a valid successor of } ID_1\,\}.
\]
However, this language is not regular, since checking validity of the transition requires comparing arbitrarily long configurations. This can be shown formally using the pumping lemma.

The extra symbol $b$ allows us to bypass this difficulty. We construct a regular expression $e_{\mathit{invalid\_IDs}}(M)$ such that, whenever a trace belongs to $L_C(e_{\mathit{invalid\_IDs}}(M))$ and the number of occurrences of $b$ is equal to the number of all other symbols, the word must be of the form
\[
\#ID_1\#ID_2 b^{|\#ID_1\#ID_2|},
\]
and the pair $\#ID_1\#ID_2$ is necessarily illegal. For words of the form
\[
\#ID_1\#ID_2 b^n
\qquad\text{with}\qquad
n\neq |\#ID_1\#ID_2|,
\]
their membership is irrelevant.

Following the idea of \citet{Ibarra78}, every illegal transition from $ID_1$ to $ID_2$ can be recognized by inspecting corresponding windows of three consecutive symbols in $ID_1$ and $ID_2$, that is, the $s$th, $(s+1)$st, and $(s+2)$nd symbols. For example, if there is a copying error on the tape away from $\#$, $q$, and $B$, then there exists some position $s$ such that
\[
ID_1[s,s+2]\neq ID_2[s,s+2],
\]
with all symbols involved belonging to $\{0,1\}$. Similarly, if there is an error in the machine transition, then there exists some position $s$ such that the local windows $ID_1[s,s+2]$ and $ID_2[s,s+2]$ contain a state symbol and together represent an invalid transition. After a detailed case analysis, given in the Appendix, we obtain two finite sets:
\[
\Delta\subseteq \Sigma^6,
\qquad
\Delta'\subseteq \Sigma^6.
\]
Here, $\Delta$ consists of all invalid corresponding $6$-tuples that do not involve a blank symbol in the local window, while $\Delta'$ consists of those in which a blank symbol is involved.

We now translate these observations into a regular expression. Let
\[
\Sigma_{ID}=0+1+B+\sum_{q\in Q} q
\qquad\text{and}\qquad
\Sigma_{ID\setminus B}=0+1+\sum_{q\in Q} q
\]
denote, respectively, the set of symbols that may appear in a configuration and the subset of those symbols excluding the blank symbol $B$.

We define
\[
\begin{aligned}
e_{\mathit{invalid\_IDs}}(M)
={}&\ e_{\mathit{illegal}<3}(M)\\
&+\#b^2(\Sigma_{ID}b^2)^*
\Biggl(
\sum_{(l_1,l_2,l_3,l_4,l_5,l_6)\in \Delta}
\Bigl(
l_1bl_2bl_3b(\Sigma_{ID}b)^*\#(\Sigma_{ID})^*l_4bl_5bl_6b
\Bigr)
\Biggr)
(\Sigma_{ID}b)^*\\
&+\#b^2(\Sigma_{ID}b^2)^*
\Biggl(
\sum_{(l_1,l_2,l_3,l_4,l_5,l_6)\in \Delta'}
\Bigl(
l_1bl_2bl_3b(\Sigma_{ID}b)^*\#(\Sigma_{ID})^*l_4bl_5bl_6b
\Bigr)
\Biggr)
\Bigl(\epsilon+\Sigma_{ID\setminus B}b(\Sigma_{ID}b)^*\Bigr).
\end{aligned}
\]

Here, the first summand $e_{\mathit{illegal}<3}(M)$ covers the exceptional short-configuration cases, where one of the configurations has tape length less than $2$. The second summand captures all pairs of configurations containing a local illegal pattern from $\Delta$, that is, patterns whose local window does not involve the blank symbol $B$. The third summand handles the remaining patterns in $\Delta'$, where a blank symbol occurs in the local window.

The key idea is to guess a position in the two consecutive configurations $ID_1$ and $ID_2$, and then use the number of occurrences of $b$ to verify that the guessed positions are aligned. More precisely, before guessing the beginning of an illegal local pattern from $\Delta$ or $\Delta'$, we move through the word at speed $2b$ per symbol of $\#\cup\Sigma_{ID}$. Once we guess the first symbol $l_1$ of such a pattern, we slow down and consume exactly one $b$ per symbol until we reach the separator $\#$ marking the beginning of $ID_2$. After that, we stop consuming $b$'s until we guess the corresponding symbol $l_4$ in $ID_2$.

As a result, the guessed symbols $l_1$ and $l_4$ occur at the same position in $ID_1$ and $ID_2$ if and only if the final word contains exactly as many occurrences of $b$ as of all other symbols. Thus, equality in the number of $b$'s certifies that the guessed local mismatch occurs at corresponding positions in the two configurations.

We therefore define
\[
e_{\mathit{invalid\_execution}}(M)
=
e_{ID}^*e_{\mathit{invalid\_IDs}}(M)e_{ID}^*.
\]
Our final expression is
$\overline{H}(M) \defeq e_{\mathit{illegal}} + e_{\mathit{invalid\_exection}}(M) + \Sigma^* q_c \Sigma^*$
\begin{theorem}[Halting checking]\label{thm:halting-raw}
Let
\(
M=(Q,\Gamma,\delta,q_0,q_f)
\)
be a Turing machine on a semi-infinite tape that never overwrites any symbol by the blank symbol and has a capturing state $q_c$. Let $e_{\mathit{translate}}$ be the expression constructed above from $M$.
\(
\ka{KA^*}{C}{\overline{H}(M)}{\Sigma^*}
\) under \(C=\{(\sigma,b)\mid \sigma\in \{0,1,B,\#\}\cup Q\}\)
iff $M$ does not halt.
\end{theorem}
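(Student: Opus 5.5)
By Theorem~\ref{thm:equiv}, $\ka{KA^*}{C}{\overline{H}(M)}{\Sigma^*}$ holds iff $\lc(\overline{H}(M))=\lc(\Sigma^*)$. So the statement reduces to a purely trace-theoretic claim: every trace in $\Sigma^*/\equiv_C$ has a representative in $L(\overline{H}(M))$ if and only if $M$ has no halting execution. We prove the two directions separately.

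\emph{If $M$ halts, universality fails.} Let $ID_0,\ldots,ID_k$ be the halting run, with $x=\#ID_0\#\cdots\#ID_k$, and consider the trace $t=[xb^{|x|}]$. We check that no word in $t$ lies in $L(\overline{H}(M))$, one summand at a time.
\begin{itemize}
\item \emph{The term $\Sigma^*q_c\Sigma^*$.} Since $q_c$ is capturing and $q_c\neq q_f$, the run never visits $q_c$. So no representative of $t$ contains $q_c$.
\item \emph{The term $e_{\mathit{illegal}}$.} All representatives of $t$ have equally many $b$'s and non-$b$ symbols, so they avoid $e_{\mathit{mismatch\_b}}$. The only representative in $L(e_{\mathit{match\_b}})$ is the interleaved word $x_1b\,x_2b\cdots$, and this word lies in $L(e_{\mathit{traces}})$. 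Hence $t$ avoids $e_{\mathit{illegal}}$.
\item \emph{The term $e_{\mathit{invalid\_execution}}(M)$.} Any representative accepted here factors as a sequence of $e_{ID}$ blocks, then a block of $e_{\mathit{invalid\_IDs}}(M)$, then more $e_{ID}$ blocks. Each $e_{ID}$ block is balanced in $b$, so the middle block $\#ID_i\#ID_{i+1}$ must also carry exactly $|\#ID_i\#ID_{i+1}|$ copies of $b$. By the speed-counting argument, this balance forces the guessed windows $l_1l_2l_3$ and $l_4l_5l_6$ to sit at aligned positions in $ID_i$ and $ID_{i+1}$. By soundness of $\Delta$ and $\Delta'$ (from the appendix case analysis), this would make $ID_{i+1}$ an invalid successor of $ID_i$, contradicting that the run is genuine.
\end{itemize}
Therefore $t\notin\lc(\overline{H}(M))$.

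\emph{If $M$ does not halt, universality holds.} Take an arbitrary trace $t$; we show it has a representative in $L(\overline{H}(M))$.
\begin{itemize}
\item If the $b$-count of $t$ differs from its non-$b$ count, choose the representative that puts all $b$'s first or interleaves them greedily. This word lies in $e_{\mathit{mismatch\_b}}$. The key check is that $e_{\mathit{mismatch\_b}}$ together with $e_{\mathit{match\_b}}$ covers all count patterns up to commutation with $b$, since $b$ commutes with every other letter.
\item Otherwise the counts match, and the interleaved representative lies in $e_{\mathit{match\_b}}$. If it is not in $e_{\mathit{traces}}$, it lies in $e_{\mathit{illegal}}$.
\item The remaining case is $t=[xb^{|x|}]$ with $x=\#ID_0\#\cdots\#ID_k$ syntactically well formed, starting at $q_0$ and ending in $q_f$. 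Since $M$ does not halt, some index $i$ has $ID_{i+1}$ not a valid successor of $ID_i$. By completeness of the window analysis, either one configuration is short (covered by $e_{\mathit{illegal}<3}(M)$), or there is an aligned position $s$ whose $6$-tuple lies in $\Delta$ or $\Delta'$. We then distribute the $b$'s explicitly: two per symbol before $l_1$, one per symbol from $l_1$ to the $\#$ preceding $ID_{i+1}$, none until $l_4$, and one per symbol afterwards, with the tail form depending on the blank case. We also give every other configuration the $e_{ID}$ distribution of one $b$ per symbol. Alignment makes the total $b$-count equal to $|x|$, so this rearrangement is a representative of $t$ accepted by $e_{\mathit{invalid\_execution}}(M)$.
\end{itemize}

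\emph{Main obstacle.} The hard part is the soundness and completeness of the local-window sets $\Delta$ and $\Delta'$, together with the exact $b$-bookkeeping in the three summands of $e_{\mathit{invalid\_IDs}}(M)$. This includes the boundary cases near $B$, where the head moves right onto the blank and the tape grows by one, and near the left endpoint. To handle these, we would enumerate the transition cases (Left, Right, Stay, at the left end or not, with or without the blank) and for each confirm two things: an invalid step produces a misaligned window of the required form, and the coefficient counts $2\cdot(s{-}1)+|ID_i|-s+\ldots$ balance exactly when the positions align.
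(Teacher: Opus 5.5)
Your proof is correct and follows the same route as the paper. You reduce via Theorem~\ref{thm:equiv} to the trace language $L_C(\overline{H}(M))$, use the encoded halting run $xb^{|x|}$ as the witness against universality (checking the three summands one by one), and for the converse argue that any trace avoiding all summands would encode a genuine halting run. The difference is only that you spell out the $b$-bookkeeping the paper leaves as ``by construction'' (balanced $e_{ID}$ blocks force the $e_{\mathit{invalid\_IDs}}(M)$ block to be balanced and hence the guessed windows to be aligned, reading your distribution so that the $\#$ opening $ID_{i+1}$ carries no $b$), while relying, exactly as the paper does, on the soundness and completeness of $\Delta,\Delta'$ from the appendix case analysis.
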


\begin{proof}
Since equivalence in $KA^*+C$ coincides with equality of the corresponding regular languages, it suffices to reason at the level of regular languages. See full proof in the Appendix.
\end{proof}

\subsection{Encoding $c$-looping}\label{subsec:c-looping}

In the previous section, we defined the key expression as
\[
\overline{H}(M) \defeq e_{\mathit{illegal}} + e_{\mathit{invalid\_execution}}(M) + \Sigma^* q_c \Sigma^*,
\]
and included the term \(\Sigma^* q_c \Sigma^*\), even though it did not appear to play an essential role there. In this section, this term becomes crucial.

The main difficulty in proving results in \(KA+C\) is that we cannot freely unfold the star operator, but must instead use it only through the axioms of Kleene algebra. In fact, for the theorem established in the previous section,
\[
KA^* + C \vDash \overline{H}(M) = \Sigma^*
\qquad\text{iff}\qquad
M \text{ halts},
\]
it is not clear whether the analogous statement remains true, or can even be proved, for \(KA+C\), because the size of the encoding and the complexity of reasoning about the star operator already exceed what is humanly manageable. We therefore take one step back and prove a weaker but sufficient statement: 
\[
\text{If \(M\) is \(c\)-looping,  then } KA + C \vDash \overline{H}(M)=\Sigma^*.
\]

The intuition behind the \(c\)-looping argument is simple. Since the Turing machines we consider are deterministic, if \(M\) is \(c\)-looping, then there exists a finite number \(n\) such that after \(n\) steps of execution the machine enters the capturing state \(q_c\). Therefore, for any trace that could potentially encode an execution, one of two things must happen: either (1) the trace contains the capturing state \(q_c\), or (2) an execution error already occurs within the first \(n\) steps.

In general, it is difficult to prove in \(KA+C\) that an execution error occurs somewhere along an arbitrarily long trace, because we cannot unfold the star operator at will. However, once we know that any error must occur within the first \(n\) steps, the argument becomes much more manageable.

Moreover, since our Turing machine can increase the tape length by at most one symbol at each step, a trace containing \(n\) steps of execution can be further divided into two cases: either (1) there are two consecutive configurations whose lengths differ by at least \(2\), or (2) the trace has bounded finite length. We handle these two cases separately in the proof.

We begin by showing that the basic construction for \(KA^*+C\) also holds for \(KA+C\).
\begin{lemma}[Legal and Illegal Traces]
\label{thm:trace}
Let
\[
\Sigma=\{0,1,B,\#\}\cup Q\cup\{b\},
\qquad
C=\{(\sigma,b)\mid \sigma\in \{0,1,B,\#\}\cup Q\}.
\]
Then
\[
\ka{KA}{C}{e_{\mathit{traces}}+e_{\mathit{illegal}}}{\Sigma^*}.
\]
\end{lemma}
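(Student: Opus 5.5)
The plan is to split the statement into a commutation-free part, handled by completeness of $KA$, and a counting part, which is the only place where the commutativity conditions are needed. Write $X \defeq \Sigma_{\setminus b}$. By definition,
\[
e_{\mathit{illegal}} = e_{\mathit{mismatch\_b}} + e_{\mathit{match\_b}}\setminus e_{\mathit{traces}}.
\]
Every word of $L(e_{\mathit{traces}})$ alternates a non-$b$ symbol with a single $b$, so $L(e_{\mathit{traces}})\subseteq L(e_{\mathit{match\_b}})$. Hence
\[
L(e_{\mathit{match\_b}}) = L(e_{\mathit{traces}})\cup L(e_{\mathit{match\_b}}\setminus e_{\mathit{traces}}).
\]
By completeness of $KA$~\cite{kozen1994ka}, this gives $KA\vDash e_{\mathit{match\_b}} = e_{\mathit{traces}} + e_{\mathit{match\_b}}\setminus e_{\mathit{traces}}$. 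It therefore suffices to prove
\[
\ka{KA}{C}{\Sigma^*}{e_{\mathit{match\_b}} + e_{\mathit{mismatch\_b}}},
\]
that is, $(X+b)^* = (Xb)^* + b^+(Xb)^* + X\bigl(X(1+b)\bigr)^*$. The inequality $\ge$ is trivial, since every term lies below $(X+b)^*$, so only $\le$ requires work.

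The first step is to lift commutation. Since $C$ contains $(\sigma,b)$ for every letter $\sigma$ of $X$, distributivity gives $Xb = bX$. By induction on expressions, $b$ commutes with every expression over $\Sigma$. The only nontrivial case is the star, where from $be=eb$ one obtains $be^*=e^*b$ by the star induction axioms: $b + e^*(eb) \le e^*b$ yields $be^*\le e^*b$, and symmetrically for the other direction. Note that we cannot use Lemma~\ref{thm:alphabet-commutative} or the coincidence theorem here, because $C$ is not transitive. We only need commutation of $b$ with everything, and this argument provides it directly. With $x=X$ and $y=b$ commuting, the standard $KA$ identity $(x+y)^* = x^*y^*$ is then provable.

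The main step is the diagonal decomposition for commuting $x,y$:
\[
x^*y^* \le S \defeq (xy)^*x^* + (xy)^*y^*.
\]
I would prove it by showing $1\le S$, $xS\le S$ and $Sy\le S$; the star axioms then give $x^*\,1\,y^*\le S$. The inequality $1 \le S$ is immediate. For $xS \le S$, commutation gives $x(xy)^*x^* = (xy)^*xx^*\le S$. Also $x(xy)^*y^* = (xy)^*xy^*$, and unfolding $y^* = 1 + yy^*$ gives $(xy)^*x + (xy)^*(xy)y^*\le S$. The proof of $Sy\le S$ is symmetric. I expect this bookkeeping with the star axioms, especially getting the commutation of $x$ and $y$ past $(xy)^*$, to be the main obstacle. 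Everything else is monotonicity.

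Finally, I absorb the pieces of $S$ into the target, using $(xy)^*x^* = (xy)^* + (xy)^*x^+$ and similarly for $y$. The term $(Xb)^*$ is $e_{\mathit{match\_b}}$. The term $(Xb)^*b^+ = b^+(Xb)^*$ holds by commutation. For the remaining term, $(Xb)^*X^+ = X(Xb)^*X^*$, and $(Xb)^*X^*\le \bigl(X(1+b)\bigr)^*$ because both $Xb$ and $X$ lie below $X(1+b)$. Hence $(Xb)^*X^+\le X\bigl(X(1+b)\bigr)^*$. Combining the three terms yields $\Sigma^*\le e_{\mathit{match\_b}}+e_{\mathit{mismatch\_b}}$, which completes the proof.
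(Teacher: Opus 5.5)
Your proof is correct. For the commutation-free part it is the same as the paper's: both use completeness of $KA$ to get $e_{\mathit{match\_b}} = e_{\mathit{traces}} + e_{\mathit{match\_b}}\setminus e_{\mathit{traces}}$. The difference is in how you obtain the counting identity $\Sigma^* = e_{\mathit{match\_b}} + e_{\mathit{mismatch\_b}}$.

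The paper does not derive that identity by hand. It passes to a two-letter alphabet $\{a',b'\}$ with the transitive condition $C'=\{(a',b')\}$. It observes that $e'_{\mathit{match\_b}}+e'_{\mathit{mismatch\_b}}$ and $\Sigma'^*$ have the same $C'$-language, and lifts this to $KA+C'$ via Theorem~\ref{thm:equiv} and the coincidence Theorem~\ref{thm:coin}. It then substitutes $a'\mapsto \Sigma_{\setminus b}$ and $b'\mapsto b$, which needs only $\Sigma_{\setminus b}b = b\Sigma_{\setminus b}$ from $C$. So your remark that the coincidence theorem is unusable because $C$ is not transitive is too pessimistic. Collapsing all non-$b$ letters into the single term $\Sigma_{\setminus b}$ is exactly how the paper sidesteps non-transitivity.

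You instead give an explicit derivation from the star induction rules: lift $Xb=bX$ through stars, use $(x+y)^*=x^*y^*$, and prove the diagonal bound $x^*y^*\le (xy)^*x^*+(xy)^*y^*$. The checks of $xS\le S$ and $Sy\le S$ go through, as does the final absorption into $b^+(Xb)^*$ and $X\bigl(X(1+b)\bigr)^*$.

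The paper's route is shorter and would handle any identity in two commuting terms whose Parikh images agree. However, it leans on the whole normal-form machinery of Section~\ref{sec:coin} and on Pilling's completeness for $CKA$. Yours is tailored to this one identity, but it is elementary and self-contained, and it makes the undecidability half of the paper independent of the decidability half.
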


\begin{proof}
The proof of this lemma is somewhat technical, since in \(KA\) the use of the star operator is restricted to the axioms of Kleene algebra. We simplify part of the argument by appealing to the coincidence Theorem~\ref{thm:coin}, using the fact that \(\Sigma_{\setminus b}\) and \(b\) commute, and that the expressions \(e_{\mathit{match\_b}}\) and \(e_{\mathit{mismatch\_b}}\) can be encoded in terms of \(\Sigma_{\setminus b}\) and \(b\), rather than treated directly over the full alphabet \(\Sigma\). The overall argument is nevertheless straightforward; see the Appendix for details.\end{proof}

We next show that case~(1), namely the case in which two consecutive configurations differ in length by at least \(2\), is always captured by \(e_{\mathit{invalid\_execution}}(M)\). It therefore remains only to consider traces of bounded finite length.

\begin{lemma}[Case Analysis]\label{thm:case-analysis}
Let
\[
M=(Q,\Gamma,\delta,q_0,q_f)
\]
be a Turing machine on a semi-infinite tape that never overwrites any symbol with the blank symbol and has a capturing state \(q_c\). Write
\[
e_{\mathit{init}}=\#bq_0bBb,
\qquad
e_{ID}=\#b(0b+1b)^*\Bigl(\sum_{q\in Q}qb\Bigr)(0b+1b)^*Bb.
\]
Then for every \(n\in\mathbb{N}\),
\[
KA + C \vDash
e_{\mathit{init}}e_{ID}^{n}
\le
e_{\mathit{invalid\_execution}}(M)
+
\sum_{\substack{|w|\le \frac{n^2+7n+6}{2}\\ w\in L(e_{\mathit{init}}e_{ID}^{n})}} w.
\]
\end{lemma}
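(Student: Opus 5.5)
We argue by induction-free case splitting on the lengths of the configurations, keeping every step inside $KA+C$ by unfolding stars only finitely many times. The basic tool is the $KA$-provable identity
\[
x^* = \sum_{k<N} x^k + x^N x^*,
\]
valid for every $N$. Applying it to the inner stars $(0b+1b)^*$ of the $j$-th factor $e_{ID}$ in $e_{\mathit{init}}e_{ID}^{n}$ splits that factor into two parts. The first is a finite sum of concrete configuration words whose tape length is at most $j+1$. The second is a \emph{long} part, in which the tape has length at least $j+2$, written as a concrete prefix followed by a residual star. Here $e_{\mathit{init}}$ is the concrete configuration of tape length $1$ (non-$b$ symbols $\#q_0B$). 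In a legal run the tape grows by at most one cell per step, so the $j$-th configuration (with $e_{\mathit{init}}$ as the $0$-th) has at most $j+3$ non-$b$ symbols. Summing over $j=0,\dots,n$ gives
\[
\sum_{j=0}^{n}(j+3)=\frac{(n+1)(n+6)}{2}=\frac{n^2+7n+6}{2}.
\]
This is exactly the bound in the statement.

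Distributing the product over these sums expresses $e_{\mathit{init}}e_{ID}^{n}$ as a finite sum of terms, and we handle them in two cases. In the first case, every factor is in its bounded part. Then the term is a finite sum of concrete words of $L(e_{\mathit{init}}e_{ID}^{n})$ whose lengths respect the bound, so it lies below the second summand on the right by idempotence of $+$. In the second case, we pick the least $j$ whose factor is in its long part. Then factor $j-1$ is a concrete configuration word $w$ with tape length $\ell\le j$, while factor $j$ has tape length at least $\ell+2$. It therefore suffices to prove, for each such concrete $w$,
\[
w\cdot e^{\ge \ell+2}_{ID}\;\le\; e_{\mathit{invalid\_IDs}}(M).
\]
Here $e^{\ge \ell+2}_{ID}$ denotes the long part. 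Given this inequality, the remaining factors are absorbed by $KA$ monotonicity. The prefix satisfies $e_{\mathit{init}}e_{ID}^{j-2}\le e_{ID}^*$, because $e_{\mathit{init}}\le e_{ID}$ and $x^k\le x^*$. The suffix factors are each $\le e_{ID}$, so their product is $\le e_{ID}^*$. Together these give the term $\le e_{ID}^*e_{\mathit{invalid\_IDs}}(M)e_{ID}^*=e_{\mathit{invalid\_execution}}(M)$.

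The displayed inequality is an instance of a blank-mismatch pattern from $\Delta'$. In $w$, position $\ell+1$ carries the terminating $B$ (or the position lies beyond $w$), whereas in the second configuration the symbol at the same position is a tape symbol in $\{0,1\}$. We instantiate the third summand of $e_{\mathit{invalid\_IDs}}(M)$ with this window $(l_1,\dots,l_6)$. To prove the inclusion, we first write the concrete word $w$ in the shape $\#b^2(\Sigma_{ID}b^2)^k l_1bl_2bl_3b(\Sigma_{ID}b)^r\#$. This only requires finitely many uses of the commutations $\sigma b=b\sigma$ to move $b$'s leftwards. It is legitimate because $w$ is a concrete word and the $b$'s of the long part can also be commuted, possibly borrowing $b$'s from the following concrete prefix of the long part. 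After this rearrangement, we fold the concrete powers back into stars via $x^k\le x^*$. The residual star of the long part is then bounded by the trailing $(\epsilon+\Sigma_{ID\setminus B}b(\Sigma_{ID}b)^*)$ using $(0b+1b)^*\le(\Sigma_{ID}b)^*$. When $\ell<2$, the term is instead covered by $e_{\mathit{illegal}<3}(M)$.

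I expect this last step to be the main obstacle. The $b$-counts in $w$ and in the concrete prefix of the long part must be redistributed exactly to match the "speed $2b$ then $1b$ then $0b$" schedule of $e_{\mathit{invalid\_IDs}}(M)$. This requires checking that enough $b$'s are available from the second configuration to be moved in front of its $\#$. That in turn is where the tape growth of at least $2$ is used: it supplies the surplus $b$'s that the alignment needs. I would first verify this bookkeeping on the generic shape and only then specialise the window to the finitely many cases of $\Delta'$.
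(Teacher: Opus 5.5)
Your proposal is correct and is essentially the paper's argument. You split $e_{\mathit{init}}e_{ID}^{n}$ into traces whose tape grows by at most one cell per step, which form a finite sum of words within the triangular bound, and traces with a first jump of at least two cells; you discharge the latter as the paper's Invalid Length Increase lemma does, via a blank-mismatch window with the $b$'s realigned by commutation and $e_{\mathit{illegal}<3}(M)$ for the shortest configuration. The only cosmetic difference is that you derive the case split by explicit finite star unfolding in $KA$, whereas the paper states the corresponding inclusion of regular languages and appeals to completeness of $KA$.
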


\begin{proof}
Here \(e_{\mathit{init}}\) denotes the expression for the initial configuration of the Turing machine, while \(e_{ID}\) denotes the expression for a single intermediate configuration. Thus, the expression \(e_{\mathit{init}}e_{ID}^{n}\) represents the first \(n\) steps of a potential execution of the machine.

A detailed technical argument, given in the Appendix, shows that traces of the first kind, namely those in which some consecutive pair of configurations differs in length by at least \(2\), are always included in \(e_{\mathit{invalid\_execution}}(M)\). It therefore remains to bound the length of the remaining traces. Since every configuration has length at least \(3\), accounting for \(\#\), a state symbol \(q\), and the blank symbol \(B\), and since each step increases the length by at most \(1\), every such trace of \(n\) steps has total length at most
\[
\sum_{k=0}^{n}(k+3)=\frac{n^2+7n+6}{2}.
\]
The lemma then follows. We defer the full details to the Appendix.
\end{proof}

With Lemmas~\ref{thm:trace} and~\ref{thm:case-analysis} in hand, our \(c\)-loop checking theorem follows straightforwardly.

\begin{theorem}[$c$-loop checking]\label{thm:c-loop-raw}
Let
\(
M=(Q,\Gamma,\delta,q_0,q_f)
\)
be a Turing machine on a semi-infinite tape, which never overwrites any symbol by the blank symbol and has a capturing state $q_c$. Let $\overline{H}(M) $ be the expression constructed above from $M$. If $M$ is $c$-looping, then
\(
\ka{KA}{C}{\overline{H}(M) }{\Sigma^*}
\) under \(C=\{(\sigma,b)\mid \sigma\in \{0,1,B,\#\}\cup Q\}.\)
\end{theorem}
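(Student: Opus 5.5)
Since $\overline{H}(M)\le\Sigma^*$ holds in $KA$ trivially, it suffices to show $\Sigma^*\le\overline{H}(M)$ in $KA+C$. By Lemma~\ref{thm:trace}, $\Sigma^*\le e_{\mathit{traces}}+e_{\mathit{illegal}}$, and $e_{\mathit{illegal}}$ is a summand of $\overline{H}(M)$. The task therefore reduces to proving
\[
KA+C\vDash e_{\mathit{traces}}\le \overline{H}(M).
\]

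Since $M$ is deterministic and $c$-looping, fix $n$ such that the genuine run of $M$ from the empty tape is in $q_c$ at step $n$. Write $e_{\mathit{last}}\defeq\#b(0b+1b)^*q_fb(0b+1b)^*Bb$, so that $e_{\mathit{traces}}=e_{\mathit{init}}e_{ID}^*e_{\mathit{last}}$. Using the $KA$ identity $x^*=\sum_{k<n}x^k+x^nx^*$, split $e_{\mathit{traces}}$ into $\sum_{k<n}e_{\mathit{init}}e_{ID}^ke_{\mathit{last}}$ and $e_{\mathit{init}}e_{ID}^{n}e_{ID}^*e_{\mathit{last}}$. The second term has the prefix $e_{\mathit{init}}e_{ID}^n$. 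For the first terms, use $e_{\mathit{last}}\le e_{ID}$ to see that $e_{\mathit{init}}e_{ID}^ke_{\mathit{last}}\le e_{\mathit{init}}e_{ID}^{k+1}$ with $k+1\le n$. In every case we then apply Lemma~\ref{thm:case-analysis} to the prefix $e_{\mathit{init}}e_{ID}^{j}$ with $j\le n$. This bounds the prefix by $e_{\mathit{invalid\_execution}}(M)$ plus a finite sum of words $w$. The residual tail is $1$, $e_{ID}^{n-j}$-type material, or $e_{ID}^*e_{\mathit{last}}$, and each is bounded by $e_{ID}^*$. One point needs care: $e_{\mathit{invalid\_execution}}(M)=e_{ID}^*e_{\mathit{invalid\_IDs}}(M)e_{ID}^*$ absorbs a right factor $e_{ID}^*$, since $e_{ID}^*e_{ID}^*=e_{ID}^*$ is a $KA$ identity. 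This handles the invalid summand.

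The remaining work concerns the finitely many words $w\in L(e_{\mathit{init}}e_{ID}^j)$ of bounded length, each followed by a tail $t\le e_{ID}^*$ (or by $e_{\mathit{last}}$). This is a finite case analysis on concrete words, with no stars except in the tail. Each $w$ is a concrete sequence of configurations $ID_0\cdots ID_j$. There are two cases:

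\textbf{Case 1: $w$ is the genuine run prefix.} If $j=n$, its last configuration contains $q_c$, so $wt\le\Sigma^*q_c\Sigma^*$ directly. If $j<n$ and the tail is $e_{\mathit{last}}$, the final configuration carries $q_f$, while the genuine $ID_{j+1}$ does not, since $q_f\neq q_c$ and the machine never halts. This gives an invalid pair.

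\textbf{Case 2: some consecutive pair $ID_iID_{i+1}$ in $w$ is not a valid step.} Then the word segment $\#ID_i\#ID_{i+1}$ (with its $b$'s) is $C$-equal to a word in $L(e_{\mathit{invalid\_IDs}}(M))$. This holds because the construction of $\Delta,\Delta'$ and $e_{\mathit{illegal}<3}$ catches every local error at aligned positions. For a single concrete word, the needed commutations of $b$'s are finitely many instances of $C$. Hence $w\le e_{ID}^{i}e_{\mathit{invalid\_IDs}}(M)e_{ID}^{j-i-1}$ in $KA+C$, and multiplying by the tail stays below $e_{\mathit{invalid\_execution}}(M)$.

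Summing all cases gives $e_{\mathit{traces}}\le\overline{H}(M)$.

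I expect the main obstacle to be Case 2. It requires confirming that each concrete illegal pair, after the $b$'s have been rearranged, literally matches one of the pattern summands, with speed-$2$ prefix, speed-$1$ middle and zero-speed segment, and with the $b$-counts balancing. The commutation step is finite and clean, but the cover argument depends on the completeness of the case analysis behind $\Delta,\Delta'$. I would discharge it by appealing to the same case analysis used for Theorem~\ref{thm:halting-raw}, now applied to single words, where $KA+C$ and language reasoning agree for star-free terms.
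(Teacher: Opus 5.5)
Your strategy is the paper's: reduce via Lemma~\ref{thm:trace}, split $e_{ID}^*$ at the $c$-looping bound, apply Lemma~\ref{thm:case-analysis} to the prefix, absorb tails into $e_{\mathit{invalid\_execution}}(M)$, and finish with a finite case analysis on concrete words. The gap is in the short-trace branch. Once you bound $e_{\mathit{last}}\le e_{ID}$, you apply Lemma~\ref{thm:case-analysis} to $e_{\mathit{init}}e_{ID}^{k+1}$ with tail $1$. For $j=k+1<n$, the resulting finite sum contains the genuine run prefix $\#ID_0\#ID_1\cdots\#ID_j$ (with its $b$'s). This word has no illegal consecutive pair, so Case~2 does not apply. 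It need not contain $q_c$ (it does not if $n$ is the first step at which the run is in $q_c$), so the first half of Case~1 does not apply either. The second half of Case~1 relies on the tail being $e_{\mathit{last}}$, which is exactly the information the bound $e_{\mathit{last}}\le e_{ID}$ discarded. The paper's own proof makes the same move and then asserts that all these bounded words lie below $e_{\mathit{invalid\_execution}}(M)$. That assertion is false for the genuine prefix, which is not in $L_C(e_{\mathit{invalid\_execution}}(M))$, so following the paper does not rescue this step.

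There are two repairs. The cheap one uses the fact that your target is $\overline{H}(M)$, not the narrower $e_{\mathit{invalid\_execution}}(M)+\Sigma^*q_c\Sigma^*$. A $c$-looping machine never enters $q_f$, so the genuine prefix $w$ lies in $L(e_{\mathit{match\_b}})$ but not in $L(e_{\mathit{traces}})$. Completeness of $KA$ then gives $KA\vDash w\le e_{\mathit{match\_b}}\setminus e_{\mathit{traces}}\le e_{\mathit{illegal}}$. The alternative keeps $e_{\mathit{last}}$ and applies Lemma~\ref{thm:case-analysis} only to $e_{\mathit{init}}e_{ID}^{k}$. In that case, your claim that the genuine $ID_k$ followed by $e_{\mathit{last}}$ ``gives an invalid pair'' is not a concrete-word argument, because the second configuration ranges over a starred expression. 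You would have to split $e_{\mathit{last}}$ by tape length. Configurations at least two cells longer than $ID_k$ are handled by Theorem~\ref{thm:invalid-length}. The finitely many shorter ones are unrolled to concrete words with a residual bounded by $(\Sigma_{ID}b)^*$, as in that theorem's proof. With either repair, the rest of your argument goes through: the $j=n$ case via $q_c$, Case~2 via finitely many $b$-commutations, and the absorption of tails.
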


\begin{proof}
It suffices to prove
\[
\kale{KA}{C}{\overline{H}(M) }{\Sigma^*}
\qquad\text{and}\qquad
\kale{KA}{C}{\Sigma^*}{\overline{H}(M) }.
\]

The first inequality is immediate: since
\[
\karle{\overline{H}(M) }{\Sigma^*},
\]
we also have
\[
\kale{KA}{C}{\overline{H}(M) }{\Sigma^*}.
\]

For the converse direction, by Theorem~\ref{thm:trace}, we have
\[
\kale{KA}{C}{\Sigma^*}{e_{\mathit{illegal}}+e_{\mathit{traces}}}.
\]
Since
\[
\overline{H}(M) 
=
e_{\mathit{illegal}}
+
e_{\mathit{invalid\_execution}}(M)
+
\Sigma^*q_c\Sigma^*,
\]
it remains to show that
\[
\kale{KA}{C}{e_{\mathit{traces}}}{e_{\mathit{invalid\_execution}}(M)+\Sigma^*q_c\Sigma^*}.
\]

The rest is a case analysis. Since \(M\) is \(c\)-looping, there exists \(n\in\mathbb{N}\) such that every trace containing more than \(n\) configurations is either invalid or contains the state \(q_c\). The desired inclusion then follows from Lemma~\ref{thm:case-analysis}, together with a finite case analysis. Full details are given in the Appendix.
\end{proof}

\subsection{Undecidability}\label{subsec:undec}
Finally, we prove undecidability. The argument is based on recursive inseparability.

\begin{definition}[Recursively inseparable]
Let $A,B\subseteq \mathbb{N}$ be two disjoint sets. We say that $A$ and $B$ are \emph{recursively inseparable} if there is no decidable set $S\subseteq \mathbb{N}$ such that
\[
A\subseteq S
\qquad\text{and}\qquad
B\cap S=\varnothing.
\]
Equivalently, there is no decidable set that contains all elements of $A$ while excluding all elements of $B$.
\end{definition}

In our setting, recursive inseparability is obtained by a short argument from the standard recursive inseparability result for general Turing machines, which we also include in the Appendix.

\begin{lemma}[Recursive inseparability of $c$-looping and halting]
For the class of Turing machines \(M\) that operate on a semi-infinite tape, have a capturing state \(q_c\), and never overwrite a tape symbol with a blank symbol, the sets
\[
\{\,M \mid M \text{ is \(c\)-looping}\,\}
\qquad\text{and}\qquad
\{\,M \mid M \text{ halts}\,\}
\]
are recursively inseparable.
\end{lemma}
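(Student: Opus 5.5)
The plan is to reduce from the classical recursive inseparability of $\{M \mid M \text{ loops in a designated state}\}$ and $\{M \mid M \text{ halts}\}$ for ordinary Turing machines. Concretely, I would start from the standard fact that the sets $A_0=\{M \mid M \text{ halts with output } 0\}$ and $B_0=\{M \mid M \text{ halts with output } 1\}$ are recursively inseparable. Equivalently, and more conveniently here, I would use the variant $\{M \mid M \text{ enters a designated state } r\}$ versus $\{M \mid M \text{ enters a designated state } f\}$ for deterministic machines with two absorbing outcomes.

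I would then give a computable transformation $M \mapsto M'$ into the restricted class. The construction is as follows.
\begin{enumerate}
\item Simulate $M$ on a semi-infinite tape using the standard folding or shifting construction.
\item Encode the tape alphabet of $M$ over $\{0,1\}$ with a block code. Write the encoding so that a cell never needs to be rewritten to $B$: replace ``blank written by $M$'' with a dedicated nonblank code word. Then $M'$ only ever writes $0$ or $1$, while reading a raw $B$ is interpreted as the blank code.
\item Make the state that represents the first outcome ($M$ entering $r$) transition into $q_c$, which by definition stays in place forever writing $0$.
\item Make the second outcome ($M$ entering $f$) transition to $q_f$.
\end{enumerate}
Because $M$ is deterministic and $M'$ simulates it step by step, this gives: $M$ reaches $r$ iff $M'$ is $c$-looping, and $M$ reaches $f$ iff $M'$ halts. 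If $M$ runs forever without either outcome, $M'$ neither halts nor reaches $q_c$. The map is clearly computable.

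Now suppose some decidable $S$ contained all $c$-looping machines of the restricted class and no halting ones. Then $\{M \mid M' \in S\}$ would be decidable, would contain every $M$ reaching $r$, and would exclude every $M$ reaching $f$. This contradicts the classical inseparability. For completeness I would recall that classical result via the usual diagonal argument: let $D$ be the machine that, on its own code, runs the decider for the purported separator and enters the opposite outcome.

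The main obstacle is the bookkeeping in the simulation. The new machine must simultaneously respect the semi-infinite tape, never write $B$, and have $q_c$ with exactly the prescribed transitions $\delta(q_c,\sigma)=(q_c,0,\mathit{Stay})$. It must also guarantee that $q_c$ is reached \emph{only} through the first outcome and never spuriously, for example at a left-end boundary move. I would handle the boundary by marking the leftmost cell with a special code word on the first step. Left moves off that marker are then simulated as the standard fold, so no transition attempts to fall off the tape.
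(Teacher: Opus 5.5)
Your proposal is correct and follows essentially the same route as the paper: you reduce from the diagonal-argument inseparability of ``halts with output $0$'' versus ``halts with output $1$'' via an effective map $M\mapsto M'$ that sends the first outcome to $q_c$ and the second to $q_f$, and then pull back any decidable separator. Your treatment of the simulation (a binary block code that avoids writing blanks, and a left-end marker) supplies details the paper only asserts; one small addition is needed: restricted machines start on an empty tape, so $M'$ must first write the input (the code of $M$) before simulating.
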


\begin{proof}
See Appendix.
\end{proof}

With the recursive inseparability in place, we only need one more step before getting the final theorem, that is we will convert the above result into a minimal encoding.
\begin{lemma}[Minimal Encoding]\label{thm:min-enc}
Let
\[
M=(Q,\Gamma,\delta,q_0,q_f)
\]
be a Turing machine on a semi-infinite tape with a capturing state \(q_c\), and suppose that \(M\) never overwrites any symbol with a blank symbol. Then there exists an effective translation \(T\) such that \(T(M)\) is a regular expression over an arbitrary alphabet \(\Sigma\) containing at least the three letters \(\{a,b,c\}\), equipped with a commutativity condition \(C\) such that
\[
(a,b)\in C,\qquad (b,c)\in C,\qquad (a,c)\notin C.
\]
The following hold:
\begin{itemize}
    \item if \(M\) is \(c\)-looping, then
    \[
    \ka{KA}{C}{T(M)}{\Sigma^*};
    \]
    \item if
    \[
    \ka{KA^*}{C}{T(M)}{\Sigma^*},
    \]
    then \(M\) does not halt.
\end{itemize}
\end{lemma}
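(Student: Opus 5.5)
The plan is to transfer Theorems~\ref{thm:halting-raw} and~\ref{thm:c-loop-raw} from the large alphabet $\Sigma_0=\{0,1,B,\#\}\cup Q\cup\{b\}$, with the star-shaped condition $C_0$, to the alphabet $\{a,b,c\}$ via a homomorphic encoding. We keep $b$ fixed and encode every other symbol $\sigma\in\Sigma_0\setminus\{b\}$ as a block $h(\sigma)=a\,w_\sigma\,a$. Here the $w_\sigma\in c^+$ are pairwise distinct powers of $c$, for instance $h(\sigma_i)=ac^{i}a$. The point is that $a$ and $c$ do not commute, so the internal order of a block is rigid. Meanwhile $b$ commutes with both $a$ and $c$, and therefore commutes with every block, which reproduces exactly the relation $C_0$.

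We first treat the case $\Sigma=\{a,b,c\}$; the case of extra letters follows by adding a summand for words that mention any extra letter. We then define
\[
T(M)\;\defeq\; h(\overline{H}(M)) \;+\; e_{\mathit{junk}},
\]
where $h$ is extended homomorphically to expressions and $e_{\mathit{junk}}$ is a regular expression for the complement of $L(h(\Sigma_0^*))$. Concretely, $e_{\mathit{junk}}$ captures words whose $b$-erased projection onto $\{a,c\}^*$ is not in $(\sum_\sigma a w_\sigma a)^*$. The notion "$b$-erased projection is well formed" is invariant under $\equiv_C$, because $\equiv_C$ only moves $b$.

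\textbf{Second bullet.} We work in $KA^*+C$, that is, with $\lc$ by Theorem~\ref{thm:equiv}. Suppose $\lc(T(M))$ is everything, and let $x$ be any $\Sigma_0$-word. Then $h(x)$ is well formed, so it is not in $L_C(e_{\mathit{junk}})$, and hence $h(x)\equiv_C h(y)$ for some $y\in L(\overline{H}(M))$. Decoding shows $x\equiv_{C_0} y$, since $h$ is injective on block sequences and commutes with the $b$-moves. So $\overline{H}(M)$ is $C_0$-universal, and Theorem~\ref{thm:halting-raw} gives that $M$ does not halt.

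\textbf{First bullet.} Suppose $M$ is $c$-looping. Theorem~\ref{thm:c-loop-raw} yields a $KA+C_0$ proof of $\Sigma_0^*\le\overline{H}(M)$. We substitute $h$ into this proof: every $KA$ axiom instance remains an axiom instance, and each hypothesis $\sigma b=b\sigma$ becomes $h(\sigma)b=bh(\sigma)$. The latter is derivable in $KA+C$ from $ab=ba$ and $cb=bc$ by moving $b$ through the block letter by letter. Hence $KA+C\vDash h(\Sigma_0^*)\le h(\overline{H}(M))$.

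It remains to show
\[
KA+C\vDash \Sigma^*\le h(\Sigma_0)^* + e_{\mathit{junk}}.
\]
This step has to be done in $KA$ with $C$-commutations. The approach is to rewrite $\Sigma^*=(a+c+b)^*$ as $b^*(a+c)^*$-shaped factors using the commutation of $b$ with $a$ and $c$, via the star-sliding identities $(x+y)^*=x^*(yx^*)^*$ and $xb=bx\Rightarrow x^*b=bx^*$. We then split $(a+c)^*$ into well-formed and ill-formed parts by ordinary KA completeness, since $a,c$ involve no commutation. Here $e_{\mathit{junk}}$ should be chosen deliberately in a syntactic shape that matches this decomposition, for example as $b^*$ interleaved with a KA-expression over $\{a,c\}$ for malformed words, so that KA completeness applies to the $\{a,c\}$-part directly.

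The main obstacle is this last step, together with showing $h(\Sigma_0)^*$ absorbs the interleaved $b$'s inside $KA+C$. For that I would try to show $b$ commutes with $h(\Sigma_0)$, then use $(x+b)^*=x^*b^*$ for commuting $x,b$ (a KA fact).
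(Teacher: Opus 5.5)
Your argument is correct, and it departs from the paper exactly at the choice of the junk expression, which is also where the paper's own proof breaks.

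\textbf{The paper's route.} The paper takes $e_{\mathit{complement}}=\Sigma^*\setminus(\sum_{\sigma\in\Sigma'}f(\sigma))^*$ as an ordinary-language complement, with $f(b)=b$. Then $e_{\mathit{complement}}+(\sum_\sigma f(\sigma))^*=\Sigma^*$ is a plain KA identity, and the first bullet is immediate.

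\textbf{Why it fails.} That set is not closed under $\equiv_C$. Take the halting trace $w$ and commute one trailing $b$ into the middle of the first codeword $f(\#)$. The result is $C$-equivalent to $f(w)$ but no longer parses into codewords, so it lies in $L(e_{\mathit{complement}})$. Hence the paper's claim $f(w)\notin\lc(T(M))$ is false. Together with the facts that $e_{\mathit{mismatch\_b}}$ already covers every nonempty $b$-free word and every $b^j$, and $e_{\mathit{illegal}}$ covers the empty word, the paper's $T(M)$ is in fact $KA^*+C$-universal for every $M$.

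\textbf{Your route.} Your junk set is defined through the $b$-erased projection $\pi_{ac}$, which is $\equiv_C$-invariant. What your second-bullet decoding really needs is only that every word of $L(e_{\mathit{junk}})$ has an ill-formed projection; then $\lc(e_{\mathit{junk}})$ contains no $h(x)$. The price is that the first bullet now requires a genuine $KA+C$ derivation of $\Sigma^*\le h(\Sigma_0)^*+e_{\mathit{junk}}$, rather than plain KA completeness. Your code $h(\sigma)=ac^ia$ versus the paper's binary code is immaterial, since both are uniquely decodable.

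\textbf{Closing your remaining step.} It is easier than you fear. Work over $\{a,b,c\}$; extra letters are handled by your extra summand.
\begin{itemize}
\item Set $x=a+c$, which commutes with $b$ in $KA+C$. Your fact $(x+b)^*=x^*b^*$ gives $\Sigma^*=(a+c)^*b^*$.
\item KA completeness over $\{a,c\}$ gives $(a+c)^*=W+E$, where $W=(\sum_\sigma h(\sigma))^*$ and $E$ denotes the ill-formed $\{a,c\}$-words.
\item $Wb^*\le(\sum_\sigma h(\sigma)+b)^*=h(\Sigma_0)^*$ is plain KA.
\item $Eb^*\le e_{\mathit{junk}}$ follows from KA completeness once $L(Eb^*)\subseteq L(e_{\mathit{junk}})$.
\end{itemize}
So you never need $h(\Sigma_0)^*$ to absorb interleaved $b$'s. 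The only constraint on $e_{\mathit{junk}}$ is $L(Eb^*)\subseteq L(e_{\mathit{junk}})\subseteq\{u:\pi_{ac}(u)\notin L(W)\}$.

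Accordingly, drop the phrase ``complement of $L(h(\Sigma_0^*))$''. Read literally as an ordinary complement, it is exactly the paper's non-invariant set and would reintroduce the same bug.
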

\begin{proof}
In the previous construction, we defined a translation from \(M\) to a regular expression over the alphabet
\[
\Sigma'=\{0,1,B,\#\}\cup Q\cup\{b\},
\]
with commutativity condition
\[
C'=\{(\sigma,b)\mid \sigma\in \{0,1,B,\#\}\cup Q\}.
\]
We now reduce this alphabet to \(\{a,b,c\}\) by encoding each symbol in \(\{0,1,B,\#\}\cup Q\) by a distinct binary word of length
\[
\left\lceil \log_2\bigl(|\{0,1,B,\#\}\cup Q|\bigr)\right\rceil
\]
over \(\{a,c\}\), while keeping \(b\) unchanged. We denote this encoding by \(f\).

Since \(f\) is a binary encoding into \(\{a,c\}^*\) while keeping the $b$'s, not every word over \(\Sigma\) lies in the image of the translation. To prove universality, we therefore introduce a regular expression for the complement of the encoded image:
\[
e_{\mathit{complement}}
=
\Sigma^*\setminus \Bigl(\sum_{\sigma\in\Sigma'} f(\sigma)\Bigr)^*.
\]
We then define the final expression by
\[
T(M)=f\bigl(\overline{H}(M)\bigr)+e_{\mathit{complement}}.
\]

By construction, we immediately have
\[
\kar{e_{\mathit{complement}}+\Bigl(\sum_{\sigma\in\Sigma'} f(\sigma)\Bigr)^*}{\Sigma^*}.
\]
The remaining argument is straightforward: one transfers the properties of \(\overline{H}(M)\) through the encoding \(f\), and then uses the complement term to cover all words outside the image of the encoding. We defer the full details to the Appendix.
\end{proof}

We can now conclude the undecidability result.
\begin{theorem}[Undecidability of Universality]
\label{thm:undec-main}
The universality problem
\(
\ka{KA}{C}{e}{\Sigma^*}
\)
is undecidable whenever \(C\) is not transitive.
\end{theorem}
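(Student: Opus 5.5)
The proof is a reduction from the recursive inseparability of $c$-looping and halting machines, using Lemma~\ref{thm:min-enc} to turn each machine into a universality instance. Suppose $C$ is a non-transitive commutativity condition on a finite alphabet $\Sigma$. Since $C$ is reflexive and symmetric but not transitive, there are $x,y,z\in\Sigma$ with $(x,y)\in C$ and $(y,z)\in C$ but $(x,z)\notin C$. The letters $x$ and $z$ must be distinct, since otherwise reflexivity would give $(x,z)\in C$. The letter $y$ differs from both: if, say, $y=x$, then $(y,z)=(x,z)\in C$, a contradiction, and symmetrically for $y=z$. Renaming $x,y,z$ to $a,b,c$ puts us exactly in the hypotheses of Lemma~\ref{thm:min-enc}. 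That lemma already allows an arbitrary alphabet containing these three letters and an arbitrary $C$ with this local pattern. I would still double-check that the translation $T$ only uses the letters $a,b,c$ inside $f(\overline{H}(M))$ and handles every other letter through $e_{\mathit{complement}}$. If its appendix proof silently assumed $\Sigma=\{a,b,c\}$, I would fix this by noting that any word containing a letter outside $\{a,b,c\}$ lies in $e_{\mathit{complement}}$, and that the commutations of such extra letters with $a,b,c$ never move a word into or out of the encoded image.

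Next I would define the candidate separating set $S=\{M \mid \ka{KA}{C}{T(M)}{\Sigma^*}\}$, where $M$ ranges over Turing machines on a semi-infinite tape that have a capturing state and never write blanks. Suppose universality for $KA+C$ were decidable. Because $T$ is effective, membership of $M$ in $S$ would then be decidable. Two inclusions need checking.

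\textbf{$S$ contains every $c$-looping machine.} This is the first bullet of Lemma~\ref{thm:min-enc}.

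\textbf{$S$ contains no halting machine.} If $M\in S$, then $\ka{KA}{C}{T(M)}{\Sigma^*}$. Every $*$-continuous Kleene algebra is a Kleene algebra, so this gives $\ka{KA^*}{C}{T(M)}{\Sigma^*}$. The second bullet of Lemma~\ref{thm:min-enc} then says $M$ does not halt.

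So $S$ would be a decidable set separating the $c$-looping machines from the halting ones, contradicting the recursive inseparability lemma. Hence universality in $KA+C$ is undecidable.

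The substantive work is already carried out in the earlier results, so the only step I expect to need care is the alphabet-generality point in the first paragraph. There is a further decidability subtlety: the problem is undecidable for each fixed non-transitive $C$, not merely uniformly over all $C$. This is fine, because $T$ is computable for a fixed $C$ once $a,b,c$ have been chosen.
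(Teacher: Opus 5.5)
\textbf{Same reduction as the paper; inherited gap.} Your argument for this theorem is the paper's own. You extract the non-transitive triple, invoke Lemma~\ref{thm:min-enc}, pass from $KA$- to $KA^*$-validity to keep halting machines out of $S$, and contradict recursive inseparability. Your side remarks on distinctness of the letters, extra letters, and fixed $C$ are correct. The trouble is the step that carries the halting side, the second bullet of Lemma~\ref{thm:min-enc}: it fails for the translation $T$ the paper actually builds.

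\textbf{Why the second bullet fails.} The danger is not the extra-letter issue you flagged but something inside $\{a,b,c\}$. Since $(a,b),(b,c)\in C$, a $b$ can slide into the middle of a codeword $f(\sigma)\in\{a,c\}^{\ell}$, where $\ell=\lceil\log_2|\{0,1,B,\#\}\cup Q|\rceil\ge 3$. Move one trailing $b$ of the halting witness $f(w)$ to just after its first letter. The result $y$ satisfies $y\equiv_C f(w)$, its first letter is in $\{a,c\}$ and its second is $b$, so $y\notin L\bigl((\sum_{\sigma\in\Sigma'}f(\sigma))^*\bigr)$. Hence $y\in L(e_{\mathit{complement}})$ and $[f(w)]\in L_C(T(M))$. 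The appendix's claim that $f(w)$ ``cannot belong to $e_{\mathit{complement}}$'' is true for $L$ but false for $L_C$.

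In fact $L_C(T(M))=L_C(\Sigma^*)$ for every $M$. Every class containing $b$ together with a letter of $\{a,c\}$, or containing any letter outside $\{a,b,c\}$, meets $e_{\mathit{complement}}$. The remaining classes, $\{u\}$ with $u\in\{a,c\}^*$ and $\{b^k\}$, are covered by $e_{\mathit{complement}}$, by the $f$-image of $e_{\mathit{mismatch\_b}}$, or (for the empty word) by the $f$-image of $e_{\mathit{match\_b}}\setminus e_{\mathit{traces}}$. So nothing keeps halting machines out of $S$, and your proof inherits this hole.

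\textbf{A repair.} Make the complement depend only on $C$-invariant data; your reduction then goes through unchanged. Put $F=\sum_{\sigma\neq b}f(\sigma)$ and let $G$ be an expression over $\{a,c\}$ with $L(G)=\{a,c\}^*\setminus L(F^*)$. Take $T(M)=f(\overline{H}(M))+G\,b^*+\sum_{d}\Sigma^*d\,\Sigma^*$, with $d$ ranging over $\Sigma\setminus\{a,b,c\}$.

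For the halting side: because $(a,c)\notin C$, the $C$-class of a word over $\{a,b,c\}$ is determined by its $\{a,c\}$-projection and its number of $b$'s. With a fixed-length code, $[f(w)]$ therefore misses the new complement. It meets $f(\overline{H}(M))$ only if $[w]_{C'}\in L_{C'}(\overline{H}(M))$, which the argument of Theorem~\ref{thm:halting-raw} rules out.

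For the $c$-looping side: the appendix's star-commutation lemma ($pq=qp\Rightarrow p^*q=qp^*$), plus one use of the induction axiom, gives $xy=yx\Rightarrow(x+y)^*=x^*y^*$. Hence $KA+C\vDash(F+b)^*=F^*b^*$ and $KA+C\vDash(a+b+c)^*=(a+c)^*b^*$. Combine these with $KA\vDash F^*+G=(a+c)^*$ and with Theorem~\ref{thm:c-loop-raw} transported through $f$. This yields $KA+C\vDash T(M)=\Sigma^*$ for $c$-looping $M$.
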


\begin{proof}
Since \(C\) is not transitive, there exist letters \(a,b,c\in\Sigma\) such that
\[
(a,b)\in C,\qquad (b,c)\in C,\qquad (a,c)\notin C.
\]
By Lemma~\ref{thm:min-enc}, we can effectively construct, from each Turing machine \(M\), an expression \(T(M)\) such that:
\begin{itemize}
    \item if \(M\) is \(c\)-looping, then
    \[
    \ka{KA}{C}{T(M)}{\Sigma^*};
    \]

    \item if \(M\) halts, then
    \[
    KA+C\not\vDash T(M)=\Sigma^*.
    \]
    Indeed, suppose towards a contradiction that
    \[
    \ka{KA}{C}{T(M)}{\Sigma^*}.
    \]
    Since every equation valid in \(KA\) is also valid in \(KA^*\), it follows that
    \[
    \ka{KA^*}{C}{T(M)}{\Sigma^*}.
    \]
    By Lemma~\ref{thm:min-enc}, this implies that \(M\) does not halt, a contradiction.
\end{itemize}

We now apply recursive inseparability. Assume, for contradiction, that the predicate
\[
\ka{KA}{C}{T(M)}{\Sigma^*}
\]
is decidable. Then the set
\[
S=\{\,M \mid \ka{KA}{C}{T(M)}{\Sigma^*}\,\}
\]
would be decidable.

Therefore, \(S\) is a decidable set such that
\[
\{\,M \mid M \text{ is \(c\)-looping}\,\}\subseteq S
\qquad\text{and}\qquad
S\cap \{\,M \mid M \text{ halts}\,\}=\varnothing.
\]
This contradicts the recursive inseparability of the sets
\[
\{\,M \mid M \text{ is \(c\)-looping}\,\}
\qquad\text{and}\qquad
\{\,M \mid M \text{ halts}\,\}.
\]
Hence the universality problem is undecidable.
\end{proof}
\begin{corollary}[Undecidability of Equivalence]
The equivalence problem
\(
\ka{KA}{C}{e_1}{e_2}
\)
is undecidable whenever \(C\) is not transitive.
\end{corollary}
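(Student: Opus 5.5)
The corollary follows directly from Theorem~\ref{thm:undec-main}, because universality is a special case of equivalence. The plan is a many-one reduction from universality to equivalence. Given an expression $e$ over $\Sigma$, we map it to the pair $(e_1,e_2)=(e,\Sigma^*)$, where $\Sigma^*$ abbreviates the expression $(\sum_{\sigma\in\Sigma}\sigma)^*$. This map is clearly computable. By definition, $\ka{KA}{C}{e}{\Sigma^*}$ holds exactly when the equivalence instance $\ka{KA}{C}{e_1}{e_2}$ holds.

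The argument then runs by contradiction. Fix a non-transitive $C$ and suppose equivalence under $KA+C$ were decidable by some procedure $D$. Then universality would be decidable: on input $e$, run $D$ on $(e,\Sigma^*)$. This contradicts Theorem~\ref{thm:undec-main}.

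No real obstacle is expected. The only point to check is that the universality problem in Theorem~\ref{thm:undec-main} uses the same alphabet $\Sigma$ and the same $C$ as the equivalence instance. Lemma~\ref{thm:min-enc} allows an arbitrary alphabet containing $\{a,b,c\}$, so the reduction is uniform for every fixed non-transitive $C$. If one prefers to use the $T(M)$ construction directly, the same recursive-inseparability argument also works verbatim with the pair $(T(M),\Sigma^*)$.
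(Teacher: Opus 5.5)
Your proposal is correct and matches the paper's proof. Both instantiate $e_2=\Sigma^*$, so that equivalence specializes to universality, and then invoke Theorem~\ref{thm:undec-main}; your version simply spells out the many-one reduction and the contradiction explicitly.
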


\begin{proof}
Take
\(
e_2=\Sigma^*.
\)
Then the equivalence problem specializes to the universality problem, which is undecidable by Theorem~\ref{thm:undec-main}.
\end{proof}
\section{Related Work}

In this section, we discuss related work and its role in the development of this line of research. It is already known that regular languages with commutativity conditions are decidable if and only if \(C\) is transitive~\cite{Ibarra78,regtrans82}. The constructions in these works are relatively concise, each occupying only about half a page. Our undecidability proof uses a modified version of the construction of \citet{Ibarra78}, while our decidability proof shares the idea of partitioning the alphabet into subalphabets of transitively commuting letter groups with the earlier decision procedure of \citet{regtrans82}. However, the factorization process for constructing an \(\epsilon\)-free matrix, together with the subsequent supporting-expression framework, is new to this work.

The main difficulty in our setting, and more broadly throughout the line of other results for Kleene Algebra~\cite{kozen1994ka,Kozen96KAT,kozen02comp,kuzuetsov23kac,azevedo25kac}, does not lie in the underlying decidability or undecidability constructions, but rather in proving the corresponding results while using the \( * \)-operator only in ways justified by the axioms of Kleene algebra.
For regular languages, one can readily unfold the star operation as an infinite sum
\[
e^*=\sum_i e^i.
\]
This makes it possible to reason about \(e^*\) through its finite approximants \(e^i\): properties of \(e^*\) can often be established by proving the corresponding statement for each \(e^i\), typically by induction on \(i\). In Kleene algebra, however, one cannot literally decompose a star expression into smaller pieces in this way. Instead, one must reason algebraically through the star axioms, for example
\[
ab \le b \Rightarrow a^*b \le b
\qquad\text{and}\qquad
1+aa^* \le a^*.
\]
This is the main reason why the transfer from regular-language arguments to \(KA+C\) is substantially more subtle.

Thus, more than forty years after the regular-language result was proposed, undecidability of \(KA+C\) was finally established independently by \citet{kuzuetsov23kac} and \citet{azevedo25kac}. The proof of \citet{kuzuetsov23kac} encodes undecidability via Post's correspondence problem, following \citet{Kozen96KAT}'s original construction for proving undecidability of \(KA^*+C\), while \citet{azevedo25kac} uses encoding from two-counter machines. However, PCP essentially requires two sets of words, while two-counter machines require separate encodings for the two counters. Together with the commutativity patterns needed for each set of words or counters, these constructions fundamentally require an alphabet with four letters, say \(\{a,b,c,d\}\), together with a non-transitive commutativity pattern such as
\[
\{(a,c),(a,d),(b,c),(b,d)\}.
\]
In contrast, we bypass this limitation by encoding the problem directly from Turing machines, rather than through the intermediate encodings of PCP or two-counter machines. This allows us to strengthen the previous results both in terms of the minimal non-transitive commutativity conditions and in showing that universality is already undecidable.

On the decidability side, another line of related work worth mentioning concerns coincidence results between \(KA\) and \(KA^*\), as well as some of their extensions. Our gesture that the equational theory of \(KA+C\) should coincide with that of \(KA^*+C\) is motivated by the coincidence of \(KA\) with \(KA^*\)~\cite{kozen1994ka}, as well as the coincidence of \(CKA\) with \(CKA^*\)~\cite{pilling1970algebra}. Still, such coincidence theorems typically require substantially different proofs in different settings. Kozen's proof for \(KA\)~\cite{kozen1994ka} relies on a direct automata-theoretic encoding, while the corresponding result for \(CKA\) is proved by showing that its equational theory coincides with equality of Parikh images. The results for \(KAT\)~\cite{kozen97kat} and for certain systems \(KA+E\), where \(E\) rewrites words of length greater than one into single letters~\cite{kozen14kae}, reduce the problem to ordinary \(KA\) equivalence, but again through case-specific arguments.

Our proof is inspired by the overall strategy used for \(KAT\)~\cite{kozen97kat} and \(KA+E\)~\cite{kozen14kae}: for each expression \(p\), we first construct a provably equivalent normal form \(\widetilde{p}\), and then show that two such normal forms are equivalent in \(KA+C\) if and only if they are equivalent in ordinary \(KA\). Still, as in other coincidence proofs, our construction of the normal form \(\widetilde{p}\) is entirely different from those used for \(KAT\) and \(KA+E\).

More specifically, the \(KA+E\) results apply to systems in which \(E\) rewrites words of length greater than one into single letters, which is fundamentally different from commutativity conditions, where each equation \(ab=ba\) rewrites a word of length \(2\) into another word of length \(2\). For \(KAT\), one might ask whether \(KA+C\) could be embedded directly into a \(KAT\) framework, since our construction fundamentally relies on the Boolean algebra of semilinear sets. However, the answer is negative. In \(KA+C\), the Parikh interpretation satisfies
\[
P(L(0))=\emptyset
\qquad\text{and}\qquad
P(L(1))=\{(0,0,\ldots,0)\},
\]
whereas in the Boolean algebra of semilinear sets used for KAT,
\[
0_B=\emptyset
\qquad\text{and}\qquad
1_B=\mathbb{N}^k.
\]
This mismatch prevents a direct embedding of \(KA+C\) into \(KAT\), and hence prevents us from reusing the existing constructions for \(KAT\).

Once the coincidence result is established, we can combine the existing regular language results~\cite{kozen02comp,regtrans82} with our undecidability result to obtain the full decidability landscape of Kleene algebra with commutativity conditions.
\section{Conclusion}

In this paper, we settle the decidability of $KA+C$ by identifying \emph{transitivity} of the commutativity conditions as the exact structural property that determines the decidability boundary. We further strengthen the picture in both directions: when $C$ is transitive, the equational theories of \kc\ and \kcs\ coincide; when $C$ is not transitive, the universality problem for \kc\ is undecidable. 

\bibliography{reference,more-references}

@inproceedings{netkat, author = {Anderson, Carolyn Jane and Foster, Nate and Guha, Arjun and Jeannin, Jean-Baptiste and Kozen, Dexter and Schlesinger, Cole and Walker, David}, title = {{NetKAT}: Semantic Foundations for Networks}, year = {2014}, booktitle = {ACM SIGPLAN-SIGACT Symposium on Principles of Programming Languages}, pages = {113–126} }

@article{kozen1994ka,
title = {A Completeness Theorem for Kleene Algebras and the Algebra of Regular Events},
journal = {Information and Computation},
volume = {110},
number = {2},
pages = {366-390},
year = {1994},
issn = {0890-5401},
doi = {https://doi.org/10.1006/inco.1994.1037},
url = {https://www.sciencedirect.com/science/article/pii/S0890540184710376},
author = {D. Kozen}
}

@book{Davey_Priestley_2002, place={Cambridge}, edition={2}, title={Introduction to Lattices and Order}, publisher={Cambridge University Press}, author={Davey, B. A. and Priestley, H. A.}, year={2002}}

@article{Halmos1966-HALLOB,
	author = {Paul R. Halmos},
	doi = {10.2307/2269816},
	journal = {Journal of Symbolic Logic},
	number = {2},
	pages = {253--254},
	publisher = {Association for Symbolic Logic},
	title = {Lectures on Boolean Algebras},
	volume = {31},
	year = {1966}
}

@book{Sikorski1960-SIKBA-2,
	address = {Berlin, Germany},
	author = {Roman Sikorski},
	editor = {},
	publisher = {Springer},
	title = {Boolean Algebras},
	year = {1960}
}

@phdthesis{pilling1970algebra,
  title={The algebra of operators for regular events},
  author={Pilling, Donald L},
  year={1970},
  school={Cambridge, UK; Cambridge University}
}

@inproceedings{to2010parikh,
author = {Kopczynski, Eryk and To, Anthony Widjaja},
title = {Parikh Images of Grammars: Complexity and Applications},
year = {2010},
isbn = {9780769541143},
publisher = {IEEE Computer Society},
address = {USA},
url = {https://doi.org/10.1109/LICS.2010.21},
doi = {10.1109/LICS.2010.21},
booktitle = {Proceedings of the 2010 25th Annual IEEE Symposium on Logic in Computer Science},
pages = {80–89},
numpages = {10},
series = {LICS '10}
}

@article{parikh1966context,
author = {Parikh, Rohit J.},
title = {On Context-Free Languages},
year = {1966},
issue_date = {Oct. 1966},
publisher = {Association for Computing Machinery},
address = {New York, NY, USA},
volume = {13},
number = {4},
issn = {0004-5411},
url = {https://doi.org/10.1145/321356.321364},
doi = {10.1145/321356.321364},
journal = {J. ACM},
month = oct,
pages = {570–581},
numpages = {12}
}

@inproceedings{badban2010semilinear,
author = {Badban, Bahareh and Dashti, Mohammad Torabi},
title = {Semi-linear Parikh images of regular expressions via reduction},
year = {2010},
isbn = {364215154X},
publisher = {Springer-Verlag},
address = {Berlin, Heidelberg},
booktitle = {Proceedings of the 35th International Conference on Mathematical Foundations of Computer Science},
pages = {653–664},
numpages = {12},
location = {Brno, Czech Republic},
series = {MFCS'10}
}

@article{ginsburg66semi,
author = {Seymour Ginsburg and Edwin H. Spanier},
title = {{Semigroups, Presburger formulas, and languages.}},
volume = {16},
journal = {Pacific Journal of Mathematics},
number = {2},
publisher = {Pacific Journal of Mathematics, A Non-profit Corporation},
pages = {285 -- 296},
year = {1966},
}

@InProceedings{kozen97kat,
author="Kozen, Dexter
and Smith, Frederick",
editor="van Dalen, Dirk
and Bezem, Marc",
title="Kleene algebra with tests: Completeness and decidability",
booktitle="Computer Science Logic",
year="1997",
publisher="Springer Berlin Heidelberg",
address="Berlin, Heidelberg",
pages="244--259",
isbn="978-3-540-69201-0"
}

@article{Brunet19note,
  author       = {Paul Brunet},
  title        = {A note on commutative Kleene algebra},
  journal      = {CoRR},
  volume       = {abs/1910.14381},
  year         = {2019},
  url          = {http://arxiv.org/abs/1910.14381},
  eprinttype    = {arXiv},
  eprint       = {1910.14381},
  bibsource    = {dblp computer science bibliography, https://dblp.org}
}

@InProceedings{kozen14kae,
author="Kozen, Dexter
and Mamouras, Konstantinos",
editor="Esparza, Javier
and Fraigniaud, Pierre
and Husfeldt, Thore
and Koutsoupias, Elias",
title="Kleene Algebra with Equations",
booktitle="Automata, Languages, and Programming",
year="2014",
publisher="Springer Berlin Heidelberg",
address="Berlin, Heidelberg",
pages="280--292",
isbn="978-3-662-43951-7"
}

@inproceedings{regtrans82,
author = {Bertoni, Alberto and Mauri, Giancarlo and Sabadini, Nicoletta},
title = {Equivalence and Membership Problems for Regular Trace Languages},
year = {1982},
isbn = {3540115765},
publisher = {Springer-Verlag},
address = {Berlin, Heidelberg},
booktitle = {Proceedings of the 9th Colloquium on Automata, Languages and Programming},
pages = {61–71},
numpages = {11}
}

@article{kozen02comp,
title = {On the Complexity of Reasoning in Kleene Algebra},
journal = {Information and Computation},
volume = {179},
number = {2},
pages = {152-162},
year = {2002},
issn = {0890-5401},
doi = {https://doi.org/10.1006/inco.2001.2960},
url = {https://www.sciencedirect.com/science/article/pii/S0890540101929608},
author = {Dexter Kozen}
}

@InProceedings{kuzuetsov23kac,
author="Kuznetsov, Stepan L.",
editor="{\'A}brah{\'a}m, Erika
and Dubslaff, Clemens
and Tarifa, Silvia Lizeth Tapia",
title="On the Complexity of Reasoning in Kleene Algebra with Commutativity Conditions",
booktitle="Theoretical Aspects of Computing -- ICTAC 2023",
year="2023",
publisher="Springer Nature Switzerland",
address="Cham",
pages="83--99",
isbn="978-3-031-47963-2"
}

@InProceedings{azevedo25kac,
  author =	{Azevedo de Amorim, Arthur and Zhang, Cheng and Gaboardi, Marco},
  title =	{{Kleene Algebra with Commutativity Conditions Is Undecidable}},
  booktitle =	{33rd EACSL Annual Conference on Computer Science Logic (CSL 2025)},
  pages =	{36:1--36:25},
  series =	{Leibniz International Proceedings in Informatics (LIPIcs)},
  ISBN =	{978-3-95977-362-1},
  ISSN =	{1868-8969},
  year =	{2025},
  volume =	{326},
  editor =	{Endrullis, J\"{o}rg and Schmitz, Sylvain},
  publisher =	{Schloss Dagstuhl -- Leibniz-Zentrum f{\"u}r Informatik},
  address =	{Dagstuhl, Germany},
  URL =		{https://drops.dagstuhl.de/entities/document/10.4230/LIPIcs.CSL.2025.36},
  URN =		{urn:nbn:de:0030-drops-227933},
  doi =		{10.4230/LIPIcs.CSL.2025.36}
}

@article{kozen1997kat,
author = {Kozen, Dexter},
title = {Kleene algebra with tests},
year = {1997},
issue_date = {May 1997},
publisher = {Association for Computing Machinery},
address = {New York, NY, USA},
volume = {19},
number = {3},
issn = {0164-0925},
url = {https://doi.org/10.1145/256167.256195},
doi = {10.1145/256167.256195},
journal = {ACM Trans. Program. Lang. Syst.},
month = may,
pages = {427–443},
numpages = {17}
}

@inproceedings{conway1971regular,
  title={Regular algebra and finite machines},
  author={John H. Conway},
  year={1971},
  url={https://api.semanticscholar.org/CorpusID:117721644}
}

@InProceedings{Kozen96KAT,
author="Kozen, Dexter",
editor="Margaria, Tiziana
and Steffen, Bernhard",
title="Kleene algebra with tests and commutativity conditions",
booktitle="Tools and Algorithms for the Construction and Analysis of Systems",
year="1996",
publisher="Springer Berlin Heidelberg",
address="Berlin, Heidelberg",
pages="14--33",
isbn="978-3-540-49874-2"
}

@article{Ibarra78,
author = {Ibarra, Oscar H.},
title = {Reversal-Bounded Multicounter Machines and Their Decision Problems},
year = {1978},
issue_date = {Jan. 1978},
publisher = {Association for Computing Machinery},
address = {New York, NY, USA},
volume = {25},
number = {1},
issn = {0004-5411},
url = {https://doi.org/10.1145/322047.322058},
doi = {10.1145/322047.322058},
journal = {J. ACM},
month = jan,
pages = {116–133},
numpages = {18}
}


\newpage
\appendix
\section{Appendix: Full Proof}
\subsection{Proof for Section \ref{sec:background}}
\begin{theorem}[Coincidence of $CKA$ and $CKA^{*}$]
For all expressions $e_1,e_2\in T_{\Sigma}$,
\[
CKA \vDash e_1 = e_2
\quad\Longleftrightarrow\quad
CKA^{*} \vDash e_1 = e_2 .
\]
\end{theorem}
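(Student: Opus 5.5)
The plan is to prove both implications by passing through the Parikh-image characterization of Theorem~\ref{thm:cka-completeness}. Since every $*$-continuous commutative Kleene algebra is in particular a commutative Kleene algebra, the forward direction $CKA \vDash e_1=e_2 \Rightarrow CKA^{*}\vDash e_1=e_2$ is immediate: an equation valid in all models of the larger class remains valid in the subclass. All the work lies in the converse.

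For the converse, I will use the concrete model $\Rat{\mathbb{N}^{\Sigma}}$, the rational subsets of the free commutative monoid $(\mathbb{N}^{\Sigma},+,\mathbf{0})$. Take the interpretation $I$ with $I(a)=\{\mathbf{e}_a\}$, where $\mathbf{e}_a$ is the unit vector at coordinate $a$. Two facts need to be checked.

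\emph{First}, $\Rat{\mathbb{N}^{\Sigma}}$ is a $*$-continuous commutative Kleene algebra. The paper already notes that $\Rat{M}$ is a $*$-continuous Kleene algebra for any monoid $M$, since the subalgebra inherits arbitrary unions of the form $\bigcup_n XY^nZ$ from $2^{M}$. Commutativity of the product, $X\cdot Y=\{x+y\}=Y\cdot X$, follows directly from commutativity of $+$ in $\mathbb{N}^{\Sigma}$.

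\emph{Second}, for every expression $e$ we have $I(e)=P(L(e))$. This goes by structural induction on $e$:
\begin{itemize}
\item the base cases are $0\mapsto\emptyset$, $1\mapsto\{\mathbf{0}\}$, and $a\mapsto\{\mathbf{e}_a\}$;
\item sum is union on both sides;
\item product holds because $\Psi(uv)=\Psi(u)+\Psi(v)$, so $P(L(e)L(f))=P(L(e))+P(L(f))$;
\item star holds because $P\bigl(\bigcup_n L(e)^n\bigr)=\bigcup_n P(L(e))^{n}$, with $P$ commuting with unions and products.
\end{itemize}

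With these two facts, assume $CKA^{*}\vDash e_1=e_2$. Instantiating in the model $(\Rat{\mathbb{N}^{\Sigma}},I)$ gives $I(e_1)=I(e_2)$, hence $P(L(e_1))=P(L(e_2))$. Theorem~\ref{thm:cka-completeness} (Pilling's result) then yields $CKA\vDash e_1=e_2$.

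The main obstacle is conceptual rather than technical, since the heavy lifting is delegated to Theorem~\ref{thm:cka-completeness}. The only care needed is in the star case of the induction and in confirming that $\Rat{\mathbb{N}^{\Sigma}}$, a subalgebra of $2^{\mathbb{N}^{\Sigma}}$, really is closed under the operations and satisfies $*$-continuity. Closure is by definition. For $*$-continuity, the supremum in the natural order of the subalgebra coincides with set union whenever that union lies in the subalgebra, and $XY^{*}Z$ is such a union.
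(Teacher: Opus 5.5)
Your proof is correct and follows the paper's own argument. Like the paper, you interpret expressions in the $*$-continuous commutative model $\Rat{\mathbb{N}^{\Sigma}}$ via unit vectors, observe that $I(e)=P(L(e))$, and conclude with Theorem~\ref{thm:cka-completeness}. Your explicit check of $*$-continuity and your structural induction state more carefully what the appendix only asserts informally as ``$*$-continuity preserves Parikh images.''
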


\begin{proof}
The direction $CKA \Rightarrow CKA^{*}$ is immediate.

For the converse, recall that in $CKA$,
\[
CKA \vdash e_1 = e_2
\;\Longleftrightarrow\;
P(L(e_1))=P(L(e_2)),
\]
so it suffices to show that the additional \(*\)-continuity axiom of $CKA^{*}$
preserves Parikh images.  The only extra axiom is
\[
a\, b^{*}\, c
=\sum_{n\ge0} a\, b^{n}\, c.
\]

Since $P(xy)=P(x)\oplus P(y)$ with $S_1 \oplus S_2 = \{\,u+v \mid u\in S_1,\; v\in S_2 \,\}$, and $P(b^{*})=\bigcup_{n\ge0}P(b^{n})$, we have
\[
P(a\, b^{*}\, c)
  = P(a) \oplus P(b^{*}) \oplus P(c)
  = P(a) \oplus \Bigl(\bigcup_{n\ge0} P(b^{n})\Bigr) \oplus P(c)
\]
\[
\phantom{P(a\, b^{*}\, c)}
  = \bigcup_{n\ge0} \bigl( P(a)\oplus P(b^{n})\oplus P(c) \bigr)
  = P\!\left(\sum_{n\ge0} a\, b^{n}\, c\right).
\]
Thus \(*\)-continuity preserves Parikh equivalence, hence any $CKA^{*}$ proof
reduces to a $CKA$ proof.  
\end{proof}

\subsection{Proof for Section \ref{sec:coin}}

\begin{lemma}[Commutativity of Star]\label{lem:comm}
For all expressions $p,q$, if $\ka{KA}{C}{pq}{qp}$, then
\[
\ka{KA}{C}{p^{*} q}{q p^{*}}.
\]
\end{lemma}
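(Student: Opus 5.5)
We prove the two inequalities $p^{*}q \le qp^{*}$ and $qp^{*} \le p^{*}q$ separately, using only the star induction axioms of Kleene algebra. Throughout we work in an arbitrary Kleene algebra $\mathcal{K}$ with an interpretation satisfying the commutativity conditions $C$, in which $pq = qp$ holds by hypothesis. The two directions are mirror images: one uses the left induction rule ($ab\le b \Rightarrow a^{*}b\le b$), the other the right induction rule ($ba\le b \Rightarrow ba^{*}\le b$).

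For $p^{*}q \le qp^{*}$, we apply left star induction in the slightly strengthened form $c + ab \le b \Rightarrow a^{*}c \le b$. This form is derivable from the basic axiom via $a^{*}c \le a^{*}b \le b$, using monotonicity and $c\le b$. We instantiate $a := p$, $c := q$, $b := qp^{*}$ and check two facts:
\begin{itemize}
  \item $q \le qp^{*}$, which holds because $1 \le p^{*}$;
  \item $p\,qp^{*} = q\,pp^{*} \le qp^{*}$, which uses the hypothesis $pq=qp$ followed by $pp^{*}\le p^{*}$ from the unfolding axiom $1+pp^{*}\le p^{*}$.
\end{itemize}
Together these give $q + p(qp^{*}) \le qp^{*}$, hence $p^{*}q \le qp^{*}$.

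For $qp^{*} \le p^{*}q$, we use the dual rule $c + ba \le b \Rightarrow ca^{*} \le b$, with $a := p$, $c := q$, $b := p^{*}q$. Here we need:
\begin{itemize}
  \item $q \le p^{*}q$, again from $1\le p^{*}$;
  \item $p^{*}q\,p = p^{*}pq \le p^{*}q$, which uses $qp=pq$ and then $p^{*}p \le p^{*}$ from the axiom $1+p^{*}p\le p^{*}$.
\end{itemize}
Antisymmetry of $\le$ then yields $p^{*}q = qp^{*}$ in $\mathcal{K}$, and since $\mathcal{K}$ was arbitrary we conclude $\ka{KA}{C}{p^{*}q}{qp^{*}}$.

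There is no real obstacle here. The only point needing care is to derive the strengthened induction rules $c+ab\le b \Rightarrow a^{*}c\le b$ (and its dual) from the stated axioms, and to apply associativity correctly when rewriting $p(qp^{*})$ as $(pq)p^{*}$. This lemma then serves as the base step for Lemma~\ref{thm:alphabet-commutative}. That lemma follows by structural induction: first on $p$ with $q$ a fixed letter, then on $q$, with the star case of each induction handled exactly by the present lemma and the sum and product cases following by distributivity and associativity.
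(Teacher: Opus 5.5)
Your proof is correct and follows the paper's argument. The paper proves $qp^{*}\le p^{*}q$ by right star induction with exactly your instantiation (establishing $q+(p^{*}q)p\le p^{*}q$) and dismisses the converse as ``vice versa.'' You write out that converse explicitly through the dual left-induction rule, with the same two checks.
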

\begin{proof}
Assume $\ka{KA}{C}{pq}{qp}$, i.e.\ $pq = qp$ is derivable in $KA+C$.

We first show that
\[
q + p^{*} q p \;\le\; p^{*} q.
\tag{$\star$}
\]

Indeed:
\[
\begin{aligned}
q + p^{*} q p
  &= q + p^{*} (q p) \\
  &= q + p^{*} (p q) && \text{(since $pq = qp$)} \\
  &= (1 + p^{*} p)\, q \\
  &= p^{*} q .
\end{aligned}
\]
This proves $(\star)$.

Now apply the KA \emph{right-induction} rule:
\[
\text{if } b + c a \le c \text{ then } b a^{*} \le c.
\]
Instantiate
\[
b := q,\qquad
a := p,\qquad
c := p^{*} q .
\]
Since $(\star)$ is precisely $q + (p^{*} q) p \le p^{*} q$, the rule yields
\[
q p^{*} \;\le\; p^{*} q.
\]

Vice versa, $ p^{*} q \;\le\;q p^{*}$.

Thus
\[
q p^{*} = p^{*} q,
\]
establishing the claim.
\end{proof}

\begin{corollary}[Commutativity of $b^{\le n}$]\label{lem:local-com}
For all KA expressions~$e$, we have $\ka{KA}{C}{b^{\le n}e} {e b^{\le n}}$.
\end{corollary}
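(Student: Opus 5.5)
The plan is to first prove the single-letter version, $\ka{KA}{C}{b\,e}{e\,b}$ for every expression $e$, by structural induction on $e$, and then lift it to $b^{\le n}$. Throughout, $C=\{(\sigma,b)\mid \sigma\in\{0,1,B,\#\}\cup Q\}$ is the condition of Section~\ref{sec:undec}, so $b$ commutes with every letter of $\Sigma$. Here $b^{\le n}$ abbreviates $\sum_{k=0}^{n} b^k$.

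\emph{Base and easy cases.} For a letter $\sigma\neq b$, the equation $b\sigma=\sigma b$ is an instance of $C$. For $\sigma=b$ it is trivial, since $C$ is reflexive. For the constants, $b0=0=0b$ holds by absorption and $b1=b=1b$ by the monoid laws.

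\emph{Inductive cases.} For $e=e_1+e_2$ I will use both distributive laws: $b(e_1+e_2)=be_1+be_2=e_1b+e_2b=(e_1+e_2)b$. For $e=e_1e_2$ I will use associativity twice: $b e_1 e_2=e_1 b e_2=e_1 e_2 b$. For $e=e_1^{*}$, the induction hypothesis gives $\ka{KA}{C}{e_1 b}{b e_1}$. Applying Lemma~\ref{lem:comm} with $p:=e_1$ and $q:=b$ then yields $e_1^{*}b=b e_1^{*}$.

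The star case is the only point where the Kleene algebra axioms are genuinely needed, and I expect it to be the main obstacle. It is settled entirely by Lemma~\ref{lem:comm}, which uses both induction rules (left and right) in its ``vice versa'' direction. I will make sure that lemma is invoked symmetrically, so that both inequalities are available.

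\emph{Lifting to $b^{\le n}$.} Next I show $b^k e=e b^k$ by induction on $k$. The case $k=0$ is $1e=e1$. For the step, $b^{k+1}e=b^k(be)=b^k(eb)=(b^k e)b=(e b^k)b=e b^{k+1}$. Finally, distributing over the finite sum gives $b^{\le n}e=\sum_{k\le n}b^k e=\sum_{k\le n}e b^k=e\,b^{\le n}$.

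Alternatively, the single-letter fact $be=eb$ can be combined directly with the sum and product rules applied to $p:=b^{\le n}$, since commuting with $e$ is closed under $+$ and $\cdot$ by the same computations as above.
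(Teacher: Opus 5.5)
Your proposal is correct and follows the paper's proof essentially step for step. The paper likewise proves $be=eb$ by structural induction, discharges the star case with the Commutativity of Star lemma at $(p,q)=(e_1,b)$, lifts to $b^k e=eb^k$ by induction on $k$, and then distributes over $b^{\le n}$. The only cosmetic difference is that the paper phrases the first step over the three-letter alphabet $\{a,b,c\}$ with $(a,b),(b,c)\in C$, whereas you use the Section~4 condition; either way, the only fact used is that $b$ commutes with every letter of the alphabet.
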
\begin{proof}
    We proceed in two steps. First we show that $b$ itself commutes with every
    expression $e$ over the alphabet $\{a,b,c\}$. Then we lift this property
    to the finite segment $b^{\le n}$ by induction on $n$.
    
    \medskip\noindent
    \emph{Step~1: $b$ commutes with every expression.}
    We claim that for all expressions $e$,
    \[
      \ka{KA}{C}{b e}{e b}.
    \]
    We prove this by structural induction on $e$.
    
    \smallskip\noindent
    \emph{Base cases.}
    If $e = 0$ or $e = 1$, then
    \[
      b0 = 0 = 0b,
      \qquad
      b1 = b = 1b,
    \]
    so $\ka{KA}{C}{b e}{e b}$ holds.
    
    If $e$ is one of the primitive letters $a,b,c$, it directly holds since $(a,b), (b,c) \in I$.
    
    \smallskip\noindent
    \emph{Sum.}
    If $e = e_1 + e_2$, then using distributivity and the induction hypothesis
    for $e_1$ and $e_2$ we obtain
    \[
      b(e_1 + e_2)
      = b e_1 + b e_2
      = e_1 b + e_2 b
      = (e_1 + e_2) b.
    \]
    
    \smallskip\noindent
    \emph{Product.}
    If $e = e_1 e_2$, then by associativity of multiplication and the induction
    hypothesis for $e_1$ and $e_2$,
    \[
      be = b e_1 e_2
      = (b e_1) e_2
      = (e_1 b) e_2
      = e_1 (b e_2)
      = e_1 (e_2 b)
      = e b.
    \]
    
    \smallskip\noindent
    \emph{Star.}
    If $e = p^{*}$ for some expression $p$, then by the induction hypothesis
    we have $\ka{KA}{C}{b p}{p b}$.
    Instantiating Lemma~\ref{lem:local-com} (Commutativity of Star) with
    this pair $(p,q) = (p,b)$ yields
    \[
      \ka{KA}{C}{p^{*} b}{b p^{*}}.
    \]
    By symmetry of equality, this is equivalent to
    $\ka{KA}{C}{b p^{*}}{p^{*} b}$.
    
    This completes the structural induction and proves that
    \[
      \ka{KA}{C}{b e}{e b}
    \]
    holds for all expressions $e$ over $\{a,b,c\}$.
    
    \medskip\noindent
    \emph{Step~2: $b^{\le n}$ commutes with every expression.}
    Fix $e$ and prove by induction on $n \in \mathbb{N}$ that
    \[
      \ka{KA}{C}{b^{\le n} e}{e\, b^{\le n}}.
    \]
    
    For $n = 0$ we have $b^{\le 0} = 1$, hence
    \[
      b^{\le 0} e = 1 \cdot e = e = e \cdot 1 = e b^{\le 0}.
    \]
    
    Assume now that $\ka{KA}{C}{b^{\le n} e}{e b^{\le n}}$ holds for some $n$.
    By definition of $b^{\le n+1}$ and distributivity of multiplication over
    addition,
    \[
      b^{\le n+1} e
      = (b^{\le n} + b^{n+1}) e
      = b^{\le n} e \;+\; b^{n+1} e.
    \]
    By the induction hypothesis, the first summand rewrites to $e b^{\le n}$.
    
    From Step~1, we know that $b$ commutes with $e$, i.e.\ $b e = e b$.
    By associativity of multiplication, a straightforward induction on $k$
    shows that $b^k e = e b^k$ holds for all $k \in \mathbb{N}$.
    In particular, $\ka{KA}{C}{b^{n+1} e}{e b^{n+1}}$.
    Thus
    \[
      b^{\le n+1} e
      = e b^{\le n} + e b^{n+1}
      = e (b^{\le n} + b^{n+1})
      = e b^{\le n+1},
    \]
    where the last equality is again the definition of $b^{\le n+1}$ and
    distributivity.
    
    This completes the induction on $n$ and proves that
    $\ka{KA}{C}{b^{\le n} e}{e b^{\le n}}$ holds for all $n \in \mathbb{N}$
    and all expressions $e$.
\end{proof}

\begin{theorem}
Let $\Sigma_i$ be one
equivalence class under~$C$.
For any expressions $p,q \in T_{\Sigma_i}$,
\[
  \ka{KA}{\{ab = ba : a,b \in \Sigma_i\}}{pq}{qp}.
\]
\end{theorem}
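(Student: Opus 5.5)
We argue by a nested structural induction, using the Commutativity of Star lemma (Lemma~\ref{lem:comm}) to handle the star case. Write $C_i=\{ab=ba : a,b\in\Sigma_i\}$. The plan is first to show that every single letter $a\in\Sigma_i$ commutes with every expression $q\in T_{\Sigma_i}$, and then to lift this to an arbitrary left factor $p$.

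\emph{Step 1: letters commute with everything.} Fix $a\in\Sigma_i$. We prove $\ka{KA}{C_i}{aq}{qa}$ by structural induction on $q$; this is the same scheme as Step~1 of Corollary~\ref{lem:local-com}.
\begin{itemize}
\item If $q$ is $0$ or $1$, the claim is immediate from the semiring axioms.
\item If $q=c$ is a letter of $\Sigma_i$, then $ac=ca$ is an axiom of $C_i$; reflexivity covers the case $c=a$.
\item If $q=q_1+q_2$, we use left and right distributivity together with the induction hypotheses.
\item If $q=q_1q_2$, we use associativity: $aq_1q_2=q_1aq_2=q_1q_2a$.
\item If $q=r^*$, the hypothesis $ar=ra$ lets us apply Lemma~\ref{lem:comm} with $(p,q):=(r,a)$, which gives $r^*a=ar^*$.
\end{itemize}

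\emph{Step 2: arbitrary left factors.} Now fix $q\in T_{\Sigma_i}$ and prove $\ka{KA}{C_i}{pq}{qp}$ by structural induction on $p$.
\begin{itemize}
\item If $p$ is $0$ or $1$, the claim is trivial.
\item If $p$ is a letter, the claim is exactly Step~1.
\item If $p=p_1+p_2$, we use distributivity.
\item If $p=p_1p_2$, we compute $p_1p_2q=p_1qp_2=qp_1p_2$ using the two induction hypotheses.
\item If $p=r^*$, the induction hypothesis gives $rq=qr$, and Lemma~\ref{lem:comm} gives $r^*q=qr^*$.
\end{itemize}
The induction hypothesis here is quantified only over subterms of $p$ with $q$ held fixed, so the nesting is well founded.

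\emph{Main obstacle.} The only delicate point is the star case, where we must not unfold $r^*$ as an infinite sum. Lemma~\ref{lem:comm} handles this using only the induction axioms $b+ca\le c\Rightarrow ba^*\le c$ and its mirror image. In applying it, we should check that both inclusions come out symmetrically. The inclusion $qp^*\le p^*q$ comes from right induction applied to $q+p^*qp\le p^*q$. The reverse inclusion $p^*q\le qp^*$ comes from left induction applied to $q+pqp^*\le qp^*$, which follows from $pq=qp$ and $1+pp^*\le p^*$. With that verified, the remaining cases are purely equational.
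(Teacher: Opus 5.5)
Your proof is correct and follows essentially the paper's argument: a structural induction in which every case except the star is purely equational (distributivity and associativity), and the star case is discharged by the Commutativity of Star lemma. The only difference is organizational. The paper inducts simultaneously on the pair $(p,q)$, whereas your nested induction (letters first, then arbitrary left factors) makes well-foundedness explicit. Your left-induction check of $p^*q\le qp^*$ also fills in the step the paper dismisses as ``vice versa''.
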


\begin{proof}
We argue by structural induction on the pair $(p,q)$ of expressions over
$\Sigma_i$.  Let $C(p,q)$ denote
\[
KA + \{ab = ba \mid a,b\in\Sigma_i\} \;\vdash\; pq = qp.
\]

\begin{itemize}
  \item \textbf{Both generators.}
    If $p = a$ and $q = b$ with $a,b \in \Sigma_i$, then by assumption
    $ab = ba$, so $pq = qp$ and $C(p,q)$ holds.

  \item \textbf{Zero and one.}
    If $p = 0$ or $p = 1$, then $pq = 0 = qp$ or $pq = q = qp$
    by the KA axioms.  The symmetric cases $q = 0$ or $q = 1$
    are analogous.

  \item \textbf{Addition in $p$ (or $q$).}  
    Suppose $p = p_1 + p_2$.
    By the induction hypothesis,
    \[
      p_1 q = q p_1
      \quad\text{and}\quad
      p_2 q = q p_2.
    \]
    Then, using distributivity,
    \[
      (p_1 + p_2) q
        = p_1 q + p_2 q
        = q p_1 + q p_2
        = q (p_1 + p_2),
    \]
    so $C(p,q)$ holds.  The case $q = q_1 + q_2$ is symmetric.

  \item \textbf{Multiplication in $p$ (or $q$).}
    Suppose $p = p_1 p_2$.
    By the induction hypothesis,
    \[
      p_1 q = q p_1,
      \quad
      p_2 q = q p_2.
    \]
    Then
    \[
    (p_1 p_2) q
      = p_1 (p_2 q)
      = p_1 (q p_2)
      = (p_1 q) p_2
      = (q p_1) p_2
      = q (p_1 p_2),
    \]
    so $C(p,q)$ holds.  The case $q = q_1 q_2$ is symmetric.

  \item \textbf{Star in $p$ (or $q$).}
    By induction hypothesis and theorem \ref{lem:local-com}, we directly prove the case.
\end{itemize}

Thus we have $pq = qp$ for all such $p,q\in Reg~\Sigma_i$, as required.
\end{proof}

\begin{lemma}
Every word \(w \in \Sigma^{*}\) admits exactly one factorization under \(C\).
\end{lemma}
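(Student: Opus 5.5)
We argue existence and uniqueness separately, treating the empty word as the trivial case: if \(w=1\), Definition~\ref{def:c-fact} prescribes the single block \(1\). Any other factorization of \(1\) would need at least one nonempty block \(u_i\in\Sigma_{k_i}^{+}\), which is impossible. So assume \(w=a_1a_2\cdots a_\ell\) with \(\ell\ge 1\), and let \(\kappa(a)\in\{1,\ldots,m\}\) denote the index of the class \(\Sigma_{\kappa(a)}\) containing the letter \(a\). This map is well defined because \(C\) is transitive, hence an equivalence relation, and \(\Sigma/C\) is a partition.

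\emph{Existence.} We take the maximal runs of constant class index. Call a position \(j\in\{1,\ldots,\ell-1\}\) a \emph{cut} if \(\kappa(a_j)\neq\kappa(a_{j+1})\), and split \(w\) at exactly these cuts. Within each resulting segment every letter has the same class index, so each block lies in some \(\Sigma_{k}^{+}\), and each block is nonempty. Two adjacent blocks meet at a cut, so their class indices differ. Equivalently, one can run an induction on \(\ell\): append \(a_{\ell}\) to the last block of the factorization of \(a_1\cdots a_{\ell-1}\) if it has the same class, and otherwise open a new block.

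\emph{Uniqueness.} Take any factorization \(w=u_1\cdots u_n\) satisfying the definition, and let \(J\) be the set of boundary positions, i.e.\ the \(j\) with \(j=|u_1\cdots u_i|\) for some \(1\le i<n\). We show \(J\) equals the set of cuts. If \(j\) is a boundary between \(u_i\) and \(u_{i+1}\), then \(a_j\) is the last letter of \(u_i\) and \(a_{j+1}\) is the first letter of \(u_{i+1}\). Hence \(\kappa(a_j)=k_i\neq k_{i+1}=\kappa(a_{j+1})\), and \(j\) is a cut. Conversely, if \(j\) is not a boundary, then \(a_j\) and \(a_{j+1}\) lie in the same block \(u_i\in\Sigma_{k_i}^{+}\), so \(\kappa(a_j)=\kappa(a_{j+1})\) and \(j\) is not a cut. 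A factorization into nonempty blocks is determined by its set of boundary positions, so the factorization is unique.

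There is no real obstacle here. The only point needing care is that \(\kappa\) is well defined, which is exactly where transitivity of \(C\) is used: without it, a letter need not belong to a unique class. A secondary point is the convention for \(w=1\), which is why that case is handled separately.
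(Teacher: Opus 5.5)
Your proof is correct and follows the same route as the paper. Existence comes from cutting $w$ into maximal runs of letters from a single class. Uniqueness comes from the fact that an admissible factorization can neither merge adjacent blocks nor split a block. Your boundary-set argument shows that the block boundaries of any admissible factorization coincide exactly with the positions where the class index changes, which is a more careful version of the paper's informal ``no freedom to merge or split'' remark.
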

\begin{proof}
If \(w=1\), then the only factorization under \(C\) is the single block \(1\).

Now suppose \(w\neq 1\). Since the equivalence classes of \(C\) partition the alphabet into subalphabets \(\Sigma_1,\dots,\Sigma_m\), we obtain a factorization of \(w\) by cutting \(w\) into maximal consecutive segments whose letters all lie in the same subalphabet. This yields a factorization under \(C\).

Uniqueness is immediate from maximality: once the word is partitioned into maximal consecutive segments coming from a single subalphabet, there is no freedom to merge two adjacent blocks, since they come from different subalphabets, and no freedom to split a block, since it is already maximal inside one subalphabet. Hence the factorization is unique.
\end{proof}

\begin{theorem}
For every expression \(p\) and its factorized form \(\hat{p}\), we have
\[
L(p)=L(\hat{p}).
\]
\end{theorem}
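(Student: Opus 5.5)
We prove both inclusions by reading matrix entries as path languages of an automaton. By Lemma~\ref{thm:matrix-rep}, $A=\sum_{a}aA_a$ is the transition matrix of an $n$-state automaton $\mathcal{N}$ with initial vector $u$ and final vector $v$. Under the standard language interpretation of matrices over $T_\Sigma$, $L((A^*)_{ij})$ is the set of words labelling a path from $i$ to $j$ in $\mathcal{N}$, so $L(p)=\bigcup_{u_i=v_j=1}L((A^*)_{ij})$. Likewise, $L((A_k^+)_{ij})$ is the set of \emph{nonempty} words over $\Sigma_k$ labelling a path from $i$ to $j$, since $A_k$ keeps only the transitions labelled by letters of $\Sigma_k$. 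We view $\hat{A}$ as an automaton $\hat{\mathcal{N}}$ with states $(k,i)$, for $k\in\{1,\dots,m\}$ and $i\le n$. The block in row $k$ and column $k'\neq k$ is $A_k^+$, so an edge from $(k,i)$ to $(k',j)$ reads a nonempty $\Sigma_k$-word driving $\mathcal{N}$ from $i$ to $j$, and then moves to a different colour $k'$.

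\emph{Inclusion $L(p)\subseteq L(\hat p)$.} Take $w\in L(p)$ accepted along a path $i_0\to i_N$ with $u_{i_0}=v_{i_N}=1$. If $w=1$, then $i_0=i_N$ is both initial and final, so $u^{(m)}_{(1,i_0)}=v^{(m)}_{(1,i_0)}=1$ and $1\le(\hat A^*)_{(1,i_0),(1,i_0)}$; hence $1\in L(\hat p)$. Otherwise, take the factorization $w=w_1\cdots w_r$ given by Lemma~\ref{thm:uniq-fac}, with $w_s\in\Sigma_{k_s}^+$ and $k_s\neq k_{s+1}$. Splitting the path at the block boundaries gives states $j_0=i_0,j_1,\dots,j_r=i_N$ with $w_s\in L((A_{k_s}^+)_{j_{s-1}j_s})$. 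Choose any colour $k_{r+1}\neq k_r$ (if $m=1$, note that $r=1$; this degenerate case is handled below). The path $(k_1,j_0)\to(k_2,j_1)\to\cdots\to(k_{r+1},j_r)$ in $\hat{\mathcal N}$ reads $w$, and every step uses an off-diagonal block $A_{k_s}^+$ with $k_{s+1}\neq k_s$. Since $u^{(m)}$ and $v^{(m)}$ are copies of $u$ and $v$ in every colour, this path starts at an initial state and ends at a final state, so $w\in L(\hat p)$.

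\emph{Inclusion $L(\hat p)\subseteq L(p)$.} Every path in $\hat{\mathcal N}$ from $(k_0,i_0)$ to $(k_r,i_r)$ reads a concatenation $x_1\cdots x_r$ with $x_s\in L((A_{k_{s-1}}^+)_{i_{s-1}i_s})\subseteq L((A^*)_{i_{s-1}i_s})$. Since $L(A^*_{ij})L(A^*_{jl})\subseteq L(A^*_{il})$, the whole word lies in $L((A^*)_{i_0i_r})$. The initial and final conditions transfer because $u^{(m)}_{(k,i)}=u_i$ and $v^{(m)}_{(k,i)}=v_i$. This direction does not even need the blocks to be maximal.

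\emph{Main obstacle.} The delicate step is making the first inclusion rigorous. This requires the language semantics of Kozen's inductively defined matrix star (Theorem~\ref{thm:ka-matrix}): we must show that $L((M^*)_{ij})$ is exactly the set of path labels from $i$ to $j$ in the graph of $M$. We will take this as standard, by $*$-continuity of regular languages $M^*=\sum_k M^k$, so that $(M^k)_{ij}$ collects the paths of length $k$. A second issue is the final hop to a fresh colour, which is unavailable when $m=1$: then $\hat A$ is the $1\times1$ zero block and $\hat p=u^Tv$. This is a genuine gap in the construction as literally stated, and there are two ways to close it. Either we treat $m=1$ separately, where $C$ is total and $\hat p$ should simply be taken to be $p$. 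Or we observe that, as a set of path labels, the intended reading allows ending in state $(k_r,j_r)$ itself. I would first try to patch the definition by letting the diagonal of $\hat A$ stay $0$ but ending the path at a last state whose colour is free. If that fails, I would restrict to $m\ge2$ and dispatch $m=1$ directly.
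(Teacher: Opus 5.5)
Your argument is correct for $m\ge 2$ and follows the paper's two-inclusion, path-unrolling structure. You are also right that the statement fails when $m=1$: for $\Sigma=\{a\}$ and $p=a$ one gets $\hat A=0$, $\hat A^{*}=I$ and $\hat p=u^{T}v=0$, so $L(\hat p)=\emptyset\neq\{a\}=L(p)$.

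The paper's proof misses this because it claims that $A_{k_1}^{+}\cdots A_{k_j}^{+}$ is a summand of the $(k_1,k_j)$ block of $\hat A^{j}$. That term would have to end with the zero diagonal block $(k_j,k_j)$. It actually sits in the $(k_1,k')$ block for some $k'\neq k_j$. This is exactly your hop to a fresh colour, and it exists only when $m\ge 2$.

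For the reverse inclusion the paper argues algebraically rather than with paths. Let $B$ be the block matrix all of whose blocks are $A^{*}$. Then $\hat A\le B$, $I\le B$ and $BB=B$, so $B^{*}=B$ and $\hat p\le (u^{(m)})^{T}Bv^{(m)}=u^{T}A^{*}v$. This is an actual $KA$ derivation and holds for every $m$. Your path argument gives the same language inclusion and is equally fine for this statement.

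Your first repair is sound: restrict to $m\ge 2$ and treat $m=1$ separately. This costs the paper nothing, since $m=1$ is the fully commutative case, covered by Lemma~\ref{thm:alphabet-commutative} together with Theorem~\ref{thm:cka-coin}.

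Your second repair does not work as phrased, because for $m=1$ there is no other colour to end in. A uniform fix in that spirit is to add a dummy colour $m+1$ with $A_{m+1}=0$. Its block row is zero, so it adds nothing to $L(\hat p)$. It is nevertheless always a legal final colour, so your forward argument goes through for every $m\ge 1$. This also keeps the shape $(u^{(m+1)})^{T}\hat A^{*}v^{(m+1)}$ and the property, used later, that consecutive hops read maximal blocks.
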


\begin{proof}
Write the matrix representation of \(p\) as
\[
p = u^{T} A^{*} v,
\qquad
A = \sum_{a \in \Sigma} a \cdot A_a,
\]
and the factorized expression as
\[
\hat{p}
= \bigl(u^{(m)}\bigr)^{T}\, (\hat{A})^{*}\, v^{(m)}.
\]

We prove both inclusions.

\paragraph{(1) \(L(p)\subseteq L(\hat{p})\).}
Let \(w \in L(p)\).  
By Theorem~\ref{thm:uniq-fac}, let its unique factorization under \(I\) be
\[
w = w_1 w_2 \cdots w_j,
\qquad
w_i \in \Sigma_{k_i}^{*},\ k_i \neq k_{i+1}.
\]

\smallskip
\emph{Case \(w = 1\).}  
Since every nonzero entry of \(A\) is a word of positive length, the only way
to obtain the empty word from \(u^{T}A^{*}v\) is through the constant path
\(u^{T}v\).  
Thus \(1 = u^{T}v = (u^{(m)})^{T} v^{(m)}\), hence \(1 \in L(\hat{p})\).

\smallskip
\emph{Case \(w \neq 1\).}  
Since \(A = \sum_{i=1}^{m} A_i\), we have
\[
A^{*}
  = \bigl(\sum_{i=1}^{m} A_i\bigr)^{*},
\qquad
A_i = \sum_{a\in\Sigma_i} a\cdot A_a.
\]
Hence
\[
w \in L\bigl(u^{T} A_{k_1}^{|w_1|} A_{k_2}^{|w_2|} \cdots A_{k_j}^{|w_j|} v\bigr)
  \subseteq
L\bigl(u^{T} A_{k_1}^{+} A_{k_2}^{+} \cdots A_{k_j}^{+} v\bigr).
\]

By construction of \(\hat{A}\), any block path of the form
\[
A_{k_1}^{+} A_{k_2}^{+} \cdots A_{k_j}^{+}
\]
appears as a summand in the \((k_1,k_j)\)-th $n\times n$ matrix entry of the $nm\times nm$ matrix \((\hat{A})^j\); let this
entry be the matrix block \(B\).
Thus \(A_{k_1}^{+} A_{k_2}^{+} \cdots A_{k_j}^{+} \le B\), and therefore
\[
L\bigl(u^{T} A_{k_1}^{+} A_{k_2}^{+} \cdots A_{k_j}^{+} v\bigr)
  \subseteq
L\bigl(u^{T} B v\bigr)
  \subseteq
L\bigl((u^{(m)})^{T} (\hat{A})^{j} v^{(m)}\bigr).
\]

Since \((\hat{A})^{n} \le (\hat{A})^{*}\), it follows that
\[
L\bigl((u^{(m)})^{T} (\hat{A})^{j} v^{(m)}\bigr)
   \subseteq
L\bigl((u^{(m)})^{T} (\hat{A})^{*} v^{(m)}\bigr)
   = L(\hat{p}),
\]
establishing the desired inclusion.

\paragraph{(2) \(L(\hat{p})\subseteq L(p)\).}
Since \(A = \sum_{i=1}^{m}A_i\), we have \(A_i^{+} \le A^{*}\).  
Define the block matrix
\[
B =
\begin{pmatrix}
A^{*} & A^{*} & \cdots & A^{*} \\
A^{*} & A^{*} & \cdots & A^{*} \\
\vdots & \vdots & \ddots & \vdots \\
A^{*} & A^{*} & \cdots & A^{*}
\end{pmatrix}.
\]
Clearly \(\hat{A} \le B\) and \(1 \le B\).

Moreover, since \(A^{*}A^{*}=A^{*}\), we have \(BB = B\).
By the KA induction axiom (if \(ab\le b\) then \(a^{*}b\le b\)), this implies
\(B^{*}B \le B\).  
Together with \(1\le B\), we obtain
\[
B^{*} \le 1 + B^{*}B \le B.
\]
Since \(B \le B^{*}\), it follows that
\[
B = B^{*}.
\]

Thus
\[
L(\hat{p})
 \subseteq L\bigl( (u^{(m)})^{T} B^{*} v^{(m)} \bigr)
 = L\bigl( (u^{(m)})^{T} B v^{(m)} \bigr)
 = L(u^{T}A^{*} v)
 = L(p).
\]

Both inclusions hold, so \(L(p)=L(\hat{p})\).
\end{proof}

\begin{theorem}
For every semilinear set \(S\subseteq\mathbb{N}^k\), there exists a regular
expression \(e\) such that \(P(L(e))=S\).
\end{theorem}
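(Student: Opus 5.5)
The proof is a direct construction that follows the definition of a semilinear set. The alphabet is $\Sigma=\{a_1,\dots,a_k\}$, so that $\Psi(a_i)$ is the $i$-th unit vector $\mathbf{e}_i$. The idea is to build a regular expression for each vector, then for each linear set, and finally take a sum over the linear components. I will rely on two facts that hold at the level of languages and need no Kleene algebra reasoning:
\[
P(L(e_1e_2))=P(L(e_1))\oplus P(L(e_2)),\qquad
P(L(e_1+e_2))=P(L(e_1))\cup P(L(e_2)),
\]
where $\oplus$ is the Minkowski sum used in the proof of the $CKA$/$CKA^*$ coincidence above. Both are immediate from $\Psi(xy)=\Psi(x)+\Psi(y)$ and the definitions of concatenation and union.

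\emph{Step 1 (vectors).} For a vector $\mathbf{x}=(x_1,\dots,x_k)\in\mathbb{N}^k$, define the word
\[
w_{\mathbf{x}}\defeq a_1^{x_1}a_2^{x_2}\cdots a_k^{x_k}.
\]
Then $\Psi(w_{\mathbf{x}})=\mathbf{x}$, and hence $P(L(w_{\mathbf{x}}))=\{\mathbf{x}\}$. The zero vector is covered by the empty word $1$.

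\emph{Step 2 (linear sets).} Take a linear set $\{\mathbf{b}+\sum_{j=1}^{r} n_j\mathbf{p}_j\mid n_j\in\mathbb{N}\}$ and let
\[
e\defeq w_{\mathbf{b}}\,(w_{\mathbf{p}_1})^{*}\cdots(w_{\mathbf{p}_r})^{*}.
\]
By the product rule, $P(L(e))=\{\mathbf{b}\}\oplus P(L(w_{\mathbf{p}_1}^*))\oplus\cdots\oplus P(L(w_{\mathbf{p}_r}^*))$. For each star factor, $L(w_{\mathbf{p}_j}^{*})=\{w_{\mathbf{p}_j}^{n}\mid n\ge 0\}$. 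Its Parikh image is therefore exactly $\{n\mathbf{p}_j\mid n\in\mathbb{N}\}$. Multiplying these out gives precisely the given linear set.

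\emph{Step 3 (semilinear sets).} A semilinear set is a finite union of linear sets. I take the sum of the expressions from Step 2, one for each linear component, and the union rule shows that its Parikh image is the whole semilinear set. The empty union is handled by the expression $0$, since $P(L(0))=\emptyset$.

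I do not expect a real obstacle. The only point needing care is the identity $P(L(f^*))=\{n\Psi(f)\mid n\in\mathbb{N}\}$ for a single word $f$, which follows by induction on $n$ from the product rule. The construction is also effective, which matters for the later use of $P^{-1}$ in the decomposition procedure.
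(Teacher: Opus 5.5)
Your proposal is correct and matches the paper's proof: you use the word $a_1^{x_1}\cdots a_k^{x_k}$ for each vector, the expression $w_{\mathbf{b}}\,(w_{\mathbf{p}_1})^*\cdots(w_{\mathbf{p}_r})^*$ for each linear component, and a finite sum over the components. Your handling of the empty union via $0$ and your justification that $P(L(f^*))=\{n\Psi(f)\mid n\in\mathbb{N}\}$ are small details the paper leaves implicit.
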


\begin{proof}
Let \(a_i\) denote the letter whose Parikh image contributes to the \(i\)-th
coordinate.  
For any vector \(v=(v_1,\ldots,v_k)\in\mathbb{N}^k\), it is immediate that the
expression
\[
  e_v \;=\; \prod_{i=1}^k a_i^{\,v_i}
\]
has Parikh image \(P(e_v)=v\).

Now let the semilinear set be written in standard form:
\[
  S
  = \bigcup_{i=1}^n
    \left\{
      \mathbf{b}_i + \sum_{j=1}^{m_i} n_j \mathbf{p}_{ij}
      \;\middle|\;
      n_j\in\mathbb{N}
    \right\}.
\]
For each base vector \(\mathbf{b}_i\), choose an word \(w_{\mathbf{b}_i}\)
with \(P(w_{\mathbf{b}_i})=\mathbf{b}_i\); similarly choose word
\(w_{\mathbf{p}_{ij}}\) with \(P(w_{\mathbf{p}_{ij}})=\mathbf{p}_{ij}\).
Then
\[
  e
  = \sum_{i=1}^{n}
      \Bigl(
        w_{\mathbf{b}_i}
        \;\prod_{j=1}^{m_i} w_{\mathbf{p}_{ij}}^{\,*}
      \Bigr)
\]
satisfies \(P(L(e))=S\).
\end{proof}

\begin{theorem}[Commutation Equivalence via Factorization]
Let $w_1, w_2 \in \Sigma^*$ with $C$-factorizations
\[
w_1 = w_{11} w_{12} \cdots w_{1n_1}
\quad\text{and}\quad
w_2 = w_{21} w_{22} \cdots w_{2n_2}.
\]
Then
\[
w_1 \equiv_C w_2
\quad\Longleftrightarrow\quad
n_1 = n_2
\;\text{ and }\;
\forall i \in [1,n_1],\;
\Psi(w_{1i}) = \Psi(w_{2i}).
\]
\end{theorem}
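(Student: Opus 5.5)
We prove the two directions separately. The key observation is that, since $C$ is transitive and every letter commutes only with letters of its own class $\Sigma_k$, a single swap never changes the sequence of classes read along the word. We therefore attach to each word $w$ the invariant
\[
\Phi(w) \;=\; \bigl(n,\; (k_1,\Psi(u_1)),\ldots,(k_n,\Psi(u_n))\bigr),
\]
where $w=u_1\cdots u_n$ is the unique factorization of Lemma~\ref{thm:uniq-fac} and $u_i\in\Sigma_{k_i}^+$. For $w=1$ we take the degenerate value with the single block $1$.

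\emph{Forward direction.} It suffices to show that $\Phi$ is invariant under one elementary swap, since $\equiv_C$ is the reflexive–transitive closure of such swaps. Let $w=xaby$ and $w'=xbay$ with $(a,b)\in C$. By transitivity $a$ and $b$ lie in the same class $\Sigma_k$, so the adjacent positions holding $a,b$ belong to a single maximal $\Sigma_k$-segment, that is, to one block $u_i$. Exchanging them leaves the cut points between maximal segments unchanged. Hence $w'$ has the factorization $u_1\cdots u_i'\cdots u_n$ with $u_i'$ a permutation of $u_i$. We must still argue that this is indeed the factorization of $w'$: every block remains nonempty, lies in $\Sigma_k^+$, and adjacent blocks keep distinct classes. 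Uniqueness from Lemma~\ref{thm:uniq-fac} then gives equal $n$, equal class indices, and $\Psi(u_i')=\Psi(u_i)$.

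\emph{Backward direction.} Suppose $n_1=n_2$ and $\Psi(w_{1i})=\Psi(w_{2i})$ for all $i$. If $w_{1i}$ is nonempty, equal Parikh vectors force $w_{1i}$ and $w_{2i}$ to use the same letters, so both lie in the same class $\Sigma_{k_i}$. Any two words over a single class with equal Parikh vector are related by adjacent transpositions, by bubble sort. Each such transposition exchanges two letters of $\Sigma_{k_i}$, which commute because $C$ restricted to a class is total. Thus $w_{1i}\equiv_C w_{2i}$, and because $\equiv_C$ is a congruence, concatenating gives $w_1\equiv_C w_2$. The case $w=1$ is trivial, since a nonempty word has a nonzero Parikh vector.

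The main obstacle is the precise bookkeeping in the forward direction: showing that a swap cannot merge two blocks or split a block. This is exactly where transitivity of $C$ is used, since without it $a$ and $b$ could sit in different blocks. We settle it by locating the swapped pair inside one block and invoking uniqueness of factorization.
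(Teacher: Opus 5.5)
Your proof is correct and follows the same route as the paper's. In the forward direction, a swap only permutes letters inside a single block, so the block structure and the blockwise Parikh vectors are preserved. In the backward direction, equal blockwise Parikh vectors let you reorder each block by commutations within its class and then concatenate.

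Your version is more careful than the paper's in two respects. You reduce the forward direction to invariance under a single swap together with uniqueness of factorization (Lemma~\ref{thm:uniq-fac}). You also read off $\Psi(u_i')=\Psi(u_i)$ directly, rather than appealing to Theorem~\ref{thm:cka-completeness} as the paper does.
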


\begin{proof}
($\Rightarrow$)  
If $w_1 \equiv_C w_2$, then commutation steps can only permute letters
within equivalence classes, not across them.  
Hence their $I$-factorizations have the same number of blocks $n_1 = n_2$,
and each corresponding block belongs to the same class $\Sigma_i$.
By Theorem~\ref{thm:cka-completeness}, within each class we have
$\Psi(w_{1i}) = \Psi(w_{2i})$.

(\(\Leftarrow\))
If \(n_1=n_2\) and each corresponding pair of factors satisfies
\(\Psi(w_{1i})=\Psi(w_{2i})\), then, since each block has finite length and all letters within a block are mutually commutative, a finite sequence of commutations transforms \(w_{1i}\) into \(w_{2i}\) for every \(i\). Hence
we have $w_{1i} \equiv_C w_{2i}$ for all $i$.
we conclude $w_1 \equiv_C w_2$.
\end{proof}
\subsection{Proof for Section \ref{sec:undec}}

\subsubsection{Proof for Section \ref{subsec:halting}}
\paragraph{Full analysis of invalid execution}
We begin by analyzing the possible forms of a legal transition of $M$. There are four cases:
We first distinguish the possible forms of a legal transition. The first three cases describe transitions in which the tape head is scanning a non-blank symbol, and the last case describes transitions in which the tape head is scanning the blank symbol.

\begin{itemize}
    \item \textbf{The tape head scans a non-blank symbol and does not move.}
    Suppose
    \[
    ID_1=xqsyB,
    \]
    where $x,y\in\{0,1\}^*$ and $s\in\{0,1\}$. In this case, a single transition can change only the current state $q$ and the scanned symbol $s$, producing a configuration of the form
    \[
    ID_2=xq's'yB.
    \]

    \item \textbf{The tape head scans a non-blank symbol and moves left.}
    Suppose
    \[
    ID_1=xs_1qs_2yB,
    \]
    where $x,y\in\{0,1\}^*$ and $s_1,s_2\in\{0,1\}$. In this case, a left move can change only the current state $q$ and the scanned symbol $s_2$, producing a configuration of the form
    \[
    ID_2=xq's_1s_2'yB.
    \]

    \item \textbf{The tape head scans a non-blank symbol and moves right.}
    Suppose
    \[
    ID_1=xqsyB,
    \]
    where $x,y\in\{0,1\}^*$ and $s\in\{0,1\}$. In this case, a right move can change only the current state $q$ and the scanned symbol $s$, producing a configuration of the form
    \[
    ID_2=xs'q'yB.
    \]

    \item \textbf{The tape head scans the blank symbol.}
    Suppose
    \[
    ID_1=xqB.
    \]
    Then the tape head is scanning a blank symbol. In this case, the machine may stay in place, move left, or move right, producing one of the following configurations:
    \[
    ID_2=xq'B,\qquad ID_2=xq'sB,\qquad ID_2=xs'q'B.
    \]
\end{itemize}
From these cases, we see that a single transition can affect at most three consecutive symbols of a configuration. Therefore, if one finds a mismatch between the corresponding $i$th, $(i+1)$st, and $(i+2)$nd symbols of $\#ID_1$ and $\#ID_2$, then the transition must be illegal. It therefore suffices to consider the following cases.

\begin{itemize}
    \item \textbf{Case (1): one of the configurations has length $2$.}

    Suppose either $|ID_1|=2$ or $|ID_2|=2$. Since every configuration must contain at least one state symbol $q\in Q$ and one blank symbol $B$, the minimum possible configuration length is $2$. Moreover, by our assumption that the machine never writes a blank symbol, a single transition can either preserve the length of a configuration or increase it by $1$. Therefore, the only possible legal transitions involving a configuration of length $2$ are those satisfying
    \[
    |ID_1|=|ID_2|=2
    \qquad\text{or}\qquad
    |ID_2|=|ID_1|+1=3.
    \]
    Since there are only finitely many such pairs, we can explicitly enumerate all illegal words of the form
    \[
    \#ID_1\#ID_2 b^{|\#ID_1\#ID_2|}
    \]
    with $|ID_1|<3$ and $|ID_2|<3$. We denote the corresponding finite sum by
    \[
    \sum_{\substack{\#ID_1\#ID_2\text{ is illegal}\\ |ID_1|<3,\ |ID_2|<3}}
    \#ID_1\#ID_2 b^{|\#ID_1\#ID_2|}.
    \]

    For the remaining cases, where one of the configurations has length $2$ and the other has length strictly greater than $3$, we use the expression
    \[
    \#b(\Sigma_{ID}b)^2\#b(\Sigma_{ID}b)^{>3}
    \;+\;
    \#b(\Sigma_{ID}b)^{>3}\#b(\Sigma_{ID}b)^2
    \]
    to cover all such illegal pairs.

    We write $e_{\mathit{illegal}<3}(M)$ for the union of the two expressions above.

    \item \textbf{Case (2): no state symbol and no blank symbol appear in the local window.}

    Suppose that among the six symbols consisting of the $i$th, $(i+1)$st, and $(i+2)$nd positions of both $\#ID_1$ and $\#ID_2$, there is neither a state symbol $q\in Q$ nor a blank symbol $B$. Then these positions are away from both the head and the boundary blank, so the tape contents of $(i+1)$st, and $(i+2)$nd positions must be copied unchanged. Hence the corresponding symbols must agree. Any mismatch is therefore illegal. We use a finite set of $6$-tuples, denoted by $\Delta_1$, to collect all such illegal local patterns.

    \item \textbf{Case (3): a state symbol appears in the local window, but no blank symbol does.}

    Suppose that one of the six local symbols contains a state symbol $q\in Q$, but none of them is $B$. Then the window lies near the tape head, but not near the right boundary blank. In this case, legality can be checked directly from the transition function $\delta$. We use another finite set of $6$-tuples, denoted by $\Delta_2$, to collect all such illegal local patterns.

    \item \textbf{Case (4): a blank symbol appears in the local window.}

    Suppose that a blank symbol $B$ appears among these six local symbols. If $B$ does not appear in the relevant portion of $ID_1$ but does appear in $ID_2$, then the transition is illegal, since by assumption the machine never writes a blank symbol and hence cannot shorten the written portion of the tape. Therefore, if $B$ appears, it must already occur at the $(i+2)$nd position of $ID_1$.

    Moreover, depending on whether the transition preserves the length of the configuration or increases it by one, the blank symbol in $ID_2$ must appear either at the $(i+2)$nd or the $(i+3)$rd position in order for the transition to be legal. Accordingly, we use a finite set $\Delta_3$ to collect all local $6$-tuples such that the transition is illegal regardless of the $(i+3)$rd symbol, and another finite set $\Delta_4$ to collect all local patterns such that the transition is illegal whenever the $(i+3)$rd symbol is not $B$.
\end{itemize}

As one can see, every possible error in a transition between two consecutive configurations is covered by one of the four cases above. If one of the configurations has length $2$, then the pair is handled by Case~(1). If either
\[
|ID_2|>|ID_1|+1
\qquad\text{or}\qquad
|ID_2|<|ID_1|,
\]
then the violation is detected by Case~(4). Any copying error away from the tape head is detected by Case~(2), while any error near the tape head, including an incorrect local transition, is detected by Case~(3).

We take $\Delta=\Delta_1\cup\Delta_2\cup\Delta_3$ and $\Delta'=\Delta_4$.
\newline

\begin{theorem}[Halting Checking]
Let
\[
M=(Q,\Gamma,\delta,q_0,q_f)
\]
be a Turing machine on a semi-infinite tape that never overwrites any symbol with the blank symbol and has a capturing state \(q_c\). Let \(\overline{H}(M)\) be the expression constructed above from \(M\). Then
\[
\ka{KA^*}{C}{\overline{H}(M)}{\Sigma^*}
\qquad\text{if and only if}\qquad
M \text{ does not halt}.
\]
\end{theorem}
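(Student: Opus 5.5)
By Theorem~\ref{thm:equiv}, $\ka{KA^*}{C}{\overline{H}(M)}{\Sigma^*}$ is equivalent to $\lc(\overline{H}(M))=\lc(\Sigma^*)$. So the plan is to argue purely at the level of traces. We show that a trace escapes $\lc(\overline{H}(M))$ exactly when it is a halting execution in $H(M)$. The key observation is the following. A trace $t\in\Sigma^*/\equiv_C$ is determined by two pieces of data: its projection $\pi(t)$ onto the non-$b$ letters (a word over $\Sigma_{\setminus b}$, since no two non-$b$ letters commute) and its number of $b$'s. Hence a trace lies in $\lc(e)$ iff some word of $L(e)$ has the same projection and the same $b$-count.

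\textbf{Direction 1: if $M$ halts, universality fails.} Let $\#ID_0\#\cdots\#ID_k$ be the genuine halting computation and set $t=x b^{|x|}$. We check that $t$ lies in none of the three summands.
\begin{itemize}
\item $t\notin\lc(\Sigma^*q_c\Sigma^*)$, because a halting run never enters $q_c$.
\item $t\notin\lc(e_{\mathit{illegal}})$. Any word with the same projection and $b$-count interleaved as $(\Sigma_{\setminus b}b)^*$ lies in $L(e_{\mathit{traces}})$. Words of $e_{\mathit{mismatch\_b}}$ have unequal counts, so they cannot represent $t$.
\item $t\notin\lc(e_{\mathit{invalid\_execution}}(M))$. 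Any word of $e_{ID}^*e_{\mathit{invalid\_IDs}}e_{ID}^*$ with balanced $b$-count must have its middle factor balanced too, since the $e_{ID}$ factors are balanced. By the speed argument ($2b$ per symbol before $l_1$, $1b$ per symbol until $\#$, $0$ after), balance forces $l_1$ and $l_4$ to sit at aligned positions. The six-tuple then witnesses an illegal step by the case analysis defining $\Delta,\Delta'$, and $e_{\mathit{illegal}<3}$ only contains illegal short pairs. This contradicts the legality of every consecutive pair in the genuine run.
\end{itemize}

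\textbf{Direction 2: if $M$ does not halt, every trace is covered.} Take an arbitrary trace $t$ and split into cases.
\begin{itemize}
\item If its $b$-count differs from $|\pi(t)|$, or $\pi(t)$ is not of the shape of $e_{\mathit{traces}}$ with balanced count, we exhibit a representative in $e_{\mathit{mismatch\_b}}$ or in $e_{\mathit{match\_b}}\setminus e_{\mathit{traces}}$ by interleaving the $b$'s appropriately.
\item Otherwise $t=xb^{|x|}\in R$ with $x=\#ID_0\#\cdots\#ID_k$ and $ID_k$ halting. Since $M$ does not halt, some consecutive pair $ID_i,ID_{i+1}$ is not a legal step, or else $q_c$ appears, which is handled by $\Sigma^*q_c\Sigma^*$.
\end{itemize}
For an illegal pair we invoke the appendix case analysis: either a short case (Case~(1)) applies, or there is a position $s$ whose window of six symbols lies in $\Delta$, or lies in $\Delta'$ with the follow-up symbol in $ID_2$ not equal to $B$. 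We then build an explicit word in $e_{ID}^i\,e_{\mathit{invalid\_IDs}}\,e_{ID}^{k-i-1}$ with projection $x$. The $b$'s are distributed as follows: two per symbol before position $s$ in $ID_1$, one per symbol from $l_1$ through the end of $ID_1$, none until $l_4$, and then one per symbol. The count works out to exactly $|\#ID_1\#ID_2|$ precisely because the positions are aligned, so the word is $\equiv_C$ to the corresponding segment of $t$.

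\textbf{Main obstacle.} The delicate step is this last bookkeeping: verifying that the regular expression for $e_{\mathit{invalid\_IDs}}$, as literally written, admits the aligned witness. Points to check are the $\#b^2$ prefix, the unmarked $(\Sigma_{ID})^*$ stretch in $ID_2$, and the $\Delta'$ tail $\epsilon+\Sigma_{ID\setminus B}b(\Sigma_{ID}b)^*$. The converse check in Direction 1 also needs care: that balanced words force alignment. I would first do the $b$-count arithmetic generically, as a linear equation in $s$, the position of $l_4$, and the configuration lengths, showing it has the aligned solution as its unique solution. The soundness of $\Delta,\Delta'$ and the completeness of the four cases is then taken from the appendix analysis.
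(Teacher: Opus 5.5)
Your proposal is correct and follows the paper's route. You reduce via Theorem~\ref{thm:equiv} to equality of $L_C$-images, exhibit the genuine halting trace $xb^{|x|}$ as a non-member of all three summands when $M$ halts, and otherwise show that every trace is captured by $e_{\mathit{illegal}}$ or by an aligned local-window witness in $e_{\mathit{invalid\_execution}}(M)$, taking the completeness of $\Delta,\Delta'$ and of the short cases from the appendix exactly as the paper does. Your version is more explicit than the paper's, adding the projection-plus-$b$-count description of traces and the alignment arithmetic (balance indeed holds iff the number of symbols before $l_1$ in $ID_i$ equals the number before $l_4$ in $ID_{i+1}$), but it is the same argument.
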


\begin{proof}
Since equivalence in \(KA^*+C\) coincides with equality of the corresponding regular languages, it suffices to reason at the level of regular languages.

We prove both directions.

\medskip
\noindent
(\(\Rightarrow\)) Suppose
\[
\ka{KA^*}{C}{\overline{H}(M)}{\Sigma^*}.
\]
We show that \(M\) does not halt. Suppose, towards a contradiction, that \(M\) halts. Let
\[
\#ID_1\#\cdots\#ID_n
\]
be the encoding of a halting computation of \(M\). Define
\[
w=\#ID_1\#\cdots\#ID_n\, b^{|\#ID_1\#\cdots\#ID_n|}.
\]
We claim that
\[
w\notin L_C(\overline{H}(M)).
\]

Indeed, since the number of occurrences of \(b\) in \(w\) is exactly equal to the number of all other symbols, while the non-\(b\) part of \(w\) is exactly a syntactically valid halting trace, we have
\[
w\notin L_C(e_{\mathit{illegal}}).
\]
Furthermore, every consecutive pair \(ID_i,ID_{i+1}\) in this trace forms a legal transition of the machine, and hence
\[
w\notin L_C(e_{\mathit{invalid\_execution}}(M)).
\]
Finally, since this computation halts rather than entering the capturing state, the symbol \(q_c\) does not occur in \(w\). Therefore,
\[
w\notin L_C(\Sigma^*q_c\Sigma^*).
\]

Combining the above facts, we obtain
\[
w\notin L_C(\overline{H}(M)).
\]
On the other hand, clearly
\[
w\in L_C(\Sigma^*).
\]
This contradicts the assumption
\[
\ka{KA^*}{C}{\overline{H}(M)}{\Sigma^*}.
\]
Therefore, \(M\) does not halt.

\medskip
\noindent
(\(\Leftarrow\)) Suppose \(M\) does not halt. Thus \(M\) has no halting execution. By construction, every word \(w\notin L_C(\overline{H}(M))\) would have to encode a halting execution of \(M\): it would not belong to \(L_C(e_{\mathit{illegal}})\), so it would be a syntactically valid trace with the correct number of \(b\)'s; it would not belong to \(L_C(e_{\mathit{invalid\_execution}}(M))\), so every consecutive pair of configurations would form a legal transition; and it would not belong to \(L_C(\Sigma^*q_c\Sigma^*)\), so the execution would not enter the capturing state. Hence \(w\) would encode a genuine halting execution of \(M\), contradicting the assumption that \(M\) does not halt. Therefore,
\[
L_C(\overline{H}(M))=L_C(\Sigma^*),
\]
and so
\[
\ka{KA^*}{C}{\overline{H}(M)}{\Sigma^*}.
\]
This completes the proof.
\end{proof}
\subsubsection{Proof for Section \ref{subsec:c-looping}}
\begin{theorem}[Legal and Illegal Traces]
Let
\[
\Sigma=\{0,1,B,\#\}\cup Q\cup\{b\},
\qquad
C=\{(\sigma,b)\mid \sigma\in \{0,1,B,\#\}\cup Q\}.
\]
Then
\[
\ka{KA}{C}{e_{\mathit{traces}}+e_{\mathit{illegal}}}{\Sigma^*}.
\]
\end{theorem}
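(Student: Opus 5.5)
The inclusion $e_{\mathit{traces}}+e_{\mathit{illegal}}\le\Sigma^*$ is immediate in $KA$, so the work is the converse $\Sigma^*\le e_{\mathit{traces}}+e_{\mathit{illegal}}$. Recall that $e_{\mathit{illegal}}=e_{\mathit{mismatch\_b}}+e_{\mathit{match\_b}}\setminus e_{\mathit{traces}}$. The plan splits the argument along the two summands of $e_{\mathit{illegal}}$.

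\emph{Step 1: reduce $\Sigma^*$ to the $b$-counting split.} Write $x=\Sigma_{\setminus b}$ and treat $x$ and $b$ as two fresh letters. Over the two-letter alphabet $\{x,b\}$ with $xb=bx$, the relation is transitive (a single class), so Theorem~\ref{thm:coin} applies. The target inequality
\[
(x+b)^*\le e_{\mathit{match\_b}}+e_{\mathit{mismatch\_b}},
\]
with both sides regarded as expressions in $x,b$, can therefore be checked at the level of $L_C$, equivalently Parikh images in $\mathbb{N}^2$. The left side has Parikh image all of $\mathbb{N}^2$. On the right, $(xb)^*$ covers the diagonal, $b^+(xb)^*$ covers the points with more $b$'s, and $x(x(1+b))^*$ covers the points with more $x$'s. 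So $KA+\{xb=bx\}$ proves the inequality.

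Substituting $x:=\Sigma_{\setminus b}$ is sound because $b$ commutes with $\Sigma_{\setminus b}$ in $KA+C$ by the letter equations and distributivity. Any $KA$ proof from $xb=bx$ therefore instantiates to a proof in $KA+C$. Since $\Sigma^*=(\Sigma_{\setminus b}+b)^*$ is a $KA$ identity, we obtain
\[
\Sigma^*\le e_{\mathit{match\_b}}+e_{\mathit{mismatch\_b}}
\]
in $KA+C$.

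\emph{Step 2: split the matched part.} The remaining claim is
\[
e_{\mathit{match\_b}}\le e_{\mathit{traces}}+(e_{\mathit{match\_b}}\setminus e_{\mathit{traces}}).
\]
Both sides are ordinary regular expressions over $\Sigma$. By the construction of the difference, $L(e_{\mathit{match\_b}})\subseteq L(e_{\mathit{traces}})\cup(L(e_{\mathit{match\_b}})\setminus L(e_{\mathit{traces}}))$ holds as languages, with no commutation involved. Kozen's completeness theorem for $KA$ then gives the inequality already in $KA$, hence in $KA+C$. Combining Steps 1 and 2 yields $\Sigma^*\le e_{\mathit{mismatch\_b}}+e_{\mathit{match\_b}}+e_{\mathit{traces}}$, which is the claim.

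\emph{Main obstacle.} The delicate point is Step 1: transferring the coincidence theorem from the abstract two-letter alphabet to the compound term $\Sigma_{\setminus b}$. I would handle it as a substitution lemma: if $KA+\{xb=bx\}\vDash s\le t$, then replacing $x$ by any term $e$ with $KA+C\vDash eb=be$ gives $KA+C\vDash s[e/x]\le t[e/x]$. This holds because every model of $KA+C$, together with the interpretation $x\mapsto I(e)$, is a model of $KA+\{xb=bx\}$.

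A further check is needed here. $\Sigma^*$ must equal $(x+b)^*$ after substitution, and $e_{\mathit{match\_b}}$ and $e_{\mathit{mismatch\_b}}$ must be syntactically of the form $s[\Sigma_{\setminus b}/x]$. Both hold by their definitions.
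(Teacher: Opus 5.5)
Your proposal is correct and follows essentially the same route as the paper. Both arguments prove $\Sigma^*\le e_{\mathit{match\_b}}+e_{\mathit{mismatch\_b}}$ over an abstract two-letter alphabet with one commuting pair via the coincidence theorem, substitute $\Sigma_{\setminus b}$ for the abstract letter, and dispatch the difference term by completeness of $KA$. Your explicit semantic justification of the substitution step is a small refinement, not a different argument. So is your use of the inequality $e_{\mathit{match\_b}}\le e_{\mathit{traces}}+(e_{\mathit{match\_b}}\setminus e_{\mathit{traces}})$ in place of the paper's equality, which tacitly relies on $L(e_{\mathit{traces}})\subseteq L(e_{\mathit{match\_b}})$.
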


\begin{proof}
We begin the proof via a claim:
\[
\ka{KA}{C}{e_{\mathit{match\_b}}+e_{\mathit{mismatch\_b}}}{\Sigma^*}.
\]

We prove the claim using the coincidence theorem, Theorem~\ref{thm:coin}.

Consider first another alphabet \(\Sigma'=\{a',b'\}\) with commutativity condition
\[
C'=\{(a',b')\}.
\]
Define
\[
e'_{\mathit{mismatch\_b}}=b'^+(a'b')^*+a'(a'(1+b'))^*
\qquad\text{and}\qquad
e'_{\mathit{match\_b}}=(a'b')^*.
\]
Then \(e'_{\mathit{match\_b}}\) denotes the set of words in which the numbers of \(a'\) and \(b'\) are equal, while \(e'_{\mathit{mismatch\_b}}\) denotes the set of words in which these numbers differ. Hence
\[
L_{C'}(e'_{\mathit{mismatch\_b}}+e'_{\mathit{match\_b}})
=
L_{C'}(\Sigma'^*).
\]

Since regular-language equivalence coincides with \( * \)-continuous equivalence by Theorem~\ref{thm:equiv}, and \( * \)-continuous equivalence coincides with general equivalence whenever \(C'\) is transitive by Theorem~\ref{thm:coin}, it follows that
\[
\ka{KA}{\{a'b'=b'a'\}}{e'_{\mathit{mismatch\_b}}+e'_{\mathit{match\_b}}}{\Sigma'^*}.
\]

Now substitute \(a'\) by \(\Sigma_{\setminus b}\) and \(b'\) by \(b\). We obtain
\[
\ka{KA}{\{\Sigma_{\setminus b}b=b\Sigma_{\setminus b}\}}
{e_{\mathit{mismatch\_b}}+e_{\mathit{match\_b}}}{\Sigma^*}.
\]

Finally, from the commutativity conditions in \(C\), we have
\[
\ka{KA}{C}{\Sigma_{\setminus b}b}{b\Sigma_{\setminus b}}.
\]
Therefore,
\[
\ka{KA}{C}{e_{\mathit{match\_b}}+e_{\mathit{mismatch\_b}}}{\Sigma^*}.
\]

Next we will show
\[
\ka{KA}{C}{e_{\mathit{traces}}+e_{\mathit{match\_b}} \setminus e_{\mathit{traces}}}{e_{\mathit{match\_b}}}.
\]

By construction,
\[
L(e_{\mathit{traces}}+e_{\mathit{match\_b}} \setminus e_{\mathit{traces}})
=
L(e_{\mathit{match\_b}}).
\]
Hence, by completeness of $KA$ for regular languages, we have
\[
\kar{e_{\mathit{traces}}+e_{\mathit{match\_b}} \setminus e_{\mathit{traces}}}{e_{\mathit{match\_b}}}.
\]
It follows immediately that
\[
\ka{KA}{C}{e_{\mathit{traces}}+e_{\mathit{match\_b}} \setminus e_{\mathit{traces}}}{e_{\mathit{match\_b}}}.
\]
Thus 
\[
\ka{KA}{C}{e_{\mathit{traces}}+e_{\mathit{illegal}}}{\Sigma^*}.
\]

\end{proof}

Next, we are going to prove a useful intermediate lemma showing that any consecutive execution that increases the length of configuration by length 2 is not valid.

\begin{theorem}[Invalid Length Increase]\label{thm:invalid-length}
Let $n\in\mathbb{N}$. Define
\[
e_{ID}
=
\#b(0b+1b)^*\Bigl(\sum_{q\in Q} qb\Bigr)(0b+1b)^*Bb
\]
to represent a single valid configuration. Next, define
\[
e_{|ID|\le n}
=
\sum_{n_1+n_2=n}
\Bigl(
\#b(0b+1b)^{\le n_1}
\Bigl(\sum_{q\in Q} qb\Bigr)
(0b+1b)^{\le n_2}Bb
\Bigr)
\]
and
\[
e_{|ID|\ge n}
=
\sum_{n_1+n_2=n}
\Bigl(
\#b(0b+1b)^{\ge n_1}
\Bigl(\sum_{q\in Q} qb\Bigr)
(0b+1b)^{\ge n_2}Bb
\Bigr).
\]
where $e^{\geq n}=e^ne^*$ and $e^{\leq n}=\sum\limits_{i\le n}e^i$.
These expressions denote, respectively, the sets of configurations whose written tape content has length at most $n$ and at least $n$.

Then for any restricted Turing machine $M$ described above
\[
\kale{KA}{C}{e_{|ID|\le n}\,e_{|ID|\ge n+2}}{e_{\mathit{invalid\_execution}}(M)}.
\]
\end{theorem}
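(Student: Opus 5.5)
We aim to show that every configuration pair $\#ID_1\#ID_2$ with $|ID_1|\le n$ and $|ID_2|\ge n+2$ is caught by the $\Delta$/$\Delta'$ summands of $e_{\mathit{invalid\_IDs}}(M)$. The reason is that $ID_2$ must contain a blank-related mismatch (Case~(4) of the analysis): at the position where $ID_1$ has its final $B$, the word $ID_2$ has a non-blank symbol, and two positions later it is still non-blank. Since $e_{\mathit{invalid\_execution}}(M)=e_{ID}^*e_{\mathit{invalid\_IDs}}(M)e_{ID}^*$ and $1\le e_{ID}^*$, it suffices to prove
\[
\kale{KA}{C}{e_{|ID|\le n}\,e_{|ID|\ge n+2}}{e_{\mathit{invalid\_IDs}}(M)},
\]
with the short-configuration part $e_{\mathit{illegal}<3}(M)$ absorbing the small cases. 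The whole argument stays inside the star axioms and never unfolds a star infinitely.

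\textbf{Step 1 (finite unfolding of the left factor).} The expression $e_{|ID|\le n}$ is star-free: a finite sum of words $\#b\,w_1b\cdots$ of length at most $n+1$ before the $B$. Distributing it, we reduce to proving, for each fixed word $x=\#b s_1b\cdots s_kb\,Bb$ with $k+1\le n+1$ (where one $s_j$ is a state), that
\[
x\cdot e_{|ID|\ge n+2}\le e_{\mathit{invalid\_IDs}}(M).
\]

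\textbf{Step 2 (split the right factor).} We write $e_{|ID|\ge n+2}$ as $\#b\,(\Sigma_{ID\setminus B}b)^{k+1}\,(\text{rest})$: a prefix of exactly $k+1$ non-blank symbols matching the length of $x$ before its $B$, followed by at least one more non-blank symbol before the final $Bb$. This is a pure KA rewriting (using $e^{\ge m}=e^m e^*$ and $(e+f)^*$ decompositions on the $0b+1b$ blocks around the state), so by KA completeness it is an equality over $L(\cdot)$ and hence provable. We then choose the window $l_1l_2l_3$ to be the last three symbols of $\#ID_1$, ending with $B$, and the corresponding $l_4l_5l_6$ in $ID_2$, where $l_6\ne B$. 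This tuple lies in $\Delta_3\subseteq\Delta$ regardless of the subsequent symbol. Short $x$ with $k<2$ fall into $e_{\mathit{illegal}<3}(M)$ directly, since the second configuration has length greater than $3$.

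\textbf{Step 3 (aligning the $b$'s, the main obstacle).} The summand of $e_{\mathit{invalid\_IDs}}$ places the $b$'s differently from $x\cdot e_{|ID|\ge n+2}$. Before $l_1$ it uses $b^2$ per symbol, and it uses none between $\#$ and $l_4$, whereas our word has one $b$ after each symbol. So we must move $k-2$ copies of $b$ (those following the first $k-2$ symbols of $ID_2$) leftwards into the prefix of $ID_1$. Since $b$ commutes with every letter under $C$, Corollary~\ref{lem:local-com} (Step~1 of its proof) gives $be=eb$ for all expressions $e$, and hence $b^je=eb^j$. Because the number of symbols to move is the fixed finite $k$, we can push each $b$ across the finite word prefix by finitely many applications of $C$. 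The part after $l_6$ is still starred, but it stays untouched: it is $\Sigma_{ID\setminus B}b$-style material followed by $(\Sigma_{ID}b)^*$, which is bounded above by the trailing $(\Sigma_{ID}b)^*$ via monotonicity.

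The delicate point is that the $ID_2$ prefix of length $k+1$ was produced from the stars inside $e_{|ID|\ge n+2}$. We must therefore first expand those stars finitely, via $e^*=e^{\le k}+e^{k+1}e^*$, which is valid in KA, so that the prefix becomes a finite sum of explicit words before commuting. After the $b$'s are moved, each resulting summand is bounded, term by term, by the corresponding $\Delta$-summand, using $w\le(\Sigma_{ID}b^2)^*$-style inclusions for the finite prefix. Summing over the finitely many choices gives the claim.
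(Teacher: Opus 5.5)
Your route is the same as the paper's: expand the star-free factor $e_{|ID|\le n}$ into words, unroll a finite prefix of $e_{|ID|\ge n+2}$, take the window ending at the blank of $ID_1$, shift finitely many $b$'s across concrete letters using $C$, and bound the residual by monotonicity. However, one step fails as written.

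In Step~2 you assert that $(s_{k-1},s_k,B,l_4,l_5,l_6)$ with $l_6\neq B$ lies in $\Delta_3$ ``regardless of the subsequent symbol''. This is false when $s_k$ is the state, because a legal step in which the head scans the blank can lengthen the configuration by one. Concretely, take $ID_1=0qB$ and $ID_2=01q'1B$ with $\delta(q,B)=(q',1,\mathit{Right})$. This pair lies in $e_{|ID|\le 1}\,e_{|ID|\ge 3}$. The only admissible window is $(0,q,B,0,1,q')$, and that is exactly the window of the legal successor $01q'B$. So the tuple lies in $\Delta'=\Delta_4$ but not in $\Delta=\Delta_1\cup\Delta_2\cup\Delta_3$. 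Both configurations have length at least $3$, so $e_{\mathit{illegal}<3}(M)$ does not help either. The word is therefore not in $L_C$ of the $\Delta$-summand, and the bound you propose in Step~3 against that summand's trailing $(\Sigma_{ID}b)^*$ is false in the trace model, hence not derivable.

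The repair uses information you already isolated. Since $l_3=B$ and $l_6\neq B$, the tuple is illegal as soon as the next symbol of $ID_2$ is not $B$, so it lies in $\Delta_3\cup\Delta_4$. Your Step~2 split guarantees that the material after $l_6b$ is below $\Sigma_{ID\setminus B}b(\Sigma_{ID}b)^*$. That sits under the tail $(\epsilon+\Sigma_{ID\setminus B}b(\Sigma_{ID}b)^*)$ of the third ($\Delta'$) summand of $e_{\mathit{invalid\_IDs}}(M)$. Bound against that summand, or against the $\Delta$-summand when the tuple happens to lie in $\Delta_3$. This is what the paper does when it says the discrepancy is detected by $\Delta_3$ or $\Delta_4$ and keeps the tail $\Sigma_{ID\setminus B}(\Sigma_{ID}b)^*$.

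Two counts also need fixing:
\begin{itemize}
\item The left factor ranges over $k\le n+1$, not $k+1\le n+1$. Step~2 still works for $k=n+1$, since $ID_2$ has at least $n+3\ge k+2$ non-blank symbols.
\item You must move $k-1$ copies of $b$, not $k-2$. The $b$ right after the $\#$ of $ID_2$ must also go left, to produce both the leading $\#b^2$ and the bare $\#$ the pattern requires.
\end{itemize}

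On the positive side, your separate treatment of $k<2$ via $e_{\mathit{illegal}<3}(M)$ is more careful than the paper's case split on $n=0$. For every $n$, $e_{|ID|\le n}$ contains length-$2$ configurations, and the paper's case $n\ge 1$ does not address them.
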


\begin{proof}
We distinguish two cases.

\medskip
\noindent
\textbf{Case 1: $n=0$.}
In this case,
\[
e_{|ID|\le 0}
=
\#b\Bigl(\sum_{q\in Q} qb\Bigr)Bb,
\]
and
\[
e_{|ID|\ge 2}
=
\sum_{n_1+n_2=2}
\Bigl(
\#b(0b+1b)^{\ge n_1}
\Bigl(\sum_{q\in Q} qb\Bigr)
(0b+1b)^{\ge n_2}Bb
\Bigr).
\]
Here, $e_{|ID|\le 0}$ represents the shortest possible configurations. On the other hand, it is easy to see that
\[
\kale{KA}{C}{e_{|ID|\ge 2}}{\#b(\Sigma_{ID}b)^{>3}}.
\]
Therefore,
\[
\kale{KA}{C}{e_{|ID|\le 0}\,e_{|ID|\ge 2}\,\#b}{e_{\mathit{illegal}<3}(M)},
\]
and hence
\[
\kale{KA}{C}{e_{|ID|\le 0}\,e_{|ID|\ge 2}\,\#b}{e_{\mathit{invalid\_execution}}(M)}.
\]

\medskip
\noindent
\textbf{Case 2: $n\ge 1$.}
In this case, we can always select three letters from each of the configurations represented by $e_{|ID|\le n}$ and $e_{|ID|\ge n+2}$. Let $s$ be the position of the blank symbol $B$ in a word from $e_{|ID|\le n}$. By construction, the corresponding positions $s$ and $s+1$ in a word from $e_{|ID|\ge n+2}$ cannot contain $B$.

Indeed, every configuration in $e_{|ID|\ge n+2}$ has written tape content of length at least $n+2$, so its total length is strictly greater than $n+4$ once the additional symbols $\#$ and $B$ are taken into account. Hence we may unroll the first $n+3$ symbols of such a configuration, which corresponds to the first $2n+6$ symbols of the expression, since each symbol is paired with a $b$. Thus
\[
L(e_{|ID|\ge n+2})
=
L\Bigl(
\sum_{|w|=2n+6} w e_w
\Bigr),
\]
where each $e_w$ is the corresponding residual expression (derivative of $e_{|ID|\geq n+2}$ w.r.t $w$). By completeness of $KA$, we obtain
\[
\kar{e_{|ID|\ge n+2}}{\sum_{|w|=2n+6} w e_w},
\]
and therefore
\[
\ka{KA}{C}{e_{|ID|\ge n+2}}{\sum_{|w|=2n+6} w e_w}.
\]

Similarly, we may unroll all words in $e_{|ID|\le n}$ and obtain
\[
\ka{KA}{C}{e_{|ID|\le n}}{\sum_{w \in L(e_{|ID|})} w}.
\]

Now let $w_1$ be any word arising from $e_{|ID|\le n}$, and let $w_2$ be any prefix of length $2n+6$ arising from $e_{|ID|\ge n+2}$. By construction, the symbol $B$ occurs at position $|w_1|-1$ in $w_1$, whereas in $w_2$ the first $n+2$ symbols cannot contain $B$. Therefore, this discrepancy is always detected by one of the patterns in $\Delta_3$ or $\Delta_4$. Hence
\[
\kale{KA}{C}{w_1w_2}{
\Biggl(
\sum_{(l_1,l_2,l_3,l_4,l_5,l_6)\in \Delta_3\cup\Delta_4}
\Bigl(
l_1bl_2bl_3b(\Sigma_{ID}b)^*\#(\Sigma_{ID}b)^*l_4bl_5bl_6b
\Bigr)
\Biggr)
\Sigma_{ID\setminus B}(\Sigma_{ID}b)^*
}.
\]

For each residual expression $e_w$, we trivially have
\[
\kale{KA}{C}{e_w}{(\Sigma_{ID}b)^*}.
\]
Combining the above inclusions, we conclude that
\[
\kale{KA}{C}{e_{|ID|\le n}\,e_{|ID|\ge n+2}}{e_{\mathit{invalid\_execution}}(M)}.
\]
\end{proof}

\begin{theorem}[Case Analysis]
Let
\[
M=(Q,\Gamma,\delta,q_0,q_f)
\]
be a restricted Turing machine as described above. Write
\[
e_{\mathit{init}}=\#bq_0bBb,
\qquad
e_{ID}=\#b(0b+1b)^*\Bigl(\sum_{q\in Q}qb\Bigr)(0b+1b)^*Bb.
\]
Then for every \(n\in\mathbb{N}\),
\[
KA + C \vDash
e_{\mathit{init}}e_{ID}^{n}
\le
e_{\mathit{invalid\_execution}}(M)
+
\sum_{\substack{|w|\le \frac{n^2+7n+6}{2}\\ w\in L(e_{\mathit{init}}e_{ID}^{n})}} w.
\]
\end{theorem}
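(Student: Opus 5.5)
I will prove the statement by induction on \(n\), working with a family of ``length-bounded'' prefixes. For \(n=0\) the claim is immediate: \(e_{\mathit{init}}=\#bq_0bBb\) is a single word of length \(6\le 3\cdot 2=\frac{6}{1}\cdot 1\), so it appears as a summand on the right. For the inductive step, the key observation is that the \(k\)-th configuration in any surviving (non-invalid) trace has written tape content of length at most \(k\). So I will strengthen the induction hypothesis to
\[
KA + C \vDash
e_{\mathit{init}}e_{ID}^{k}
\le
e_{\mathit{invalid\_execution}}(M)
+
P_k\, e_{|ID|\le k},
\]
where \(P_k\) is a finite sum of words of \(L(e_{\mathit{init}}e_{ID}^{k-1})\) of length at most \(\sum_{j=0}^{k-1}(j+3)\), with the convention \(P_0 e_{|ID|\le 0}=e_{\mathit{init}}\). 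Since \(e_{|ID|\le k}\) is provably a finite sum of words (expand each bounded power \(e^{\le m}\) by distributivity), the strengthened statement implies the lemma. The bound holds because each such word has length \((\text{content}+3)\) in non-\(b\) letters, and the \(b\)-doubling is absorbed in the paper's length convention.

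For the step from \(k\) to \(k+1\), I multiply the hypothesis on the right by \(e_{ID}\) using monotonicity. The term \(e_{\mathit{invalid\_execution}}(M)\,e_{ID}\) is bounded by \(e_{\mathit{invalid\_execution}}(M)\), since \(e_{\mathit{invalid\_execution}}(M)=e_{ID}^*e_{\mathit{invalid\_IDs}}(M)e_{ID}^*\) and \(e_{ID}^*e_{ID}\le e_{ID}^*\). For the other term, I split \(e_{ID}\) provably as \(e_{ID}= e_{|ID|\le k+1}+e_{|ID|\ge k+2}\). This is a pure KA fact, obtained by completeness of \(KA\) since the languages agree, or by unrolling \((0b+1b)^*=(0b+1b)^{\le m}+(0b+1b)^{\ge m+1}\) over the split \(n_1+n_2\). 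The summand \(P_k e_{|ID|\le k}e_{|ID|\ge k+2}\) is handled by the Invalid Length Increase lemma (Theorem~\ref{thm:invalid-length}). That lemma gives \(e_{|ID|\le k}e_{|ID|\ge k+2}\le e_{\mathit{invalid\_execution}}(M)\), and the prefix \(P_k\) lies below \(e_{ID}^*\) (each word in \(P_k\) is a concatenation of \(e_{\mathit{init}}\le e_{ID}\) and \(e_{ID}\) words). Hence \(P_k\, e_{\mathit{invalid\_execution}}(M)\le e_{ID}^*e_{ID}^*e_{\mathit{invalid\_IDs}}(M)e_{ID}^*\le e_{\mathit{invalid\_execution}}(M)\). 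The remaining summand \(P_k e_{|ID|\le k}e_{|ID|\le k+1}\) is finite. I rewrite it as \(P_{k+1}e_{|ID|\le k+1}\) with \(P_{k+1}=P_k e_{|ID|\le k}\) expanded into words, which have the required length bound. Throughout, the commutativity conditions are only used implicitly, through Theorem~\ref{thm:invalid-length} and Lemma~\ref{lem:local-com}, wherever \(b\)'s must be shuffled to match the summand shapes of \(e_{\mathit{invalid\_IDs}}(M)\).

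I expect the main obstacle to be the legitimacy of invoking Theorem~\ref{thm:invalid-length} in context. Its hypothesis concerns \emph{consecutive} configurations, but here the left factor comes from the tail of \(P_k e_{|ID|\le k}\). Strictly, one must check that \(e_{\mathit{invalid\_IDs}}(M)\) can be placed at the boundary between the \(k\)-th and \((k+1)\)-st configuration, with the surrounding parts absorbed into the two \(e_{ID}^*\) factors. This is where the alignment of \(b\)'s matters. My first attempt is to avoid any reordering, since all expressions are in the interleaved form \(\sigma b\). If the patterns inside \(e_{\mathit{invalid\_IDs}}(M)\) (the \(\#b^2(\Sigma_{ID}b^2)^*\) prefix and the \(b\)-free stretch \((\Sigma_{ID})^*\)) demand a different distribution of \(b\)'s, I will use Lemma~\ref{lem:local-com}. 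Moving a bounded block \(b^{\le m}\) across a finite word is justified there in \(KA+C\), which suffices because every word involved has bounded length.
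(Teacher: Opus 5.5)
Your proof is correct, given Theorem~\ref{thm:invalid-length}, but it is organized differently from the paper's. Both arguments use the same ingredients: that lemma, completeness of $KA$ for commutation-free steps, and absorption of context into the $e_{ID}^*$ factors of $e_{\mathit{invalid\_execution}}(M)$.

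The paper does not induct. It argues once, at the level of regular languages, that a word of $e_{\mathit{init}}e_{ID}^{n}$ satisfies one of two cases:
\begin{itemize}
\item every configuration $j$ has content at most $j$, so the word is short; or
\item for the least $i$ such that configuration $i+1$ has content at least $i+2$, the word lies in $e_{ID}^{i}\,e_{|ID|\le i}\,e_{|ID|\ge i+2}\,e_{ID}^{n-i-1}$.
\end{itemize}
A single appeal to completeness lifts this inclusion to $KA$, and the lemma plus monotonicity finishes.

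You instead build the invariant ``configuration $k$ has content $\le k$'' into the hypothesis $e_{\mathit{init}}e_{ID}^{k}\le e_{\mathit{invalid\_execution}}(M)+P_k\,e_{|ID|\le k}$, and peel off one $e_{ID}$ per step. This needs completeness only for the local split $e_{ID}=e_{|ID|\le k+1}+e_{|ID|\ge k+2}$, plus semiring expansion of the star-free $e_{|ID|\le k}$. The paper's route is shorter. Yours makes the ``first bad index'' reasoning explicit and equational, and it avoids the paper's off-by-one in the index range (its $i=n$ summand contains $e_{ID}^{-1}$).

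Two small remarks.

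First, the obstacle in your last paragraph does not arise. Theorem~\ref{thm:invalid-length} is an inequality between expressions, so left-multiplying by $P_k\le e_{ID}^*$ and absorbing is pure monotonicity, exactly as you already did in your second paragraph. No reshuffling of $b$'s via Corollary~\ref{lem:local-com} is needed at this level.

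Second, your base-case arithmetic is garbled. If $|w|$ counts all letters, then $|e_{\mathit{init}}|=6>3=\frac{0+0+6}{2}$, so the finite sum is empty for $n=0$. The statement is then literally false, because a single configuration has one $\#$, while every word of $e_{\mathit{invalid\_execution}}(M)$ has at least two. The bound is only right if $|w|$ counts non-$b$ letters, or if it is doubled to $n^2+7n+6$. The paper's proof relies on the same implicit convention, so this is a defect of the statement rather than of your argument. You should state the convention explicitly instead of writing $6\le 3\cdot 2$.
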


\begin{proof}
By Theorem~\ref{thm:invalid-length}, any trace containing two consecutive configurations whose lengths differ by at least \(2\) is accepted by \(e_{\mathit{invalid\_execution}}(M)\). In particular,
\[
\kale{KA}{C}{e_{|ID|\le n}\,e_{|ID|\ge n+2}}{e_{\mathit{invalid\_execution}}(M)}.
\]
It therefore remains to consider only those traces in which the size of the configuration increases by at most \(1\) at each step.

Since every configuration has length at least \(3\), accounting for \(\#\), a state symbol \(q\), and the blank symbol \(B\), every such trace of \(n\) steps has total length at most
\[
\sum_{k=0}^{n}(k+3)=\frac{n^2+7n+6}{2}.
\]
Hence, at the level of regular languages, we have
\[
L(e_{\mathit{init}}e_{ID}^{n})
\subseteq
L\Biggl(
\sum_{0\le i\le n}
e_{ID}^i\,e_{|ID|\le i}\,e_{|ID|\ge i+2}\,e_{ID}^{\,n-i-1}
\Biggr)
\;\cup\;
L\Biggl(
\sum_{\substack{|w|\le \frac{n^2+7n+6}{2}\\ w\in L(e_{\mathit{init}}e_{ID}^{n})}} w
\Biggr).
\]

By completeness of \(KA\), it follows that
\[
\kale{KA}{C}{e_{\mathit{init}}e_{ID}^{n}}
{\sum_{0\le i\le n}
e_{ID}^i\,e_{|ID|\le i}\,e_{|ID|\ge i+2}\,e_{ID}^{\,n-i-1}
+
\sum_{\substack{|w|\le \frac{n^2+7n+6}{2}\\ w\in L(e_{\mathit{init}}e_{ID}^{n})}} w }.
\]
Combining this with Theorem~\ref{thm:invalid-length}, we conclude that
\[
KA + C \vDash
e_{\mathit{init}}e_{ID}^{n}
\le
e_{\mathit{invalid\_execution}}(M)
+
\sum_{\substack{|w|\le \frac{n^2+7n+6}{2}\\ w\in L(e_{\mathit{init}}e_{ID}^{n})}} w.
\]
\end{proof}

\begin{theorem}[$c$-loop checking]
Let
\[
M=(Q,\Gamma,\delta,q_0,q_f)
\]
be a Turing machine on a semi-infinite tape, which never overwrites any symbol by the blank symbol and has a capturing state $q_c$. Let $\overline{H}(M)$ be the expression constructed above from $M$. If $M$ is $c$-looping, then
\[
\ka{KA}{C}{\overline{H}(M)}{\Sigma^*}.
\]
\end{theorem}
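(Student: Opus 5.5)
We prove the two inequalities $\overline{H}(M)\le\Sigma^*$ and $\Sigma^*\le\overline{H}(M)$ in $KA+C$. The first holds already in plain $KA$, since every expression over $\Sigma$ is below $\Sigma^*$. For the second, Lemma~\ref{thm:trace} gives $\Sigma^*\le e_{\mathit{illegal}}+e_{\mathit{traces}}$. Because $e_{\mathit{illegal}}$ is a summand of $\overline{H}(M)$, it suffices to show
\[
KA+C\vDash e_{\mathit{traces}}\le e_{\mathit{invalid\_execution}}(M)+\Sigma^*q_c\Sigma^*.
\]

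\textbf{Step 1: bound the star in $e_{\mathit{traces}}$.} Since $M$ is deterministic and $c$-looping, its unique run enters $q_c$ after some finite number $N$ of steps and stays there. We first rewrite $e_{\mathit{traces}}$ in plain $KA$ as $e_{\mathit{init}}\,e_{ID}^*\,e_{\mathit{halt}}$, where $e_{\mathit{halt}}$ is the final configuration with $q_f$; this is a language equality, so completeness of $KA$ applies. We then use the $KA$ identity $x^*=\sum_{i<N}x^i+x^N x^*$, obtained by unfolding $x^*=1+xx^*$ a total of $N$ times. This splits $e_{\mathit{traces}}$ into two kinds of summands:
\begin{itemize}
\item finitely many short summands $e_{\mathit{init}}e_{ID}^{i}e_{\mathit{halt}}$ with $i<N$;
\item one long summand $e_{\mathit{init}}e_{ID}^{N}\,e_{ID}^*e_{\mathit{halt}}$.
\end{itemize}
It is convenient to treat the halting configuration as one more $e_{ID}$ factor, since $e_{\mathit{halt}}\le e_{ID}$ holds in $KA$.

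\textbf{Step 2: the long summand.} Apply Lemma~\ref{thm:case-analysis} with $n=N$. It gives
\[
e_{\mathit{init}}e_{ID}^N\le e_{\mathit{invalid\_execution}}(M)+\sum_{w} w,
\]
where $w$ ranges over a finite set of words. The invalid part, right-multiplied by $e_{ID}^*e_{\mathit{halt}}$, stays below $e_{\mathit{invalid\_execution}}(M)$, because $e_{\mathit{invalid\_execution}}(M)=e_{ID}^*e_{\mathit{invalid\_IDs}}(M)e_{ID}^*$ and $e_{ID}^*e_{ID}^*\le e_{ID}^*$ in $KA$. Each finite word $w$ is the $b$-interleaved encoding of a candidate sequence of $N+1$ configurations, and we sort these words into two cases:
\begin{itemize}
\item If $w$ agrees with the true run, then its last configuration contains $q_c$, so $w\,e_{ID}^*e_{\mathit{halt}}\le\Sigma^*q_c\Sigma^*$.
\item Otherwise $w$ has a first consecutive pair $ID_j,ID_{j+1}$ that is not a legal transition. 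That pair is a concrete word of the form $\#ID_j\#ID_{j+1}b^{k}$, with its $b$'s interleaved. We show that it lies below $e_{\mathit{invalid\_IDs}}(M)$ by exhibiting one explicit unfolding of the stars that matches the offending $\Delta$, $\Delta'$ or short-length pattern. The $b$'s are moved into the required positions by the commutations in $C$, so this is a finite derivation. Surrounding it with $e_{ID}^*$ factors gives $w\,e_{ID}^*e_{\mathit{halt}}\le e_{\mathit{invalid\_execution}}(M)$.
\end{itemize}

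\textbf{Step 3: the short summands.} For $i<N$, the summand $e_{\mathit{init}}e_{ID}^{i}e_{\mathit{halt}}$ has at most $N$ steps, so Lemma~\ref{thm:case-analysis} applies with $n=i+1$, treating $e_{\mathit{halt}}$ as an $e_{ID}$ factor. It leaves finitely many words, each ending in a $q_f$-configuration. The true run never reaches $q_f$, so every such word must contain an illegal consecutive pair, and the same explicit matching as in Step 2 puts it below $e_{\mathit{invalid\_execution}}(M)$.

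The main obstacle is the word-level claim used in Steps 2 and 3: every concrete illegal pair must be provably below $e_{\mathit{invalid\_IDs}}(M)$ in $KA+C$. The difficulty is that the $b$-counting alignment must be realized by actual commutations. I would handle it uniformly. Write the pair in $b$-normal form, with all $b$'s at the end, and show that both the pair and the chosen pattern instance reduce to that same form using Lemma~\ref{lem:local-com}-style commutation of $b$ past every letter. Once both sides have the same normal form, $KA$ completeness on concrete words finishes the argument.
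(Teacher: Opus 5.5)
Your proposal is correct and follows the paper's proof essentially step for step. Like the paper, you reduce via Lemma~\ref{thm:trace} to $e_{\mathit{traces}}\le e_{\mathit{invalid\_execution}}(M)+\Sigma^*q_c\Sigma^*$, unfold $e_{ID}^*$ up to the $c$-looping bound, and discharge both the short summands and the long summand with Lemma~\ref{thm:case-analysis} plus a finite word-by-word check. Your determinism argument for the first illegal pair and your explicit $b$-commutation argument spell out what the paper calls ``finite case analysis''.

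One point in Step~3 needs tightening. If you literally use $e_{\mathit{halt}}\le e_{ID}$ and apply Lemma~\ref{thm:case-analysis} as stated, the leftover words range over all bounded words of $L(e_{\mathit{init}}e_{ID}^{i+1})$. That set includes the genuine run prefix, which is not an invalid execution and, for small $i$, does not contain $q_c$. Instead, rerun that lemma's argument with the last factor kept as $e_{\mathit{halt}}$: the language-level split, $KA$ completeness, and Lemma~\ref{thm:invalid-length}. This goes through verbatim and yields exactly the $q_f$-terminated words your argument needs. The paper's own write-up glosses over this same point.
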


\begin{proof}
It suffices to show
\[
\ka{KA}{C}{\overline{H}(M)}{\Sigma^*}
\qquad\text{and}\qquad
\ka{KA}{C}{\Sigma^*}{\overline{H}(M)}.
\]

The first inequality is immediate: since
\[
\karle{\overline{H}(M)}{\Sigma^*},
\]
we also have
\[
\kale{KA}{C}{\overline{H}(M)}{\Sigma^*}.
\]

For the converse direction, by Theorem~\ref{thm:trace}, we have
\[
\ka{KA}{C}{\Sigma^*}{e_{\mathit{illegal}}+e_{\mathit{traces}}}.
\]
Since
\[
\overline{H}(M)
=
e_{\mathit{illegal}}
+
e_{\mathit{invalid\_execution}}(M)
+
\Sigma^*q_c\Sigma^*,
\]
it remains to show that
\[
\kale{KA}{C}{e_{\mathit{traces}}}{e_{\mathit{invalid\_execution}}(M)+\Sigma^*q_c\Sigma^*}.
\]

Recall that
\[
\begin{aligned}
e_{\mathit{traces}}
={}&\#bq_0bBb\\
&\bigl(\#b(0b+1b)^*(\sum_{q\in Q}qb)(0b+1b)^*Bb\bigr)^*\\
&\#b(0b+1b)^*q_fb(0b+1b)^*Bb.
\end{aligned}
\]
For convenience, write
\[
e_{\mathit{init}}=\#bq_0bBb,
\qquad
e_{ID}=\#b(0b+1b)^*\Bigl(\sum_{q\in Q}qb\Bigr)(0b+1b)^*Bb,
\qquad
e_{\mathit{end}}=\#b(0b+1b)^*q_fb(0b+1b)^*Bb.
\]
Then
\[
e_{\mathit{traces}}=e_{\mathit{init}}\,e_{ID}^*\,e_{\mathit{end}}.
\]

Since \(M\) is \(c\)-looping, there exists \(n\in\mathbb{N}\) such that after \(n\) steps every valid execution has entered the capturing state \(q_c\). Hence, if we inspect the first \(n\) configurations of a trace, then either the trace already contains the capturing state \(q_c\), or an execution error occurs within the first \(n\) steps.

We therefore separate all traces according to whether they finish within \(n+1\) steps or require more than \(n+1\) steps:
\[
\ka{KA}{C}{e_{\mathit{init}}e_{ID}^*e_{\mathit{end}}}
{\sum_{0\le i\le n} e_{\mathit{init}}e_{ID}^{i}e_{\mathit{end}}
+
e_{\mathit{init}}e_{ID}^{n+1}e_{ID}^*e_{\mathit{end}}}.
\]

We first show that the traces finishing within \(n+1\) steps are all accepted by \(e_{\mathit{invalid\_execution}}(M)\). By Theorem~\ref{thm:case-analysis}, for each \(0\le i\le n\),
\[
KA + C \vDash
e_{\mathit{init}}e_{ID}^{i+1}
\le
e_{\mathit{invalid\_execution}}(M)
+
\sum_{\substack{|w|\le \frac{(i+1)^2+7(i+1)+6}{2}\\ w\in L(e_{\mathit{init}}e_{ID}^{i+1})}} w.
\]
Since
\[
KA + C \vDash
e_{\mathit{init}}e_{ID}^{i}e_{\mathit{end}}
\le
e_{\mathit{init}}e_{ID}^{i+1},
\]
it follows that
\[
KA + C \vDash
e_{\mathit{init}}e_{ID}^{i}e_{\mathit{end}}
\le
e_{\mathit{invalid\_execution}}(M)
+
\sum_{\substack{|w|\le \frac{(i+1)^2+7(i+1)+6}{2}\\ w\in L(e_{\mathit{init}}e_{ID}^{i+1})}} w.
\]
Now, since \(M\) is \(c\)-looping, every finite execution finishing within at most \(n\) steps must always be invalid. Hence, by finite case analysis,
\[
KA + C \vDash
\sum_{\substack{|w|\le \frac{(i+1)^2+7(i+1)+6}{2}\\ w\in L(e_{\mathit{init}}e_{ID}^{i+1})}} w
\le
e_{\mathit{invalid\_execution}}(M).
\]
Therefore,
\[
KA + C \vDash
e_{\mathit{init}}e_{ID}^{i}e_{\mathit{end}}
\le
e_{\mathit{invalid\_execution}}(M)
\qquad
\text{for every }0\le i\le n.
\]

It remains to consider the execution traces of more than \(n+1\) steps:
\[
e_{\mathit{init}}e_{ID}^{n+1}e_{ID}^*e_{\mathit{end}}.
\]
Again by Theorem~\ref{thm:case-analysis}, we have
\[
KA + C \vDash
e_{\mathit{init}}e_{ID}^{n+1}
\le
e_{\mathit{invalid\_execution}}(M)
+
\sum_{\substack{|w|\le \frac{(n+1)^2+7(n+1)+6}{2}\\ w\in L(e_{\mathit{init}}e_{ID}^{n+1})}} w.
\]
Since \(M\) is \(c\)-looping after \(n\) steps, each such finite word
\[
w\in L(e_{\mathit{init}}e_{ID}^{n+1})
\]
either contains the state symbol \(q_c\), or else contains two consecutive configurations \(ID_i,ID_{i+1}\) that do not form a valid transition. Hence, for every such \(w\), finite case analysis gives
\[
\kale{KA}{C}{w}{e_{\mathit{invalid\_execution}}(M)+\Sigma^*q_c\Sigma^*}.
\]
Therefore,
\[
\kale{KA}{C}{w\,e_{ID}^*e_{\mathit{end}}}{e_{\mathit{invalid\_execution}}(M)+\Sigma^*q_c\Sigma^*}.
\]
Combining these two facts, we obtain
\[
\kale{KA}{C}{e_{\mathit{init}}e_{ID}^{n+1}e_{ID}^*e_{\mathit{end}}}
{e_{\mathit{invalid\_execution}}(M)+\Sigma^*q_c\Sigma^*}.
\]

Putting everything together, we conclude that
\[
\kale{KA}{C}{e_{\mathit{traces}}}{e_{\mathit{invalid\_execution}}(M)+\Sigma^*q_c\Sigma^*},
\]
and hence
\[
\ka{KA}{C}{\Sigma^*}{\overline{H}(M)}.
\]
This completes the proof.
\end{proof}

\subsubsection{Proof for Section \ref{subsec:undec}}

\begin{lemma}[Recursive inseparability of returning $0$ and returning $1$]\label{thm:inseparable-01}
Let $\Phi$ be a computable one-to-one encoding function from Turing
machines to natural numbers. Define
\[
A_0=\{\,M \mid M \text{ halts on input } \Phi(M) \text{ and returns }0\,\}
\]
and
\[
A_1=\{\,M \mid M \text{ halts on input } \Phi(M) \text{ and returns }1\,\}.
\]
Then $A_0$ and $A_1$ are recursively inseparable: there is no decidable
set $C$ of Turing machines such that
\[
A_0\subseteq C
\qquad\text{and}\qquad
A_1\cap C=\emptyset .
\]
\end{lemma}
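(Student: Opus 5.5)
The plan is the classical diagonalization argument. Suppose, for contradiction, that $C$ is a decidable set of Turing machines with $A_0\subseteq C$ and $A_1\cap C=\emptyset$. Since membership in $C$ is decidable, we construct a Turing machine $D$ that on input $x\in\mathbb{N}$ proceeds as follows. It first checks whether $x$ lies in the range of $\Phi$. This is decidable because $\Phi$ is a computable injection on an effectively enumerable set of machine descriptions, so we can enumerate machines $M'$ and compare $\Phi(M')$ with $x$; to make this terminate we assume, as is standard, that $\Phi$ is monotone in the description length or that its range is decidable. Alternatively, we restrict attention to inputs of the form $\Phi(M')$, since only those matter. Having recovered $M'$ with $\Phi(M')=x$, the machine $D$ decides whether $M'\in C$. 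It then halts and returns $1$ if $M'\in C$, and returns $0$ otherwise. In particular, $D$ halts on every input of the form $\Phi(M')$.

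The key step is to run $D$ on its own code $\Phi(D)$. We split into two cases.
\begin{itemize}
\item If $D\in C$, then $D$ on input $\Phi(D)$ halts and returns $1$, so $D\in A_1$. But $A_1\cap C=\emptyset$, a contradiction.
\item If $D\notin C$, then $D$ returns $0$ on $\Phi(D)$, so $D\in A_0\subseteq C$, again a contradiction.
\end{itemize}
Hence no such $C$ exists.

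The main technical point is the self-reference: $D$ must be a genuine Turing machine whose description is available, and it must decode $\Phi(D)$ correctly. The construction of $D$ does not refer to its own code; it only uses the decider for $C$ and the inverse of $\Phi$. Consequently $D$ is a well-defined machine, and feeding it $\Phi(D)$ is legitimate without invoking the recursion theorem. The only delicate step is ensuring that $D$ halts on inputs $\Phi(M')$. This requires computing $\Phi^{-1}$ on the range of $\Phi$, which is possible by enumerating all machine descriptions $M'$ and testing $\Phi(M')=x$; the search terminates whenever $x=\Phi(M')$, and this is the only case used in the argument. I expect this decoding detail to be the step needing the most care.
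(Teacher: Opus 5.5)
Your proposal is correct and is essentially the paper's own diagonal argument: the paper builds a machine $N$ that returns $1$ if the machine encoded by its input lies in $C$ and $0$ otherwise, and derives the same two-case contradiction by running $N$ on $\Phi(N)$. The paper does not address decoding at all, and your search-based partial inverse of $\Phi$ is the right way to fill that in, since it only needs to terminate on inputs in the range of $\Phi$; you should drop the tentative claim that the range of $\Phi$ is decidable, because it is unnecessary and need not hold for an arbitrary computable injection.
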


\begin{proof}
First observe that $A_0$ and $A_1$ are disjoint, since a deterministic
Turing machine cannot return both $0$ and $1$ on the same input.

Suppose, for contradiction, that there exists a decidable set $C$ of
Turing machines such that
\[
A_0\subseteq C
\qquad\text{and}\qquad
A_1\cap C=\emptyset .
\]
Since $C$ is decidable, there is a Turing machine $D$ which, on input
$n$, decides whether the machine encoded by $n$ belongs to $C$. We now
construct a Turing machine $N$ as follows. On input $n$, the machine
$N$ first uses $D$ to decide whether the machine encoded by $n$ is in
$C$. If it is in $C$, then $N$ halts and returns $1$; otherwise, $N$
halts and returns $0$.

Equivalently,
\[
N(n)=
\begin{cases}
1, & \text{if the machine encoded by } n \text{ is in } C,\\
0, & \text{otherwise.}
\end{cases}
\]
In particular, consider the behavior of $N$ on its own code
$\Phi(N)$. There are two cases.

If $N\in C$, then by the definition of $N$ we have
\[
N(\Phi(N))=1.
\]
Thus $N\in A_1$. But this contradicts the assumption that
$A_1\cap C=\emptyset$, since we are in the case $N\in C$.

On the other hand, if $N\notin C$, then by the definition of $N$ we have
\[
N(\Phi(N))=0.
\]
Thus $N\in A_0$. Since $A_0\subseteq C$, it follows that $N\in C$,
again a contradiction.

Both cases lead to contradictions. Therefore no such decidable separator
$C$ exists, and hence $A_0$ and $A_1$ are recursively inseparable.
\end{proof}

\begin{lemma}[Recursive inseparability of $c$-looping and halting]
For the class of Turing machines \(M\) that operate on a semi-infinite tape, have a capturing state \(q_c\), and never overwrite a tape symbol with a blank symbol, the sets
\[
\{\,M \mid M \text{ is \(c\)-looping}\,\}
\qquad\text{and}\qquad
\{\,M \mid M \text{ halts}\,\}
\]
are recursively inseparable.
\end{lemma}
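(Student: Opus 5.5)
We reduce from Lemma~\ref{thm:inseparable-01}. Given an arbitrary Turing machine $N$, we effectively build a restricted machine $M_N$: semi-infinite tape, capturing state $q_c$, never writing $B$. The construction will guarantee
\[
N\in A_0 \;\Rightarrow\; M_N \text{ is $c$-looping},
\qquad
N\in A_1 \;\Rightarrow\; M_N \text{ halts}.
\]
Suppose some decidable $S$ contained all $c$-looping restricted machines and no halting ones. Then $\{N \mid M_N\in S\}$ would be decidable, would contain $A_0$, and would be disjoint from $A_1$. This contradicts Lemma~\ref{thm:inseparable-01}.

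The construction of $M_N$ proceeds in stages.
\begin{enumerate}
\item $M_N$ starts on the empty tape and writes the code $\Phi(N)$ in binary. This uses finitely many states hard-wired for the fixed $N$.
\item It simulates the universal computation of $N$ on input $\Phi(N)$. We use the standard simulations of a bi-infinite tape by a semi-infinite one (folding into two tracks) and of an arbitrary tape alphabet by binary blocks over $\{0,1\}$.
\item To avoid ever writing $B$, blank cells of the simulated tape are represented by a reserved binary code, for example the pair $00$, with real symbols encoded as $01$ and $1\cdot$. When the simulation needs a fresh cell, it moves onto a genuine $B$ and overwrites it with the code for a simulated blank. The written portion therefore only grows.
\item If the simulation halts with output $1$, $M_N$ moves to $q_f$. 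If it halts with output $0$, $M_N$ moves to $q_c$, whose transitions are fixed as $\delta(q_c,\sigma)=(q_c,0,\mathit{Stay})$.
\item If $N$ does not halt on $\Phi(N)$, then $M_N$ runs forever. It is neither halting nor $c$-looping, which is acceptable because inseparability only concerns the two extreme sets.
\end{enumerate}

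We must also make sure $q_c$ and $q_f$ are reachable only through the output-check step, so no other path reaches them. Determinism and the state convention $q_0\neq q_f$ are immediate.

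We take ``$c$-looping'' to mean that the run eventually enters $q_c$; since $q_c$ is absorbing, the machine then stays there forever. Under this reading the two required implications hold directly. The map $N\mapsto M_N$ is computable because each component is a fixed, uniform construction parameterised only by $\Phi(N)$.

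The main obstacle is the no-blank-writing and semi-infinite-tape discipline. The simulation must never need to erase a cell and must stay within the left endpoint. We plan to handle this with the two-track folding plus a left-end marker written at step~1 (say a code $11$ in the first block). This marker lets the head detect the boundary without ever moving off the tape, and the reserved blank code makes erasure unnecessary.
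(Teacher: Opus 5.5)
Your proposal is correct and follows the paper's proof: both reduce from Lemma~\ref{thm:inseparable-01} via a computable map sending a machine that returns $0$ on its own code to a $c$-looping restricted machine and one that returns $1$ to a halting one, then pull back a hypothetical decidable separator to get a separator of $A_0$ and $A_1$. Your description of the simulation (binary block codes, two-track folding, a reserved simulated-blank code so that $B$ is never written) fills in details the paper leaves as ``we can effectively construct''; the only slip is cosmetic, since your example marker $11$ collides with the real-symbol codes $1\cdot$, which longer blocks fix.
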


\begin{proof}
For every standard Turing machine \(M\) and input \(w\), we can effectively construct a Turing machine \(M'\) in the restricted class above such that if \(M\) halts on input \(w\) and returns \(0\), then \(M'\) is \(c\)-looping, while if \(M\) halts on input \(w\) and returns \(1\), then \(M'\) halts.

Therefore, if there were a decidable set separating the \(c\)-looping machines from the halting machines in this restricted class, then, by applying the effective transformation \((M,w)\mapsto M'\), we would obtain a decidable set separating the machines that halt and return \(0\) from those that halt and return \(1\). This contradicts the recursive inseparability result from Lemma~\ref{thm:inseparable-01}.
\end{proof}

\begin{lemma}[Minimal Encoding]
Let
\[
M=(Q,\Gamma,\delta,q_0,q_f)
\]
be a Turing machine on a semi-infinite tape with a capturing state \(q_c\), and suppose that \(M\) never overwrites any symbol with a blank symbol. Then there exists an effective translation \(T\) such that \(T(M)\) is a regular expression over an alphabet \(\Sigma\) containing at least the three letters \(\{a,b,c\}\), equipped with a commutativity condition \(C\) such that
\[
(a,b)\in C,\qquad (b,c)\in C,\qquad (a,c)\notin C.
\]
The following hold:
\begin{itemize}
    \item if \(M\) is \(c\)-looping, then
    \[
    \ka{KA}{C}{T(M)}{\Sigma^*};
    \]
    \item if
    \[
    \ka{KA^*}{C}{T(M)}{\Sigma^*},
    \]
    then \(M\) does not halt.
\end{itemize}
\end{lemma}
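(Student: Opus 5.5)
We use the translation from the main text, $T(M)=f(\overline{H}(M))+e_{\mathit{complement}}$, where $f$ is the letter-to-word homomorphism sending each $\sigma\in\{0,1,B,\#\}\cup Q$ to a distinct word $f(\sigma)\in\{a,c\}^{k}$ of fixed length $k$ and sending $b\mapsto b$. We also need one normalization. If $\Sigma$ has letters beyond $\{a,b,c\}$, they are lumped into $e_{\mathit{complement}}$ via $\Sigma^*\setminus(\sum_{\sigma\in\Sigma'}f(\sigma))^*$. This is computed as an ordinary regular-language difference over the full alphabet $\Sigma$, so any word using an extra letter falls in the complement. Commutations involving extra letters are then irrelevant: $L_C$-classes of words containing such a letter still contain that letter. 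The proof then has two halves, matching the two bullets.

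\emph{Soundness in $KA+C$.} Let $h:T_{\Sigma'}\to T_\Sigma$ be the substitution induced by $f$. A substitution instance of a $KA+C'$ derivation is a $KA+C$ derivation, provided each hypothesis $h(\sigma)h(b)=h(b)h(\sigma)$ is derivable in $KA+C$. It is, because $f(\sigma)\in\{a,c\}^*$, and $b$ commutes with $a$ and with $c$, so by associativity $b$ commutes with any word over $\{a,c\}$ (the word-level step of Corollary~\ref{lem:local-com}). Assume $M$ is $c$-looping. Theorem~\ref{thm:c-loop-raw} gives $KA+C'\vDash \overline{H}(M)=\Sigma'^*$. Applying $h$ yields $KA+C\vDash f(\overline{H}(M))=(\sum_{\sigma\in\Sigma'}f(\sigma))^*$, using $h(\Sigma'^*)=(\sum_\sigma f(\sigma))^*$. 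Adding $e_{\mathit{complement}}$ to both sides and using the plain $KA$ identity $e_{\mathit{complement}}+(\sum_\sigma f(\sigma))^*=\Sigma^*$ (completeness of $KA$) gives $KA+C\vDash T(M)=\Sigma^*$.

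\emph{Faithfulness in $KA^*+C$.} By Theorem~\ref{thm:equiv}, it suffices to show that if $M$ halts then $L_C(T(M))\neq L_C(\Sigma^*)$. Take the halting witness $w\in\Sigma'^*$ from the proof of Theorem~\ref{thm:halting-raw}, which satisfies $[w]_{C'}\cap L(\overline{H}(M))=\emptyset$, and consider $f(w)$.

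First, no word $C$-equivalent to $f(w)$ lies in $L(e_{\mathit{complement}})$. Commutations under $C$ only move $b$'s past $a$'s and $c$'s; $a$ and $c$ never swap. So every $v\equiv_C f(w)$ has $\{a,c\}$-projection equal to $f(\pi(w))$, where $\pi(w)$ is the non-$b$ projection of $w$. This projection is a concatenation of blocks $f(\sigma)$. Since all $b$'s in $f(w)$ sit at the end, $v$ is obtained by inserting $b$'s anywhere. If the insertions avoid the interiors of blocks, then $v=f(w')$ for some $w'\equiv_{C'}w$. If some $b$ falls strictly inside a block $f(\sigma)$, then $v\notin(\sum f(\sigma))^*$, and $v$ does lie in the complement.

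That last case is the main obstacle. The fix I would try first is to make $e_{\mathit{complement}}$ only the complement \emph{up to $b$-placement}: define it as the set of words whose $\{a,c\}$-projection is not in $(\sum_\sigma f(\sigma))^*$, i.e.\ $\Sigma^*\setminus\bigl(b^*(\sum_\sigma f(\sigma)\,b^*)^*\bigr)$ for the three-letter case (extra letters handled as above). With this definition, no word $C$-equivalent to $f(w)$ is in the complement. The soundness identity becomes $e_{\mathit{complement}}+b^*(\sum_\sigma f(\sigma)b^*)^*=\Sigma^*$, which is a plain $KA$ identity. To close the soundness half, I must show in $KA+C$ that $b^*(\sum_\sigma f(\sigma)b^*)^*\le(\sum_\sigma f(\sigma))^*b^*$. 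This follows from Lemma~\ref{lem:comm} and Corollary~\ref{lem:local-com}, which let $b^*$ commute with $f(\sigma)$. The right-hand side $(\sum_\sigma f(\sigma))^*b^*$ is contained in $h(\Sigma'^*)$, since $\Sigma'^*\ge\Sigma'^*b^*$ in $KA$.

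Second, no word $C$-equivalent to $f(w)$ lies in $L(f(\overline{H}(M)))$. Since $f$ is injective on letters with fixed block length, it is a code. Hence any $v\in L(f(\overline{H}(M)))$ with $v\equiv_C f(w)$ decodes uniquely to some $w'\in L(\overline{H}(M))$ with $w'\equiv_{C'}w$, contradicting the choice of $w$.

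So $[f(w)]_C$ avoids $L(T(M))$, and $L_C(T(M))\neq L_C(\Sigma^*)$.
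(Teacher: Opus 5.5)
Your diagnosis is correct, and the paper's own proof has exactly this hole. The paper concludes $f(w)\notin L_C(T(M))$ after checking only that $f(w)\in(\sum_\sigma f(\sigma))^*$, i.e.\ $f(w)\notin L(e_{\mathit{complement}})$. What is actually needed is $[f(w)]_C\cap L(e_{\mathit{complement}})=\emptyset$. All non-$b$ codewords have length $k\ge 2$ and $f(w)$ contains $b$'s, so moving one $b$ strictly inside a codeword gives a $C$-equivalent word outside $(\sum_\sigma f(\sigma))^*$. That word lies in $L(e_{\mathit{complement}})$. In fact, for the paper's $T$ every $C$-class meets $L(T(M))$: classes with no $b$, or with empty $\{a,c\}$-projection, are caught by $e_{\mathit{complement}}$ or $f(e_{\mathit{illegal}})$. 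So the second bullet fails for that $T$.

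Apart from this, your route is the paper's route, and your verbal repair is the right one. Take the complement of the $\equiv_C$-closed set $P$ of words over $\{a,b,c\}$ whose $\{a,c\}$-projection lies in $(\sum_{\sigma\neq b}f(\sigma))^*$. With that choice both parts of your faithfulness argument go through.

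The gap is that the expression you wrote does not denote $P$. Because $b=f(b)$ is itself a codeword, $b^*(\sum_\sigma f(\sigma)\,b^*)^*$ only inserts $b$'s at codeword boundaries. By denesting, $(b+Y)^*=b^*(Yb^*)^*$, it equals $(\sum_{\sigma\in\Sigma'}f(\sigma))^*$. So your $e_{\mathit{complement}}$, as written, is the paper's, and faithfulness fails for the very reason you identified. The soundness inequality you set out to prove also concerns the wrong expression. A correct expression for $P$ is
\[
e_P \;=\; b^*\Bigl(\sum_{\sigma\neq b} g(f(\sigma))\Bigr)^*,\qquad\text{where } g(x_1\cdots x_k)=x_1b^*x_2b^*\cdots x_kb^*,
\]
with $e_{\mathit{complement}}$ a regular expression for $\Sigma^*\setminus L(e_P)$. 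Then $e_{\mathit{complement}}+e_P=\Sigma^*$ holds in plain $KA$.

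Soundness now needs $KA+C\vDash e_P\le(\sum_{\sigma\in\Sigma'}f(\sigma))^*$. Here commutation is used letter by letter inside codewords, not codeword by codeword:
\begin{itemize}
\item Lemma~\ref{lem:comm} applied to $ba=ab$ and $bc=cb$ gives $b^*a=ab^*$ and $b^*c=cb^*$.
\item Using $b^*b^*=b^*$, this yields $g(f(\sigma))=f(\sigma)b^*$.
\item Then $e_P=b^*(\sum_{\sigma\neq b}f(\sigma)b^*)^*=(\sum_{\sigma\in\Sigma'}f(\sigma))^*$ by denesting.
\end{itemize}
With this correction, the rest of your argument is sound. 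That includes the substitution instance of Theorem~\ref{thm:c-loop-raw}, and the decoding step giving $w'\equiv_{C'}w$ from equal $\{a,c\}$-projections and equal $b$-counts.
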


\begin{proof}
In the previous construction, we already defined a translation from $M$ to a regular expression over the alphabet
\[
\Sigma'=\{0,1,B,\#\}\cup Q\cup\{b\},
\]
with commutativity condition
\[
C'=\{(\sigma,b)\mid \sigma\in \{0,1,B,\#\}\cup Q\}.
\]
We now reduce this alphabet to $\{a,b,c\}$.

Keep the symbol $b$ unchanged, and encode each symbol in
\[
\{0,1,B,\#\}\cup Q
\]
by a distinct binary word over $\{a,c\}$. Since the set
\[
\{0,1,B,\#\}\cup Q
\]
is finite, such an injective encoding exists. Let
\[
f:\Sigma'\to \{a,b,c\}^*
\]
denote this encoding, where $f(b)=b$ and $f(\sigma)\in\{a,c\}^*$ for every $\sigma\neq b$. We choose $f$ so that its image is unambiguous, that is, every encoded word admits a unique decomposition into codewords.

Now define
\[
e_{\mathit{complement}}
=
\Sigma^*\setminus \Bigl(\sum_{\sigma\in\Sigma'} f(\sigma)\Bigr)^*.
\]
Then, by construction,
\[
\kar{e_{\mathit{complement}}+\Bigl(\sum_{\sigma\in\Sigma'} f(\sigma)\Bigr)^*}{\Sigma^*}.
\]

We define
\[
T(M)=e_{\mathit{complement}}+f(\overline{H}(M)),
\]
where $f(\overline{H}(M))$ is obtained from $\overline{H}(M)$ by replacing each symbol $\sigma\in\Sigma'$ by its codeword $f(\sigma)$.

Suppose first that $M$ is $c$-looping. By Theorem~\ref{thm:c-loop-raw}, we have
\[
\ka{KA}{C'}{\overline{H}(M)}{\Bigl(\sum_{\sigma\in\Sigma'} \sigma\Bigr)^*}.
\]
Applying the encoding $f$, we obtain
\[
\ka{KA}{\{(f(\sigma),b)\mid \sigma\in \{0,1,B,\#\}\cup Q\}}
{f(\overline{H}(M))}
{\Bigl(\sum_{\sigma\in\Sigma'} f(\sigma)\Bigr)^*}.
\]
Since every codeword $f(\sigma)$ lies in $\{a,c\}^*$, the commutativity relations above are derivable from
\[
\{(a,b),(b,c)\}\subseteq C.
\]
Hence
\[
\ka{KA}{C}{f(\overline{H}(M))}{\Bigl(\sum_{\sigma\in\Sigma'} f(\sigma)\Bigr)^*}.
\]
Combining this with
\[
\kar{e_{\mathit{complement}}+\Bigl(\sum_{\sigma\in\Sigma'} f(\sigma)\Bigr)^*}{\Sigma^*},
\]
we conclude that
\[
\ka{KA}{C}{T(M)}{\Sigma^*}.
\]

For the converse direction, suppose that
\[
\ka{KA^*}{C}{T(M)}{\Sigma^*}.
\]
We show that $M$ does not halt. Assume, for contradiction, that $M$ halts. Let
\[
w=\#ID_1\#\cdots\#ID_n\, b^{|\#ID_1\#\cdots\#ID_n|}
\]
be the encoded halting trace used in Theorem~\ref{thm:halting-raw}. Since the encoding $f$ is unambiguous, the word $f(w)$ belongs to $\Bigl(\sum_{\sigma\in\Sigma'} f(\sigma)\Bigr)^*$ and cannot belong to $e_{\mathit{complement}}$. Moreover, exactly the same argument as in Theorem~\ref{thm:halting-raw} shows that
\[
f(w)\notin L_C\bigl(f(\overline{H}(M))\bigr).
\]
Therefore,
\[
f(w)\notin L_C(T(M)).
\]
On the other hand, clearly
\[
f(w)\in L_C(\Sigma^*).
\]
This contradicts the assumption
\[
\ka{KA^*}{C}{T(M)}{\Sigma^*}.
\]
Hence $M$ does not halt.
\end{proof}

\end{document}